\newtheorem{theorem}{Theorem}
\newtheorem{remark}{Remark}
\newtheorem*{remark*}{Remark}
\newtheorem{lemma}{Lemma}
\newcommand{\mathbbm}[1]{\text{\usefont{U}{bbm}{m}{n}#1}} 
\newcommand{\Esp}[1]{{\mathbb E}\left[ #1 \right]}
\newcommand{\Var}[1]{{\rm Var}\left[ #1 \right]}
\newcommand{\ve}[1]{\boldsymbol{#1}}
\newcommand{\acc}[1]{\left\{#1\right\}}
\newcommand{\eqdef}{\stackrel{\text{def}}{=}}
\DeclarePairedDelimiter\abs{\lvert}{\rvert}
\DeclarePairedDelimiter\norm{\lVert}{\rVert}
\let\oldabs\abs
\def\abs{\@ifstar{\oldabs}{\oldabs*}}
\let\oldnorm\norm
\def\norm{\@ifstar{\oldnorm}{\oldnorm*}}
\newcommand{\ca}{{\mathcal A}}
\newcommand{\cc}{{\mathcal C}}
\newcommand{\cd}{{\mathcal D}}
\newcommand{\cg}{{\mathcal G}}
\newcommand{\ch}{{\mathcal H}}
\newcommand{\cl}{{\mathcal L}}
\newcommand{\cm}{{\mathcal M}}
\newcommand{\cn}{{\mathcal N}}
\newcommand{\cu}{{\mathcal U}}
\newcommand{\cx}{{\mathcal X}}
\newcommand{\cy}{{\mathcal Y}}
\newcommand{\Rr}{{\mathbb R}}
\newcommand{\Nn}{{\mathbb N}}
\newcommand{\GLD}{{\rm GLD}}
\DeclareMathOperator*{\PC}{PC}
\DeclareMathOperator{\suppt}{supp}
\DeclareMathOperator{\Betafun}{B}
\DeclareMathOperator{\AOLS}{AOLS}
\DeclareMathOperator{\OLS}{OLS}
\DeclareMathOperator{\WLS}{WLS}
\newcommand{\support}[1]{\suppt\left(#1\right)}
\newcommand{\D}{\mathrm{d}}
\providecommand{\ie}{i.e.,\;}
\providecommand{\eg}{e.g.,\;}
\newlength{\HYDROsubWidth}	\setlength{\HYDROsubWidth}{0.49\textwidth}			
\newlength{\HYDROfigHeight}	\setlength{\HYDROfigHeight}{5.1cm}
\newlength{\HYDROmapHeight}	\setlength{\HYDROmapHeight}{10.2cm}
\newlength{\HYDROfigHeightNew}
\title{Emulation of stochastic simulators using generalized lambda models}
\author[1]{Xujia Zhu \thanks{zhu@ibk.baug.ethz.ch}}
\author[1]{Bruno Sudret\thanks{sudret@ethz.ch}}
\affil[1]{Chair of Risk, Safety and Uncertainty Quantification, ETH Z\"{u}rich, Stefano-Franscini-Platz 5, 8093 Z\"{u}rich, Switzerland}
\date{\today}
\begin{document}

\maketitle

\begin{abstract}
Stochastic simulators are ubiquitous in many fields of applied sciences and engineering. In the context of uncertainty quantification and optimization, a large number of simulations is usually necessary, which becomes intractable for high-fidelity models. Thus surrogate models of stochastic simulators have been intensively investigated in the last decade. In this paper, we present a novel approach to surrogating the response distribution of a stochastic simulator which uses generalized lambda distributions, whose parameters are represented by polynomial chaos expansions of the model inputs. As opposed to most existing approaches, this new method does not require replicated runs of the simulator at each point of the experimental design. We propose a new fitting procedure which combines maximum conditional likelihood estimation with (modified) feasible 
generalized least-squares. We compare our method with  state-of-the-art nonparametric kernel estimation on four different applications stemming from mathematical finance and epidemiology. Its performance is illustrated in terms of the accuracy of both the mean/variance of the stochastic simulator and the response distribution. As the proposed approach can also be used with experimental designs containing replications, we carry out a comparison on two of the examples, showing that replications do not necessarily help to get a better overall accuracy and may even worsen the results (at a fixed total number of runs of the simulator).
\end{abstract}

\section{Introduction}
\label{sec:intro}
With increasing demands on the functionality and performance of modern 
engineering systems, design and maintenance of complex products and 
structures require advanced computational models, a.k.a. simulators. They help 
assess the reliability and optimize the behavior of the 
system already at the design phase. Classical simulators are usually 
deterministic because they implement solvers for the governing equation of the 
system. Thus, repeated model evaluations with the same input parameters 
consistently result in the same value of the output quantities of interest 
(QoIs). In contrast, \emph{stochastic simulators} contain intrinsic randomness, 
which leads to the QoI being a random variable conditioned on the given set of 
input parameters. In other words, each model evaluation with the same input 
values generates a realization of the response random variable that follows an 
unknown distribution. Formally, a stochastic simulator $\cm_s$ can be expressed 
as
\begin{equation}\label{eq:defsto}
	\begin{split}
		\cm_s: \cd_{\ve{X}} \times \Omega &\rightarrow \Rr \\
		(\ve{x},\omega) &\mapsto \cm_s(\ve{x},\omega),
	\end{split}
\end{equation}
where $\ve{x}$ is the input vector that belongs to the input space 
$\cd_{\ve{X}}$, and $\Omega$ denotes the sample space of the probability space  
$\acc{\Omega,\mathcal{F},\mathbb{P}}$ that represents the internal source of 
randomness. 
\par
Stochastic simulators are widely used in modern engineering, finance, and 
medical sciences. Typical examples include evaluating the performance of a wind 
turbine under stochastic loads \cite{AbdallahPEM2019}, predicting the price 
of an option in financial markets \cite{Shreve2004}, and the spread of a 
disease in epidemiology \cite{Britton2010}.
\par
Due to the random nature of stochastic simulators, repeated model evaluations 
with the same input parameters, called hereinafter \emph{replications}, are 
necessary to 
fully characterize the probability distribution of the corresponding QoI. In 
addition, uncertainty quantification and optimization problems 
typically require model evaluations for various sets of input parameters. 
Altogether, it is necessary to have a large number of model runs, which becomes 
intractable for costly models. To alleviate the computational burden, surrogate 
models, a.k.a. emulators, can be used to replace the original model. Such a  
model emulates the input-output relation of the simulator and is 
easy and cheap to evaluate.
\par
Among several options for constructing surrogate models, this paper focuses on 
the so-called \emph{nonintrusive} approaches. More precisely, the 
computational model is considered as a ``black box'' and is only required to be 
evaluated on a limited number of input values, called the \emph{experimental 
	design} (ED). 
\par
Three classes of methods can be found in the literature for emulating 
the entire response distribution of a stochastic code in a nonintrusive 
manner. The first one is the 
\emph{random field approach}, which approximates the stochastic simulator by a 
random field. The definition in \cref{eq:defsto} implies that a stochastic 
simulator can be regarded as a random field indexed by its input variables. 
Controlling the intrinsic randomness allows one to get access to different 
trajectories of the simulator, which are deterministic functions of the 
input variables. In practice, this is achieved by fixing the \emph{random 
	seed} inside the simulator. Evaluations of the trajectories over the 
experimental design can then be extended to continuous 
trajectories, either by classical surrogate methods \cite{Jimenez2017} or 
through Karhunen--Lo\`eve expansions \cite{Azzi2019}. Since this approach 
requires the effective access to the random seed, it is only applicable to data 
generated in a specific way.
\par
Another class of methods is the \emph{replication-based approach}, which relies 
on using replications at all points of the experimental design to represent the response distribution through a suitable parametrization. The estimated distribution parameters are then treated as (noisy) outputs of 
a deterministic simulator. Then, conventional surrogate 
modeling methods, such as Gaussian processes \cite{Rasmussen2006} and 
polynomial chaos expansions (PCEs) \cite{SudretJCP2011}, can emulate these 
parameters as a function of the model input \cite{Moutoussamy2015,Browne2016}. 
Because this approach employs two separate steps, the surrogate quality depends 
on the accuracy of the distribution estimation from replicates in the 
first step \cite{zhuIJUQ2020}. Therefore, many replications are necessary, 
especially when nonparametric estimators are used for the local inference
\cite{Moutoussamy2015,Browne2016}.
\par
A third class of methods, known as the \emph{statistical approach}, 
does not require replications or controlling the random seed. If the response 
distribution belongs to the exponential family, generalized 
linear models \cite{McCullagh1989} and generalized additive models 
\cite{Hastie1990} can be efficiently applied. When the QoI for a given set of 
input parameters follows an arbitrary distribution, nonparametric estimators 
can be considered, notably kernel density estimators \cite{Fan1996,Hall2004} 
and projection estimators \cite{Efromovich2010}. However, it is well known that 
nonparametric estimators suffer from the \emph{curse of dimensionality} 
\cite{Tsybakov2009}, meaning that the necessary amount of data increases 
drastically with increasing input dimensionality.
\par
In a recent paper \cite{zhuIJUQ2020}, we proposed a novel stochastic 
emulator called the \emph{generalized lambda model} (GLaM). Such a surrogate 
model uses generalized lambda distributions (GLDs) to represent the response 
probability density function (PDF). The dependence of the distribution 
parameters on the input is modeled by PCEs. However, the 
methods developed in \cite{zhuIJUQ2020} rely on replications. 
In the present contribution, we propose a new statistical approach combining 
feasible 
generalized least-squares with maximum conditional likelihood estimations to 
get rid of the need for replications. Therefore, the proposed method is much 
more versatile in the sense that replications and seed controls are no longer necessary.
\par
The paper is organized as follows. In \Cref{sec:GLD,sec:PCE}, we briefly review 
GLDs and PCEs, which are the 
two main elements constituting the GLaM. In 
\Cref{sec:SGLaM}, we recap the GLaM framework and introduce the maximum 
conditional likelihood estimator. Then, we present the 
algorithm developed to find an appropriate starting point to optimize the 
likelihood, and to design ad hoc truncation schemes for the PCEs of distribution parameters. In \Cref{sec:examples}, we 
validate the proposed method on two analytical examples and two case studies in 
mathematical finance and epidemiology, respectively, to 
showcase its capability to tackle real problems. Finally, we summarize the main 
findings of the paper and provide an outlook for future research in 
\Cref{sec:conclusions}.

\section{Generalized lambda distributions}
\label{sec:GLD}
\subsection{Formulation}
The generalized lambda distribution (GLD) is a flexible probability distribution 
family. It is able to approximate most of the well-known parametric 
distributions \cite{Freimer1988,Karian2000}, \eg uniform, normal, Weibull, and 
Student's t distributions. The definition of a GLD relies on a parametrization of 
the \emph{quantile function} $Q(u)$, which is a nondecreasing function defined 
on $[0,1]$. In this paper, we consider the GLD of the 
Freimer--Kollia--Mudholkar--Lin family \cite{Freimer1988}, which is defined by
\begin{equation}\label{eq:FKML}
	Q(u;\ve{\lambda}) = \lambda_1 + \frac{1}{\lambda_2}\left( 
	\frac{u^{\lambda_3}-1}{\lambda_3} - 
	\frac{(1-u)^{\lambda_4}-1}{\lambda_4}\right),
\end{equation}
where $\ve{\lambda}=\acc{\lambda_l: l=1,\ldots,4}$ are the four distribution 
parameters. More precisely, $\lambda_1$ is the location parameter, $\lambda_2$ 
is the scaling parameter, and $\lambda_3$ and $\lambda_4$ are the shape 
parameters. To ensure valid quantile functions (\ie $Q$ being nondecreasing on 
$u \in [0,1]$), it is required that $\lambda_2$ be positive. Based on 
the 
quantile function, the PDF  $f_{W}(w;\ve{\lambda})$ of a random variable $W$
following a GLD can be derived as
\begin{equation}\label{eq:FKMLpdf}
	f_{W}(w;\ve{\lambda}) = 
	\frac{1}{Q^\prime(u;\ve{\lambda})} 
	=\frac{\lambda_2}{ 
		u^{\lambda_3-1} + (1-u)^{\lambda_4-1}} \mathbbm{1}_{[0,1]}(u), \text{ with } u 
	= Q^{-1}(w;\ve{\lambda}) ,
\end{equation}
where $Q^\prime(u;\ve{\lambda})$ is the derivative of $Q$ with respect 
to $u$, and $\mathbbm{1}_{[0,1]}$ is the indicator function. A closed-form 
expression of $Q^{-1}$, and therefore of $f_W$, is in general not available, 
and thus the PDF is evaluated by solving the nonlinear equation 
\cref{eq:FKMLpdf} numerically. 

\subsection{Properties}
\label{sec:GLDprop}
GLDs cover a wide range of unimodal shapes, including bell-shaped, U-shaped, S-shaped and bounded-mode distributions, which 
is determined by $\lambda_3$ and $\lambda_4$, as illustrated in 
\Cref{fig:GLD_shape} \cite{zhuIJUQ2020}. For 
instance, $\lambda_3 = \lambda_4$ produces symmetric PDFs, and $\lambda_3, 
\lambda_4<1$ leads to bell-shaped distributions. Moreover, $\lambda_3$ and 
$\lambda_4$ are closely linked to the support and the tail properties of the 
corresponding PDF. $\lambda_3>0$ implies that the PDF support is left-bounded 
and $\lambda_4>0$ corresponds to right-bounded PDFs. Conversely, the 
distribution has lower infinite support for $\lambda_3 \leq 0$ and upper 
infinite support for $\lambda_4 \leq 0$. More precisely, the support of the PDF 
denoted by $\support{f_W(w;\ve{\lambda})}= [B_l,B_u]$ is given by
\begin{equation}\label{eq:Bounds}
	B_l\left(\ve{\lambda}\right) = \begin{cases}
		-\infty, &\lambda_3 \leq 0, \\
		\lambda_1 - \frac{1}{\lambda_2 \lambda_3}, &\lambda_3 > 0,
	\end{cases} \quad
	B_u\left(\ve{\lambda}\right) = \begin{cases}
		+\infty, &\lambda_4 \leq 0, \\
		\lambda_1 + \frac{1}{\lambda_2 \lambda_4}, &\lambda_4 > 0.
	\end{cases}
\end{equation}
Importantly, for $\lambda_3<0$ ($\lambda_4<0$), the left (resp., right) 
tail decays asymptotically as a power law, and thus the GLD family can also 
provide fat-tailed distributions. Due to this power law decay, for 
$\lambda_3\leq-\frac{1}{k}$ or $\lambda_4\leq-\frac{1}{k}$, moments of order 
greater than $k$ do not exist. For $\lambda_3,\lambda_4>-0.5$, the mean and 
variance exist and are given by
\begin{equation}\label{eq:GLD_mv}
	\mu = \Esp{W} = \lambda_1 - \frac{1}{\lambda_2}\left(\frac{1}{\lambda_3+1} - 
	\frac{1}{\lambda_4+1}\right), \quad v = \Var{W}=\frac{(d_2-d_1^2)}{\lambda_2^2},
\end{equation}
where the two auxiliary variables $d_1$ and $d_2$ are defined by
\begin{equation}\label{eq:GLD_MM_aux}
	\begin{split}
		d_1 &= \frac{1}{\lambda_3}\Betafun(\lambda_3+1,1) - 
		\frac{1}{\lambda_4}\Betafun(1,\lambda_4+1), \\
		d_2 &= \frac{1}{\lambda^2_3}\Betafun(2\lambda_3+1,1) - 
		\frac{2}{\lambda_3\lambda_4}\Betafun(\lambda_3+1,\lambda_4+1) +
		\frac{1}{\lambda^2_4}\Betafun(1,2\lambda_4+1),
	\end{split}
\end{equation}
with $\Betafun$ denoting the beta function. 

\begin{figure}[!htbp]
	\centering
	\includegraphics[width=.85\linewidth, keepaspectratio]{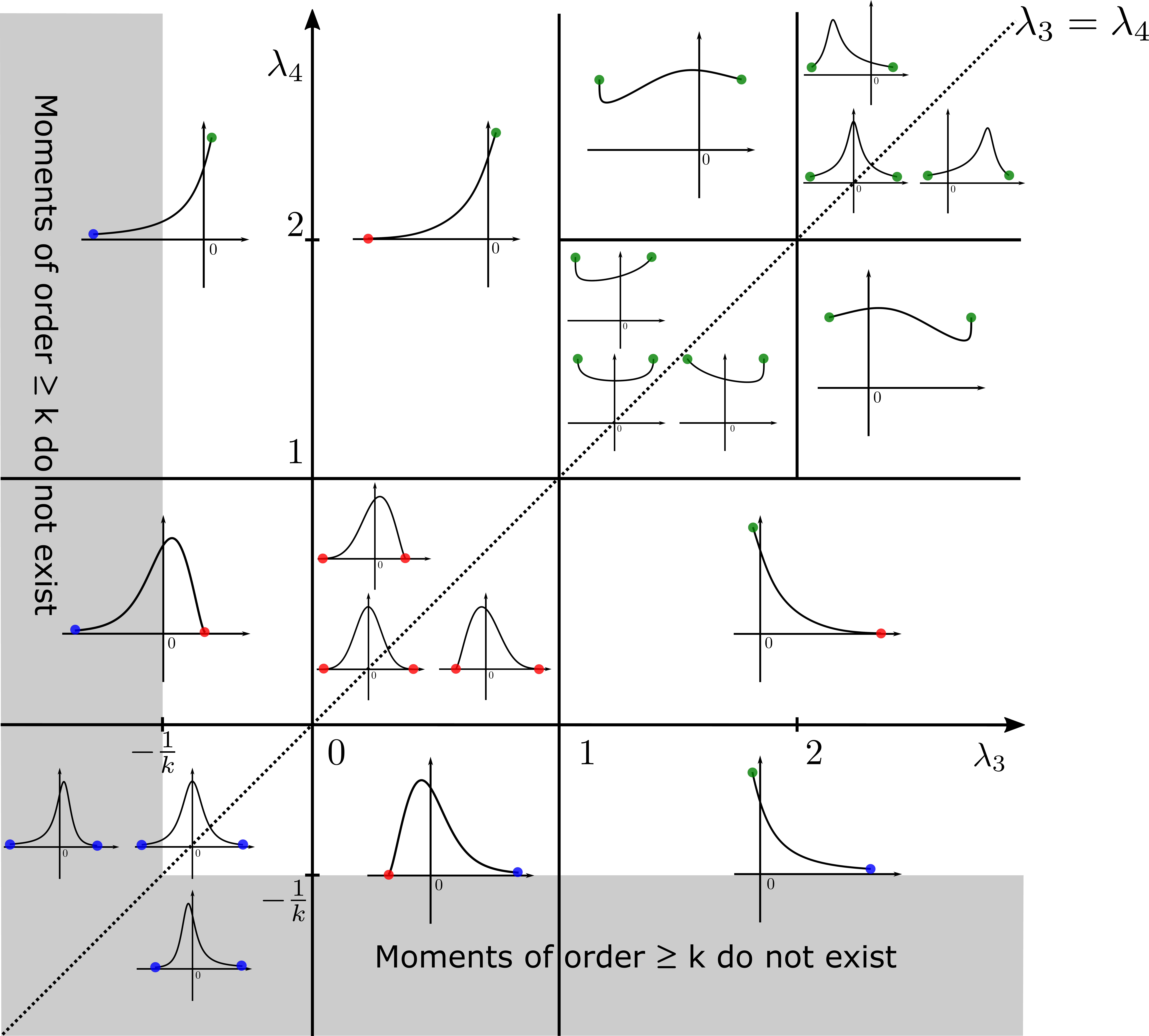}
	\caption{A graphical illustration of the PDF of	the FKML family of 
		GLD as a 
		function of $\lambda_3$ and $\lambda_4$. The values of $\lambda_1$ and 
		$\lambda_2$ are set to 0 and 1, respectively. The 
		blue points indicate that the PDF has infinite support in the marked 
		direction. In contrast, both the red and green points denote the boundary 
		points of the PDF support. More precisely, the PDF $f_W(w) = 0$ on the red 
		dots, whereas $f_W(w) = 1$ on the green ones.}
	\label{fig:GLD_shape}
\end{figure}

\section{Polynomial chaos expansions}
\label{sec:PCE}
Consider a deterministic computational model $\cm_d(\ve{x})$ that maps a set of 
input parameters $\ve{x} = \left(x_1,x_2,\ldots,x_M\right)^T \in \cd_{\ve{X}} 
\subset\Rr^M$ to the system response $z \in \Rr$. In the context of uncertainty 
quantification, the input variables are affected by uncertainty due to lack of 
knowledge or intrinsic variability (also called aleatory uncertainty). 
Therefore, they are modeled by random 
variables and grouped into a random vector $\ve{X}$ characterized by a joint 
PDF $f_{\ve{X}}$. The uncertainty in the input variables propagates through the 
the model $\cm_d$ to the output, which becomes a random variable denoted by 
$Z = \cm_d(\ve{X})$. 

\begin{remark*}
	$f_{\ve{X}}$ is the joint PDF for the input variables, which is needed to define orthogonal polynomials as described below. It should not be confused with the stochasticity of the simulator addressed in the next sections.
\end{remark*}

Provided that the output random variable $Z$ has finite variance, $\cm_d$ 
belongs to the Hilbert space $\ch$ of square-integrable functions associated 
with the inner product
\begin{equation}
	\langle u,v\rangle_{\ch} \eqdef \Esp{u(\ve{X})v(\ve{X})} = \int_{\cd_{\ve{X}}} 
	u(\ve{x})v(\ve{x})f_{\ve{X}}(\ve{x}) \D\ve{x}.
\end{equation}
If the joint PDF $f_{\ve{X}}$ fulfills certain conditions \cite{Ernst2012}, the 
space spanned by multivariate polynomials is dense in $\ch$. In other words, 
$\ch$ is a separable Hilbert space admitting a polynomial basis.
\par
In this study, we assume that $\ve{X}$ has mutually independent components, and thus the joint distribution $f_{\ve{X}}$ is expressed as
\begin{equation}
	f_{\ve{X}}(\ve{x}) = \prod_{j=1}^{M}f_{X_j}(x_j) .
\end{equation}
\par
Let $\{\phi^{(j)}_k:k\in\Nn\}$ be the orthogonal polynomial basis with 
respect to the marginal distribution of $f_{X_j}$, \ie
\begin{equation}\label{eq:unPCE_ortho}
	\Esp{\phi^{(j)}_{k}(X_j)\,\phi^{(j)}_{l}(X_j)}=\delta_{kl},
\end{equation}
with $\delta$ being the Kronecker symbol defined by $\delta_{kl}=1$ if $k=l$ 
and $\delta_{kl}=0$ otherwise. Then, the multivariate orthogonal polynomial basis 
can be obtained as the tensor product of univariate polynomials 
\cite{Soize2004}: 
\begin{equation}\label{eq:PCE_basis}
	\psi_{\ve{\alpha}}(\ve{x}) = \prod_{j=1}^{M} \phi^{(j)}_{\alpha_j}(x_j),
\end{equation}
where $\ve{\alpha} = \left(\alpha_1,\ldots,\alpha_M\right) \in \Rr^M$ denotes 
the multi-index of degrees. Each component $\alpha_j$ indicates the polynomial 
degree of $\phi_{\alpha_j}$ and thus of $\psi_{\ve{\alpha}}$ in 
the $j$th variable $x_j$. For some classical distributions, \eg normal, 
uniform, exponential, the associated univariate orthogonal polynomials are 
well known as Hermite, Legendre, and Laguerre polynomials \cite{Xiu2002}. For 
arbitrary marginal distributions, such a basis can 
be computed numerically through the \emph{Stieltjes procedure} 
\cite{Gautschi2004}.
\par
Following the construction defined in \cref{eq:PCE_basis}, 
$\acc{\psi_{\ve{\alpha}}(\cdot),\ve{\alpha} \in \Nn^M}$ forms an orthogonal 
basis for $\ch$. Thus, the random output $Z$ can be represented by
\begin{equation}\label{eq:PCE}
	Z=\cm_d(\ve{X}) = \sum_{\ve{\alpha}\in 
		\Nn^M}c_{\ve{\alpha}}\psi_{\ve{\alpha}}(\ve{X}),
\end{equation}
where $c_{\ve{\alpha}}$ is the coefficient associated with the basis function 
$\psi_{\ve{\alpha}}$. The spectral representation in \cref{eq:PCE} is a 
series with infinitely many terms. In practice, it is necessary to adopt 
truncation 
schemes to \emph{approximate} $\cm_d(\ve{x})$ with a finite series defined by a 
finite 
subset $\ca\subset\Nn^M$ of multi-indices. A typical scheme is the hyperbolic 
($q$-norm) truncation scheme \cite{BlatmanPEM2010}:
\begin{equation}\label{eq:qnorm}
	\ca^{p,q,M} = \acc{\ve{\alpha}\in \Nn^M, \|\ve{\alpha}\|_{q} = 
		\left(\sum_{i=1}^{M}\abs{\alpha_i}^q\right)^{\frac{1}{q}}\leq p},
\end{equation}
where $p$ is the maximum total degree of polynomials, and $q \leq 1$ defines 
the quasi-norm $\norm{\cdot}_q$. Note that with $q=1$, we obtain 
the so-called full basis of total degree less than $p$.

For an arbitrary distribution $f_{\ve{X}}$ with dependent components of $\ve{X}$, the usual practice is to transform $\ve{X}$ into an auxiliary vector $\ve{\xi}$ with independent components (\eg a standard normal vector) using the Nataf or Rosenblatt transform \cite{Torre2019}. Alternatively, polynomials orthogonal to the joint distribution may be computed on the fly using a numerical Gram--Schmidt orthogonalization \cite{Jakeman2019}.

\section{Generalized lambda models (GLaM)}
\label{sec:SGLaM}

\subsection{Introduction}
\label{sec:GLaM}
Because of their flexibility, we assume that the response random variable 
of a stochastic simulator for a given input vector $\ve{x}$ follows a 
GLD. Hence, the 
distribution parameters $\ve{\lambda}$ are functions of the input 
variables:
\begin{equation} \label{eq:condGLD}
	Y (\ve{x}) \sim \GLD\left(\lambda_1(\ve{x}),\lambda_2(\ve{x}),\lambda_3(\ve{x}),\lambda_4(\ve{x})\right).
\end{equation}
\par
Under appropriate conditions discussed in \Cref{sec:PCE}, each component of 
$\ve{\lambda}(\ve{x})$ admits a spectral representation in terms of orthogonal 
polynomials. Recall that $\lambda_2(\ve{x})$ is required to be positive (see 
\Cref{sec:GLD}). Thus, we choose to build the associated PCE on the 
natural logarithm transform $\log\left(\lambda_2(\ve{x})\right)$. This results 
in the following approximations:
\begin{align}
	\lambda_l\left(\ve{x}\right) &\approx \lambda^{\PC}_l\left(\ve{x};\ve{c}\right) 
	= \sum_{\ve{\alpha}\in \ca_l} c_{l,\ve{\alpha}}\psi_{\ve{\alpha}}(\ve{x}), 
	\quad l = 1,3,4, \label{eq:lamPCE}\\
	\lambda_2\left(\ve{x}\right) &\approx \lambda^{\PC}_2\left(\ve{x};\ve{c}\right) = \exp \left(\sum_{\ve{\alpha}\in \ca_2} c_{2,\ve{\alpha}}\psi_{\ve{\alpha}}(\ve{x}) \right), \label{eq:lamPCE_lam2}
\end{align}
where $\ve{\ca}=\acc{\ca_l: l=1,\ldots,4}$ are the truncation sets defining the 
basis functions, and $\ve{c}=\acc{c_{l,\ve{\alpha}}: 
	l=1,\ldots,4, \,\ve{\alpha} \in \ca_l}$ are coefficients associated to the 
bases. For the purpose of clarity, we explicitly express $\ve{c}$ in the 
spectral approximations as in $\ve{\lambda}^{\PC}\left(\ve{x};\ve{c}\right)$ to 
emphasize that $\ve{c}$ are the model parameters. 

The generalized lambda model presented above is a statistical model. It 
involves two approximations. First, the response distribution of a stochastic 
simulator is approximated by GLDs. As illustrated 
in \Cref{fig:GLD_shape}, GLDs cover a wide range of unimodal shapes but 
cannot produce multimodal distributions. Thus, the GLD representation is 
appropriate when the response distribution stays unimodal. In this case, the 
flexibility of GLDs allows capturing the possible shape 
variation of the response distribution within a single parametric family. 
Second, the distribution parameters $\ve{\lambda}(\ve{x})$ seen as functions of $\ve{x}$ are represented by truncated polynomial chaos expansions. So they must belong to the Hilbert space of square-integrable functions with respect to $f_{\ve{X}}(\ve{x}) \, \D\ve{x}$.

\subsection{Estimation of the model parameters}
\label{sec:estimation}
Given the truncation sets $\ve{\ca}$, the coefficients $\ve{c}$ need to be 
estimated from data to build the surrogate model. In this paper, as opposed to
\cite{zhuIJUQ2020} and the vast majority of the literature on 
stochastic simulators, 
the simulator is required to be evaluated \emph{only once} on the experimental 
design $\cx =\acc{\ve{x}^{(1)},\ldots,\ve{x}^{(N)}}$, and the associated model 
responses are collected in $\cy = \acc{y^{(1)},\ldots,y^{(N)}}$. To develop 
surrogate models in a nonintrusive manner, we propose using the maximum 
conditional likelihood estimator:
\begin{equation}\label{eq:joint}
	\hat{\ve{c}} = \arg\max_{\ve{c} \in \cc} \,
	\mathsf{L}\left(\ve{c}\right),
\end{equation}
where 
\begin{equation}\label{eq:nloglh}
	\mathsf{L}\left(\ve{c}\right) =  \sum_{i=1}^{N}\log\left( 
	f^{\GLD}\left(y^{(i)} ; 
	\ve{\lambda}^{\PC}\left(\ve{x}^{(i)};\ve{c}\right)\right)\right). 
\end{equation}
Here, $f^{\GLD}$ denotes the PDF of the GLD defined 
in \cref{eq:FKMLpdf}, and $\cc$ is the search space for $\ve{c}$. The estimator 
introduced in \cref{eq:nloglh} can be derived from minimizing the 
Kullback--Leibler divergence between the surrogate PDF and the 
underlying true response PDF over $\cd_{\ve{X}}$; see details in 
\cite{zhuIJUQ2020}. The advantages of this estimation method are 
twofold. On 
the one hand, it removes the need for replications in the experimental design. 
On the other hand, if a GLaM for a certain choice of $\ve{c}$ can exactly 
represent the stochastic simulator, the proposed estimator is \emph{consistent} 
under mild conditions, as shown in \Cref{thm:consistMLE} (see 
\Cref{sec:MLEConsistency} for a detailed proof).
\begin{theorem}\label{thm:consistMLE}
	Let $(\ve{X}^{(1)},Y^{(1)}),\ldots, (\ve{X}^{(N)},Y^{(N)})$ be independent and identically distributed random variables following $\ve{X} \sim P_{\ve{X}}$ and $Y(\ve{x}) \sim \GLD\left(\ve{\lambda}^{\PC}(\ve{x};\ve{c}_0)\right)$. If the following conditions are fulfilled, the estimator defined in \cref{eq:joint} is consistent, that is,
	\begin{equation}
		\hat{\ve{c}}\xrightarrow{\text{a.s.}}\ve{c}_0.
	\end{equation}
	\begin{enumerate}[label=(\roman*)]
		\item $P_{\ve{X}}$ is absolutely continuous with respect to the 
		Lebesgue measure of $\Rr^M$, \ie the joint PDF $f_{X}(\ve{x})$ is Lebesgue-measurable.
		\item $f_{X}$ has a compact support $\cd_{\ve{X}}$.
		\item $\cc$ is compact, and $\ve{c}_0 \in \cc$.
		\item There exists a set $A \subset \cd_{\ve{X}}$ with 
		$P_{\ve{X}}\left(\ve{X} 
		\in A\right)>0$ such that $\forall \ve{x} \in A $, 
		$Y(\ve{x})$ does not follow a uniform distribution.
	\end{enumerate}
\end{theorem}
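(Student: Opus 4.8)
The plan is to cast this as a standard consistency result for M-estimators (Wald-type argument) and verify the classical hypotheses. Write $\ell(\ve{x},y;\ve{c}) = \log f^{\GLD}(y;\ve{\lambda}^{\PC}(\ve{x};\ve{c}))$, so that $\mathsf{L}(\ve{c}) = \sum_{i=1}^N \ell(\ve{X}^{(i)},Y^{(i)};\ve{c})$ and $\hat{\ve{c}} = \arg\max_{\ve{c}\in\cc} \tfrac{1}{N}\mathsf{L}(\ve{c})$. Define $L_0(\ve{c}) = \Esp{\ell(\ve{X},Y;\ve{c})}$ where the expectation is under the true law $\ve{X}\sim P_{\ve{X}}$, $Y(\ve{x})\sim\GLD(\ve{\lambda}^{\PC}(\ve{x};\ve{c}_0))$. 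The three things to establish are: (a) a uniform law of large numbers, $\sup_{\ve{c}\in\cc}\abs{\tfrac{1}{N}\mathsf{L}(\ve{c}) - L_0(\ve{c})}\xrightarrow{\text{a.s.}}0$; (b) $L_0$ is continuous on $\cc$; and (c) identifiability, i.e.\ $\ve{c}_0$ is the unique maximizer of $L_0$ over $\cc$. Given (a)--(c), the conclusion $\hat{\ve{c}}\xrightarrow{\text{a.s.}}\ve{c}_0$ follows from the usual argmax argument using compactness of $\cc$ (condition (iii)).

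For (b) and (a) I would first record regularity of the map $\ve{c}\mapsto\ell(\ve{x},y;\ve{c})$. Since each $\lambda^{\PC}_l(\ve{x};\ve{c})$ is a finite sum of polynomials in $\ve{x}$ with coefficients $\ve{c}$ (and an exponential for $l=2$), and $\cd_{\ve{X}}$ is compact (condition (ii)) while $\cc$ is compact (condition (iii)), the functions $\ve{\lambda}^{\PC}$ and hence $\lambda_2^{\PC}$ are uniformly bounded, with $\lambda_2^{\PC}$ bounded away from $0$. One then needs continuity of $(w,\ve{\lambda})\mapsto f^{\GLD}(w;\ve{\lambda})$ and, crucially, an integrable envelope: a function $g(\ve{x},y)$ with $\Esp{g(\ve{X},Y)}<\infty$ dominating $\sup_{\ve{c}\in\cc}\abs{\ell(\ve{x},y;\ve{c})}$. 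Because $f^{\GLD}$ can vanish (on the boundary of its support, cf.\ the red dots in \Cref{fig:GLD_shape}) or blow up, $\ell$ is a priori unbounded, so this envelope bound is the delicate point. I expect to control it by a change of variables $y = Q(u;\ve{\lambda}_0)$ with $u\sim\cu[0,1]$: then $\log f^{\GLD}(Q(u;\ve{\lambda}_0);\ve{\lambda}) = -\log Q'(\tilde u;\ve{\lambda})$ where $\tilde u = Q^{-1}(Q(u;\ve{\lambda}_0);\ve{\lambda})$, and using \cref{eq:FKMLpdf} one bounds $\abs{\log Q'(\tilde u;\ve{\lambda})}$ by something like $C_1 + C_2\abs{\log \tilde u} + C_3\abs{\log(1-\tilde u)}$ uniformly over the compact parameter range; finiteness of the expectation then reduces to integrability near $u\in\{0,1\}$, which holds because $\tilde u$ is comparable to a power of $u$ there. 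Once the envelope is in place, (a) is the Jennrich uniform SLLN and (b) is dominated convergence.

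For (c), identifiability, the starting observation is that $L_0(\ve{c}_0) - L_0(\ve{c}) = \Esp{\mathrm{KL}\!\left(f^{\GLD}(\cdot;\ve{\lambda}^{\PC}(\ve{X};\ve{c}_0))\,\big\|\,f^{\GLD}(\cdot;\ve{\lambda}^{\PC}(\ve{X};\ve{c}))\right)} \ge 0$, with equality iff, for $P_{\ve{X}}$-almost every $\ve{x}$, the two GLDs coincide. So I must show that $\ve{\lambda}^{\PC}(\ve{x};\ve{c}) = \ve{\lambda}^{\PC}(\ve{x};\ve{c}_0)$ for $P_{\ve{X}}$-a.e.\ $\ve{x}$ forces $\ve{c}=\ve{c}_0$. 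This factors into two sub-claims. \textbf{First}, the parametrization $\ve{\lambda}\mapsto\GLD(\ve{\lambda})$ is injective on the relevant parameter set (distinct $\ve{\lambda}$ give distinct distributions) — here is where condition (iv) enters, since a uniform distribution is exactly the degenerate case $\lambda_3=\lambda_4=1$ where $\lambda_1,\lambda_2$ are not separately identifiable (the quantile function is affine); away from that, matching quantile functions forces equality of all four $\lambda_l$, which one checks from the explicit form \cref{eq:FKML} (e.g.\ comparing behavior/derivatives at $u\to 0^+$ and $u\to 1^-$ pins down $\lambda_3,\lambda_4$, then $\lambda_2$, then $\lambda_1$). \textbf{Second}, a PCE-coefficient identifiability statement: if two truncated expansions $\lambda^{\PC}_l(\cdot;\ve{c})$ and $\lambda^{\PC}_l(\cdot;\ve{c}_0)$ agree $P_{\ve{X}}$-a.e., then $\ve{c}_l = \ve{c}_{0,l}$ — this holds because $P_{\ve{X}}$ is absolutely continuous (condition (i)) so a polynomial vanishing $P_{\ve{X}}$-a.e.\ vanishes on an open set and hence identically, and the $\psi_{\ve{\alpha}}$ are linearly independent; for $l=2$ one first takes logarithms. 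Combining, on the set where $Y(\ve{x})$ is non-uniform (positive probability by (iv)) we get $\lambda_l^{\PC}(\ve{x};\ve{c})=\lambda_l^{\PC}(\ve{x};\ve{c}_0)$ for all four $l$ on a set of positive measure, hence $\ve{c}=\ve{c}_0$.

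The main obstacle, as flagged, is the integrable-envelope condition needed for the uniform SLLN: $\log f^{\GLD}$ is genuinely unbounded (both $+\infty$ and $-\infty$ are approached as $\ve{\lambda}$ ranges over $\cc$ and $y$ over the support), so the domination must be obtained by the quantile change of variables and a careful estimate, uniform in $\ve{c}\in\cc$, of how $Q^{-1}(\,\cdot\,;\ve{\lambda})$ distorts neighborhoods of the endpoints $0$ and $1$; handling the fat-tailed regimes $\lambda_3<0$ or $\lambda_4<0$ (where the support is unbounded) together with the bounded regimes in one uniform bound is the part requiring the most care. A secondary subtlety is ensuring the parameter-to-distribution injectivity argument is watertight precisely at and near the uniform boundary case, which is the role of hypothesis (iv).
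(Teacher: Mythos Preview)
Your overall architecture (Wald-type argmax argument: uniform SLLN + continuity + identifiability on the compact set $\cc$) is the natural one, and your identifiability sketch in (c) is essentially correct. But step (a) as you propose it does not go through, and this is not a mere technicality: the obstruction is exactly the feature the paper singles out, namely that the support of $q_{\ve{c}}(\ve{x},\cdot)$ depends on $\ve{c}$ through $\lambda_3,\lambda_4$ via \cref{eq:Bounds}. Concretely, if for some $\ve{x}$ the true $\lambda^{\PC}_3(\ve{x};\ve{c}_0)>0$ and there is some $\ve{c}\in\cc$ with $\lambda^{\PC}_3(\ve{x};\ve{c})>0$ and a strictly larger lower endpoint, then on a set of \emph{positive} $P_0$-probability the observation $y$ falls outside the support of $q_{\ve{c}}(\ve{x},\cdot)$, so $\ell(\ve{x},y;\ve{c})=-\infty$. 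Your envelope $\sup_{\ve{c}\in\cc}\abs{\ell(\ve{x},y;\ve{c})}$ is then $+\infty$ on a set of positive measure, and the quantile change of variables cannot rescue this: your $\tilde u=Q^{-1}(Q(u;\ve{\lambda}_0);\ve{\lambda})$ simply fails to exist in $[0,1]$ for those $(u,\ve{c})$. The Jennrich uniform SLLN therefore does not apply to the raw log-likelihood class.

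The paper circumvents this with the device from van de Geer: instead of $\ell_{\ve{c}}$ one proves the uniform SLLN for
\[
g_{\ve{c}}=\log\!\left(\frac{q_{\ve{c}}+q_0}{2q_0}\right)\mathbbm{1}_{q_0>0},
\]
which is bounded below by $-\log 2$ automatically (even where $q_{\ve{c}}=0$) and bounded above by $\log(2C_q)-\log q_0$ using the uniform bound $q_{\ve{c}}\le C_q$ that you anticipated. The envelope is then integrable because $\Esp{\abs{\log q_0(\ve{X},Y)}}<\infty$, and this last expectation is controlled via precisely the quantile substitution you proposed---but applied only at $\ve{c}=\ve{c}_0$, where no support mismatch arises. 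The uniform SLLN for $\{g_{\ve{c}}\}$ feeds into a basic Hellinger inequality to give $\int h^2(q_{\hat{\ve{c}}},q_0\mid\ve{x})f_{\ve{X}}(\ve{x})\,\D\ve{x}\to 0$ a.s., and parameter consistency follows from continuity of $\ve{c}\mapsto\int h^2(q_{\ve{c}},q_0\mid\ve{x})f_{\ve{X}}(\ve{x})\,\D\ve{x}$ together with your identifiability argument. One minor correction to the latter: the FKML family is non-identifiable at $\lambda_3=\lambda_4=1$ \emph{and} at $\lambda_3=\lambda_4=2$ (both give uniform distributions), not only the former; condition (iv) excludes both.
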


Most of the assumptions in the \Cref{thm:consistMLE} are realistic, 
except the one that the true model can be exactly represented by a GLaM, which 
is rather technical to guarantee the consistency. In practice, we do not 
require the QoI for any input parameters following a GLD but assume that the 
response distribution can be well approximated by GLDs.

It is worth remarking that since a GLD can have 
very fat tails (see \Cref{sec:GLDprop}), solving the optimization problem may 
produce response PDFs with unexpected infinite moments when the model is 
trained on a small data set. To prevent too-fat tails (if no prior knowledge 
suggests it), we apply the threshold $\lambda^{\PC}_3(\ve{x}) = 
\max\acc{\lambda^{\PC}_3(\ve{x};\hat{\ve{c}}),-0.3}$ 
and $\lambda^{\PC}_4(\ve{x}) = 
\max\acc{\lambda^{\PC}_4(\ve{x};\hat{\ve{c}}),-0.3}$, which indicates that we 
enforce the surrogate PDFs to have finite moments up to order $3$ (higher 
order moments may exist depending on $\hat{\ve{c}}$). Thresholds larger than 
$-0.3$ (\eg from $-0.1$ to $0$) can be used if the response PDF is known to be 
light-tailed. Note that when enough data are available, these 
operations are unnecessary because the resulting model does not exceed the 
threshold. Although the thresholdings could have been imposed in the model 
definition in \cref{eq:lamPCE}, they change the regularity of the optimization 
problem, and do not generally improve the performance according to our 
experience. Therefore, we only use them for postprocessing.

\begin{remark}
	\label{rem:rep}
	While we consider the simulator to be evaluated only
	once for each point of the experimental design in this paper, the estimator 
	defined in \cref{eq:joint} is not limited to this type of data. When 
	replications are available, the objective function can be reformulated to
	\begin{equation}\label{eq:nloglhR}
		\mathsf{L}\left(\ve{c}\right) = \sum_{i=1}^{N} 
		\frac{1}{R^{(i)}}\sum_{r=1}^{R^{(i)}}\log\left( 
		f^{\GLD}\left(y^{(i,r)} ; 
		\ve{\lambda}^{\PC}\left(\ve{x}^{(i)};\ve{c}\right)\right)\right),
	\end{equation}
	where $R^{(i)}$ denotes the number of replications at point $\ve{x}^{(i)}$, and 
	$y^{(i,r)}$ is the model response for $\ve{x}^{(i)}$ at the $r$th replication. 
	In addition, if $R^{(i)}$ is constant for all points $\ve{x}^{(i)} \in \cx$, 
	\cref{eq:nloglhR} provides the same estimator as in our previous work 
	\cite{zhuIJUQ2020}.
\end{remark}

\subsection{Fitting procedure}
\label{sec:local}
In practice, the evaluation of $\mathsf{L}(\ve{c})$ is not straightforward 
because the PDF of GLDs does not have an explicit 
form as shown in \cref{eq:FKMLpdf}. Details about the evaluation procedure are 
given in  \cite{zhuIJUQ2020}. Note that the optimization 
problem \cref{eq:joint} is subject to complex inequality constraints due 
to the dependence of the PDF support on $\ve{\lambda}$ (see
\cref{eq:Bounds}). Given a starting point, we follow the optimization strategy 
developed in \cite{zhuIJUQ2020}: We first apply the 
derivative-based 
\emph{trust-region} optimization algorithm \cite{Steihaug1983} without 
constraints. If none of the inequality constraints is activated at the optimum, 
we keep the results as the final estimates. Otherwise, the 
constrained (1+1)-CMA-ES algorithm \cite{Arnold2012} available in the software 
UQLab \cite{UQdoc_13_201} is used instead. 
\par
Because $\mathsf{L}(\ve{c})$ is highly nonlinear, a good starting point is 
necessary to guarantee the convergence of the optimization algorithm. In this 
section, we introduce a robust method to find a suitable starting point.
\par
According to \cref{eq:GLD_mv}, the mean $\mu(\ve{x})$ and the variance function 
$v(\ve{x})$ of a GLaM satisfy 
\begin{equation}\label{eq:GLaM_mv}
	\begin{split}
		\mu(\ve{x}) &= \lambda^{\PC}_1(\ve{x}) + 
		\frac{1}{\lambda^{\PC}_2(\ve{x})}g\left( 
		\lambda^{\PC}_3(\ve{x}), \lambda^{\PC}_4(\ve{x}) \right), \\
		\log\left(v(\ve{x})\right) &= -2\log\left(\lambda^{\PC}_2(\ve{x})\right) + 
		h\left( \lambda^{\PC}_3(\ve{x}), \lambda^{\PC}_4(\ve{x}) \right), 
	\end{split}
\end{equation}
where we group the dependence of $\mu$ and $\log(v)$ on $\lambda_3$ and 
$\lambda_4$ into $g$ and $h$, respectively, for the purpose of simplicity. If 
$\lambda^{\PC}_3(\ve{x})$ and $\lambda^{\PC}_4(\ve{x})$ do not vary strongly on 
$\cd_{\ve{X}}$, we observe that the variations of the mean and the 
variance function are mostly dominated by the location parameter 
$\lambda^{\PC}_1(\ve{x})$ and the scale parameter $\lambda^{\PC}_2(\ve{x})$. 

Recall that the spectral approximation for $\lambda_2(\ve{x})$ is on its 
logarithmic transform. If a PCE can be constructed for $\mu(\ve{x})$ and 
$-\frac{1}{2}\log\left(v(\ve{x})\right)$, the associated coefficients can be 
used as a preliminary guess for the coefficients of $\lambda^{\PC}_1(\ve{x})$ 
and $\lambda^{\PC}_2(\ve{x})$, respectively. As a result, we first focus on 
estimating the mean and the variance function as follows:
\begin{equation*}
	\mu(\ve{x}) = \sum_{\ve{\alpha}\in \ca_{\mu}} 
	c_{\mu,\ve{\alpha}}\psi_{\ve{\alpha}}(\ve{x}), \quad
	v(\ve{x}) = \exp \left(\sum_{\ve{\alpha}\in \ca_v} 
	c_{v,\ve{\alpha}}\psi_{\ve{\alpha}}(\ve{x}) \right),
\end{equation*}
where the form of the variance function implies a multiplicative 
\emph{heteroskedastic} effect (see \cite{Harvey1976}).
\par
The mean estimation is a classical regression problem. However, since the 
variance function is also unknown and needs to be estimated, the 
heteroskedastic effect should be taken into account. Many methods have 
been developed in statistics and applied science to tackle heteroskedastic 
regression problems. They can be classified into two groups: one class 
of methods relies on repeated measurements at given input values 
\cite{Sadler1985,Ankenman2009,Murcia2018} (replication-based), whereas a second 
class of methods jointly estimates both quantities by optimizing certain 
functions without the need for replications 
\cite{Nelder1987,Davidian1987,Goldberg1997,Marrel2012}. Some studies 
\cite{Davidian1987,Marrel2012} have shown higher efficiency of the second class 
of methods over the former. This finding supports our pursuit for a 
replication-free approach. In particular, we opt for feasible generalized least-squares (FGLS) \cite{Wooldridge2013}, which iteratively fits the mean and 
variance functions in an alternative way. 
\par
The details are described in \cref{alg:FGLS}. In this 
algorithm, $\OLS$ denotes the use of ordinary least-squares, and 
$\WLS$ is weighted least-squares. $\hat{\ve{v}}$ corresponds to the set of 
estimated variances on the design points in $\cx$ which are then used as 
weights in $\WLS$ to re-estimate $\ve{c}_{\mu}$.
\par
\begin{algorithm}[H]
	\caption{Feasible generalized least-squares (FGLS)}
	\label{alg:FGLS}
	\begin{algorithmic}[1]
		\STATE $\hat{\ve{c}}_{\mu} \gets \OLS\left(\cx,\cy\right)$
		\FOR{$i \gets 1,\ldots,N_{\rm FGLS}$}
		\STATE $\hat{\ve{\mu}} \gets  \sum_{\ve{\alpha} \in 
			\ca_{\mu}}c_{\mu,\ve{\alpha}}\psi_{\ve{\alpha}}(\cx)$
		\STATE $\tilde{\ve{r}} \gets 2\log\left(\abs{\cy - 
			\hat{\ve{\mu}}}\right)$
		\STATE $\hat{\ve{c}}_{v} \gets \OLS\left(\cx,\tilde{\ve{r}}\right)$
		\STATE $\hat{\ve{v}}=\exp\left(\sum_{\ve{\alpha} \in 
			\ca_{v}}c_{v,\ve{\alpha}}\psi_{\ve{\alpha}}(\cx)\right)$
		\STATE $\hat{\ve{c}}_{\mu} \gets 
		\WLS\left(\cx,\cy,\hat{\ve{v}}\right)$
		\ENDFOR
		\STATE Output: $\hat{\ve{c}}_{\mu}$, $\hat{\ve{c}}_{v}$
	\end{algorithmic}
\end{algorithm}
\par
After obtaining $\hat{\ve{c}}_{\mu}$ and $\hat{\ve{c}}_{v}$ from FGLS, we 
perform two rounds of the optimization procedure described 
at the beginning of this section to build the GLaM surrogate. First, we set 
the starting points as $\ve{c}_1 = \ve{c}_{\mu}$, $\ve{c}_2 = 
-\frac{1}{2}\ve{c}_{v}$, and 
$\lambda^{\PC}_3(\ve{x})=\lambda^{\PC}_4(\ve{x})=0.13$, which corresponds to a 
normal-like shape. Then, we fit a GLaM with $\lambda^{\PC}_3(\ve{x})$ 
$\lambda^{\PC}_4(\ve{x})$ being only constant; \ie the coefficients of 
nonconstant basis functions are kept as zeros during the fitting. Finally, we 
use the resulting estimates as a starting point and construct a final GLaM with 
all the considered basis functions by solving \cref{eq:nloglh}.

\subsection{Truncation schemes}
\label{sec:adapt}
Provided that the bases of $\ve{\lambda}^{\PC}(\ve{x})$ are 
given, we have presented a procedure to construct GLaMs from data in the 
previous section. However, there is generally no prior knowledge that would 
help select the truncation sets $\ca_l$'s ab initio. In this section, we develop a 
method to determine a suitable hyperbolic truncation scheme $\ca^{p,q,M}$ 
presented in \cref{eq:qnorm} for each component of  
$\ve{\lambda}^{\PC}(\ve{x})$. 
\par
As discussed in \Cref{sec:GLD}, $\lambda^{\PC}_3(\ve{x})$ and 
$\lambda^{\PC}_4(\ve{x})$ control the shape variations of the response PDF. We 
assume that the shape does not vary in a strongly nonlinear way. Hence, the 
associated $p$ can be set to a small value, e.g., $p=1$, in practice. In contrast, 
$\lambda^{\PC}_1(\ve{x})$ and $\lambda^{\PC}_2(\ve{x})$ require possibly larger 
degree $p$ since their behavior is associated with the mean and the variance 
function, which might vary nonlinearly over $\cd_{\ve{X}}$. To this 
end, we modify \Cref{alg:FGLS} to adaptively find appropriate truncation 
schemes for $\mu(\ve{x})$ and $v(\ve{x})$, which are then used for 
$\lambda_1(\ve{x})$ and $\lambda_2(\ve{x})$, respectively.
\par
\begin{algorithm}
	\caption{Modified feasible generalized least-squares}
	\label{alg:MFGLS}
	\begin{algorithmic}[1]
		\STATE Input: $\left(\cx,\cy\right)$, $\ve{p}_1$, $\ve{q}_1$, 
		$\ve{p}_2$, $\ve{q}_2$
		\STATE $\ca_{\mu},\,\, \hat{\ve{c}}_{\mu} \gets 
		\AOLS\left(\cx,\cy,\ve{p}_1,\ve{q}_1\right)$
		\FOR{$i \gets 1,\ldots,N_{\rm FGLS}$}
		\STATE $\hat{\ve{\mu}} \gets \sum_{\ve{\alpha} \in 
			\ca_{\mu}}c_{m,\ve{\alpha}}\psi_{\ve{\alpha}}(\cx)$
		\STATE $\tilde{\ve{r}} \gets 2\log\left(\abs{\cy - 
			\hat{\ve{\mu}}}\right)$
		\STATE $\ca^i_{v},\,\,\hat{\ve{c}}^i_{v},\,\, \varepsilon^i_{\rm 
			LOO}\gets 
		\AOLS\left(\cx,\tilde{\ve{r}},\ve{p}_2,\ve{q}_2\right)$
		\STATE $\hat{\ve{v}} \gets \exp\left(\sum_{\ve{\alpha} \in 
			\ca_{v}}c_{v,\ve{\alpha}}\psi_{\ve{\alpha}}(\cx)\right)$
		\STATE $\hat{\ve{c}}_{\mu} \gets 
		\WLS\left(\cx,\cy,\ca_{\mu},\hat{\ve{v}}\right)$
		\ENDFOR
		\STATE $ i^* = \arg\min \acc{\varepsilon^i_{\rm LOO} : 
			i=1,\ldots,N_{\rm FGLS}}$
		\STATE Output: $\ca_{\mu}$, $\hat{\ve{c}}^{i^*}_{\mu}$, 
		$\ca^{i^*}_{v}$, 
		$\hat{\ve{c}}^{i^*}_{v}$
	\end{algorithmic}
\end{algorithm}
\par
\Cref{alg:MFGLS} presents the modified FGLS. Instead of using OLS, we apply the 
\emph{adaptive ordinary least-squares} with degree and $q$-norm adaptivity 
(referred to as $\AOLS$) \cite{UQdoc_13_104}. This algorithm builds 
a series of PCEs, each of which is obtained by applying OLS with the truncation 
set $\ca^{p,q,M}$ defined by a particular combination of $p \in \ve{p}$ 
and $q \in \ve{q}$. Then, it selects the truncation scheme for which the 
associated PCE has the lowest \emph{leave-one-out} error. In the modified FGLS, 
the truncation set $\ca_{\mu}$ for $\mu(\ve{x})$ is 
selected only once (before the loop), whereas several truncation schemes 
$\acc{\ca^i_{v}: i = 1,\ldots,N_{\rm FGLS}}$ are obtained. We 
select the one corresponding to the smallest leave-one-out error on the 
expansion of the variance as the truncation set $\ca_{v}$ for $v(\ve{x})$. 
After running \Cref{alg:MFGLS}, we apply the two-round optimization strategy 
described in the previous section to build the GLaM corresponding to the 
selected truncation schemes.
\par
There are several parameters to be determined in \Cref{alg:MFGLS}. 
In the following examples and applications, we set the candidate degrees 
$\ve{p}_1 = \acc{0,\ldots,10}$ for $\lambda^{\PC}_1(\ve{x})$, and $\ve{p}_2 = 
\acc{0,\ldots,5}$ for $\lambda^{\PC}_2(\ve{x})$. $\ve{p}_1$ contains 
high degrees to approximate possibly highly nonlinear mean functions, the 
accuracy of which is crucial for basis selections for $\lambda_2(\ve{x})$ in 
\Cref{alg:MFGLS}. $\ve{p}_2$ is set to have degrees up to 5, allowing 
relatively complex variations. The lists of $q$-norms 
are $\ve{q}_1 =\ve{q}_2 = \acc{0.4,0.5,0.6,0.7,0.8,0.9,1}$, which 
contains the full basis. The total number of FGLS iterations is set to 
$N_{\rm FGLS} = 10$ which, according to our experience, is enough to 
find an appropriate truncated set for $\lambda^{\PC}_2(\ve{x})$.

\section{Application examples}
\label{sec:examples}
In this section, we validate the proposed algorithm on two analytical examples and two case studies in mathematical finance and epidemiology. In the four cases, the response distributions do not belong to a single parametric family, so as to test the flexibility of the proposed method. In addition, we compare the performance of GLaMs with the nonparametric kernel conditional density estimator from the package \texttt{np} \cite{Hayfield2008} implemented in R. The latter performs a thorough leave-one-out cross-validation with a multistart strategy to choose the bandwidths \cite{Hall2004}, which is one of the state-of-the-art kernel estimation methods. The surrogate model built by this method is referred to as the kernel conditional density estimator (KCDE). 

Alongside GLaM and KCDE, another surrogate model, the heteroskedastic Gaussian process (denoted by GP), is also considered. This model assumes that the response distribution is Gaussian, and the mean and variance functions are represented  by Gaussian processes. We apply the method proposed by Binois et. al. \cite{Binois2019} which adopts a sequential design strategy to actively balance the trade-off between replications and explorations. The algorithm is available in the package \texttt{hetGP} in R. However, due to the sequential design (the new points are added one by one), building such a surrogate can be very time-consuming (cf. \Cref{sec:ex2} for details). Consequently, we present the comparisons with \texttt{hetGP}  only for the first two examples.

Moreover, for comparison purposes, we consider another ``Gaussian'' surrogate model where we represent the response distribution with a normal distribution. The associated mean and variance, which are functions of the input $\ve{x}$, are not fitted to data but set to the \emph{true} values of the simulator. In other words, this surrogate model should represent the ``oracle'' of Gaussian-type mean-variance surrogate models, such as the ones presented in \cite{Marrel2012,Binois2018}.
\par
We use Latin hypercube sampling \cite{McKay1979} to generate the experimental design for GLaM and KCDE. The stochastic simulator is only evaluated once for each vector of input parameters. The associated QoI values are used to construct 
surrogate models with the proposed estimation procedure in \Cref{sec:local}. In contrast, the construction of the GP relies on a sequential design strategy which adaptively find new points to evaluate \cite{Binois2019}. Hence, we use Latin hypercube sampling of $20\%$ of the total number of model runs to initiate the process. Then, the algorithm proceeds by iteratively looking for points to evaluate and updating the surrogate.
\par
To quantitatively assess the performance of the surrogate model, we define an 
error measure between the underlying model and the emulator by 
\begin{equation}\label{eq:Rlevel1} 
	\varepsilon = \Esp{d\left(Y(\ve{X}),\hat{Y}(\ve{X})\right)},
\end{equation}
where $Y(\ve{X})$ is the model response, $\hat{Y}(\ve{X})$ corresponds to
that of the surrogate, $d\left(Y_1,Y_2\right)$ denotes the contrast 
measure between the probability distributions of $Y_1$ and $Y_2$, and 
the expectation is taken with respect to $\ve{X}$.
In this study, we use the \emph{normalized Wasserstein distance}, defined by
\begin{equation}
	d\left(Y_1,Y_2\right) = \frac{d_{\rm 
			WS}\left(Y_1,Y_2\right)}{\sigma\left(Y_1\right)},
\end{equation}
where $d_{\rm WS}$ is the \emph{Wasserstein distance of order two} 
\cite{Villani2008} defined by
\begin{equation}\label{eq:WS}
	d_{\rm WS}\left(Y_1,Y_2\right) \eqdef \norm{Q_1 - Q_2}_2 =  
	\sqrt{\int_{0}^{1}\left(Q_1(u) - Q_2(u)\right)^2\D u}\, ,
\end{equation}
where $Q_1$ and $Q_2$ are the quantile functions of $Y_1$ and $Y_2$, 
respectively. As a summary, by combining \cref{eq:Rlevel1} and \cref{eq:WS} the global error reads
	\begin{equation}\label{eq:5.4}
		\varepsilon = \int_{\cd_{\ve{X}}} \sqrt{\int_0^1 \left(Q_{Y(\ve{x})}(u) - Q_{\hat{Y}(\ve{x})} (u) \right)^2 \, \D u} \, \frac{f_{\ve{X}} (\ve{x})}{\sqrt{\Var{Y(\ve{x})}}} \, \D\ve{x}
	\end{equation}

Following this definition, the standard deviation $\sigma_{Y_1}$ can be seen as 
the Wasserstein distance between the distribution of $Y_1$ and a degenerate 
distribution concentrated at the mean value $\mu_{Y_1}$. As a result, the 
Wasserstein distance normalized by the standard deviation can be interpreted as 
the ratio of the error related to emulating the distribution of $Y_1$ by that 
of $Y_2$, and to using the mean value $\mu_{Y_1}$ as a proxy of $Y_1$.
\par
Because $d_{\rm WS}$ is invariant under translation, the normalized 
Wasserstein distance is invariant under both translation and scaling; that is,
\begin{equation}
	\forall a \in \Rr\setminus {0}, b \in \Rr \quad
	\frac{d_{\text{WS}}\left(a\,Y_1+b,a\,Y_2+b\right)}{\sigma(a\,Y_1+b)} = 
	\frac{d_{\text{WS}}\left(Y_1,Y_2\right)}{\sigma(Y_1)}.
\end{equation}
\par
To calculate the expectation in \cref{eq:Rlevel1}, we use Latin hypercube 
sampling to generate a test set $\cx_{\rm test}$ of size $N_{\rm test} = 
1{,}000$ in the input space. The normalized Wasserstein distance is calculated 
for each $\ve{x} \in \cx_{\rm test}$ and then averaged by $N_{\rm test}$. 
\par
For the last two case studies, the analytical response distribution of $Y(\ve{x})$ is unknown. To characterize it, we repeatedly evaluate the model $10^4$ times for $\ve{x}$. In addition, we also compare some summarizing statistical quantity $b(\ve{x})$ of the model response $Y(\ve{x})$, such as the mean $\Esp{Y(\ve{x})}$ or variance $\Var{Y(\ve{x})}$, depending on the focus of the application. Note that $b(\ve{x})$ is a deterministic function of input variables, and we define the normalized mean-squared error by \begin{equation}\label{eq:Berror}
	\varepsilon_b= \frac{\sum_{i=1}^{N_{\rm test}} \left(b^{(i)}_{S} - \hat{b}^{(i)}\right)^2}{\sum_{i=1}^{N_{\rm test}}\left(\hat{b}^{(i)} - \bar{\hat{b}}\right)^2 }, \text{ with }
	\bar{\hat{b}} = \frac{1}{N_{\rm test}} \sum_{i=1}^{N_{\rm test}} \hat{b}^{(i)},
\end{equation}
where $b^{(i)}_{S}$ is the value predicted by the surrogate for $\ve{x}^{(i)} \in \cx_{\rm test}$, and $\hat{b}^{(i)}$ denotes the quantity estimated from $10^4$ replicated runs of the original stochastic simulator for $\ve{x}^{(i)}$. The error $\varepsilon_b$ defined in \cref{eq:Berror} indicates how much of the variance of $b(\ve{X})$ cannot be explained by $b_{S}(\ve{X})$ estimated from surrogate model.
\par
Experimental designs of various size $N \in 
\acc{250;500;1{,}000;2{,}000;4{,}000}$ are investigated to study the convergence of 
the proposed method. Each scenario is run 50 times with independent experimental designs to account for statistical uncertainty in the 
random design for GLaM and KCDE. For GP, $N$ corresponds to the total number of model runs. We repeat 10 times for each value of $N$ (i.e., 10 heteroskedastic Gaussian processes are built using the same number of model runs). As a consequence, error estimates for each $N$ are represented by box plots.

\subsection{Example 1: a two-dimensional simulator}
\label{sec:ex1}
The first example is the \emph{Black--Scholes} model used for stock prices 
\cite{McNeil2005}:
\begin{equation}
	\D S_t = x_1\, S_t\,\D t + x_2\, S_t \,\D W_t,  \label{eq:GBM} 
\end{equation}
where $\ve{x} = (x_1,x_2)^T$ are the input parameters, corresponding to the 
expected return rate and volatility of a stock, respectively. $W_t$ is a 
standard Wiener process, which represents the source of stochasticity. 
\Cref{eq:GBM} is a stochastic differential equation whose solution 
$S_t(\ve{x})$ is a stochastic process for given parameters $\ve{x}$. 
Note that we explicitly express 
$\ve{x}$ in $S_t(\ve{x})$ to emphasize that $\ve{x}$ are input parameters, but 
the stochastic equation is defined with respect to time. Without loss of 
generality, we set the initial condition to $S_0(\ve{x}) = 1$.
\par
In this example, we are interested in $Y(\ve{x}) = S_1(\ve{x})$, which 
corresponds to the stock value in one year \ie $t=1$. We set $X_1 \sim 
\cu(0,0.1)$ and $X_2\sim \cu(0.1,0.4)$ to represent the input uncertainty, 
where the ranges are selected based on parameters calibrated from real data 
\cite{Reddy2016}.
\par
The solution to \cref{eq:GBM} can be derived using It\^o calculus 
\cite{Shreve2004}: $Y(\ve{x})$ follows a lognormal distribution defined by
\begin{equation}\label{eq:GMB_solu}
	Y(\ve{x}) \sim \cl\cn\left(x_1-\frac{x^2_2}{2},x_2\right).
\end{equation}
As the distribution of $Y(\ve{x})$ is known, it is not necessary to simulate 
the whole process $S_t(\ve{x})$ with time integration to evaluate 
$S_1(\ve{x})$. Instead, we can directly generate samples from the distribution 
defined in \cref{eq:GMB_solu}. 

\begin{figure}[!htbp]
	\centering
	\begin{subfigure}{.45\linewidth}
		\centering
		\includegraphics[height=0.7\linewidth, keepaspectratio]{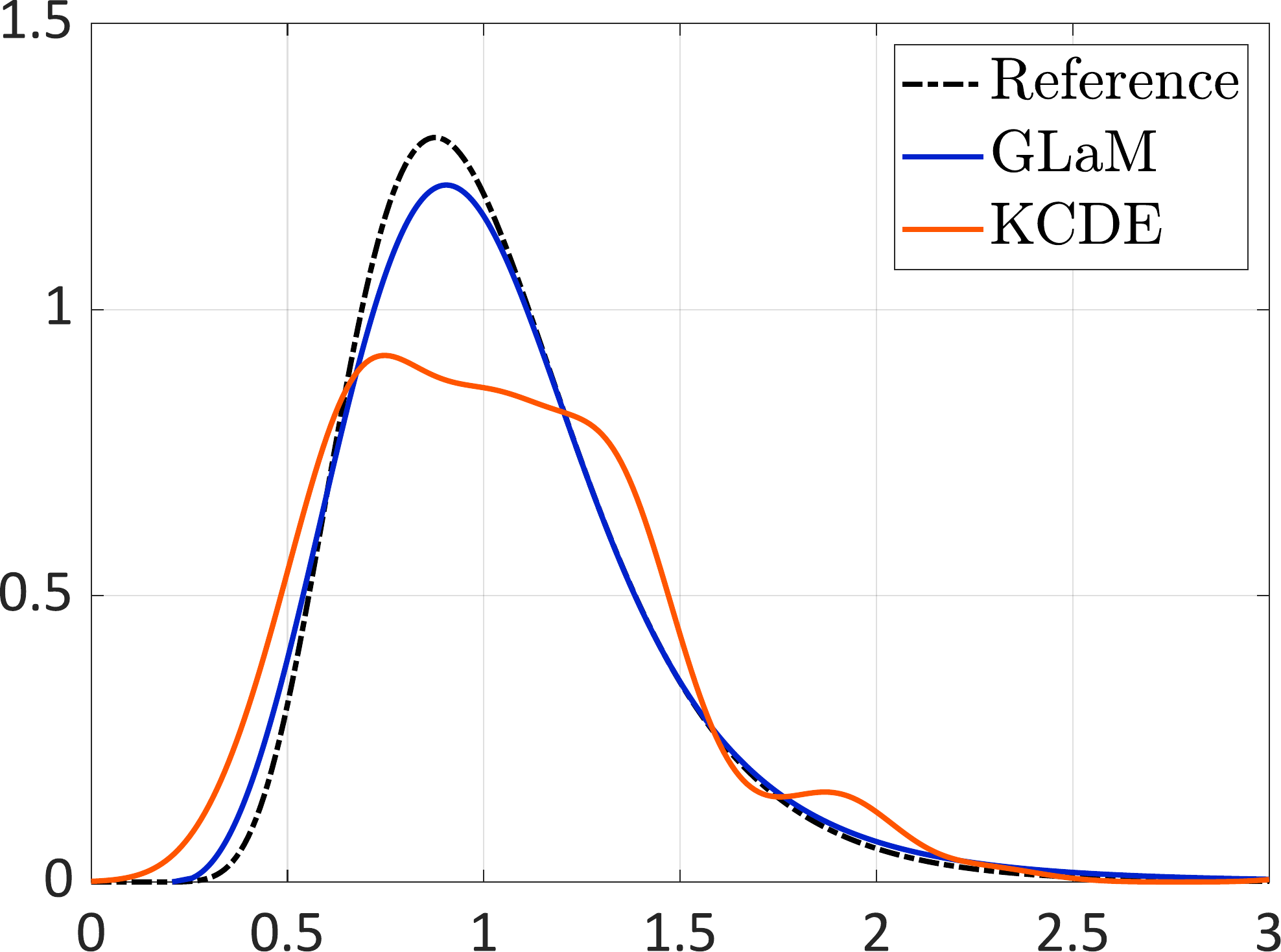}
		\caption{PDF for $\ve{x} = (0.03,0.33)^T$}
	\end{subfigure}
	\hspace{0.5cm}
	\begin{subfigure}{.45\linewidth}
		\centering
		\includegraphics[height=0.7\linewidth, keepaspectratio]{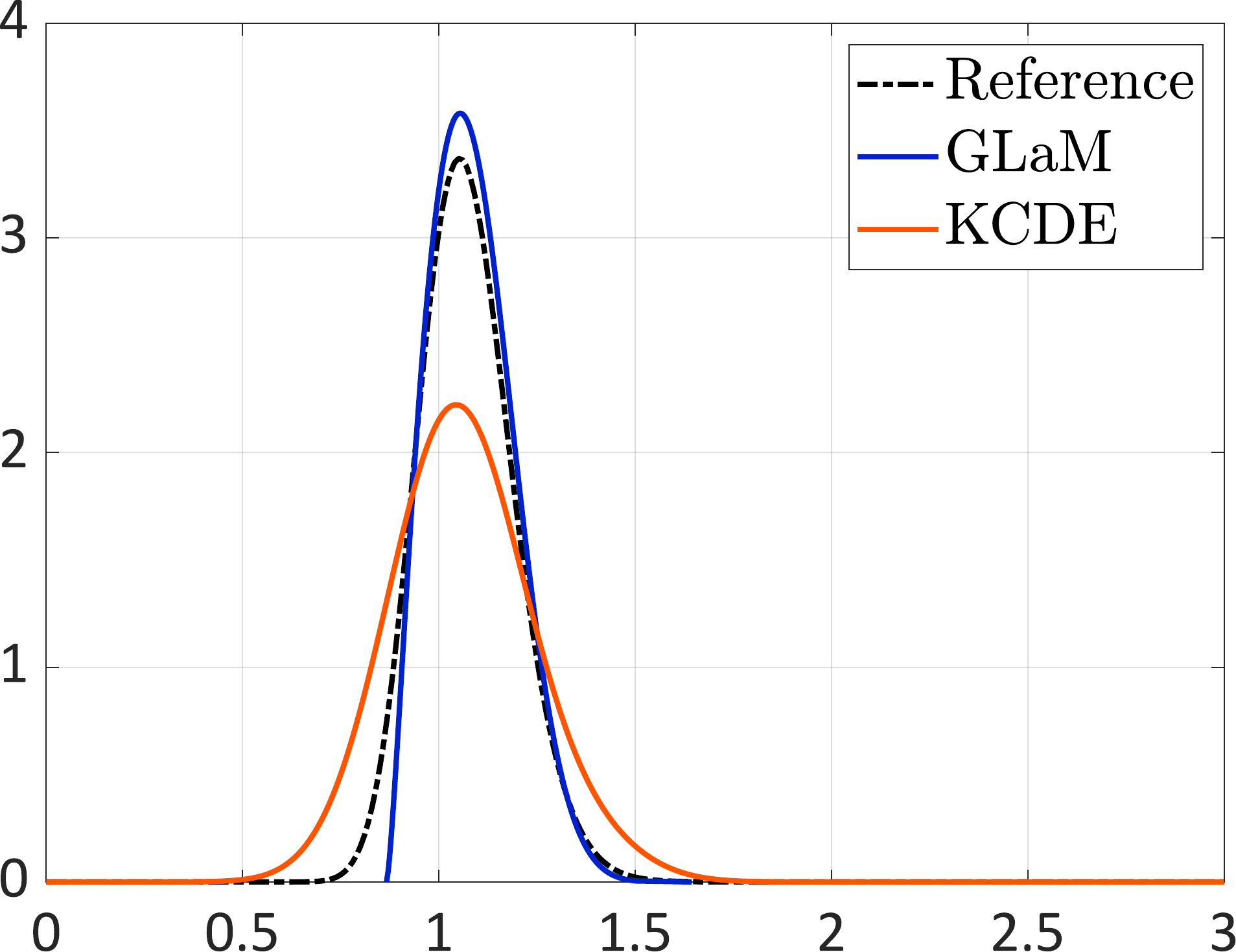}
		\caption{PDF for $\ve{x} = (0.07,0.11)^T$}
	\end{subfigure}
	\caption{Example 1 --- Comparisons of the emulated PDF, $N=500$.}
	\label{fig:GBMpdf}
\end{figure}

\begin{figure}[!htbp]
	\centering
	\begin{subfigure}{.44\linewidth}
		\centering
		\includegraphics[height=0.82\linewidth, keepaspectratio]{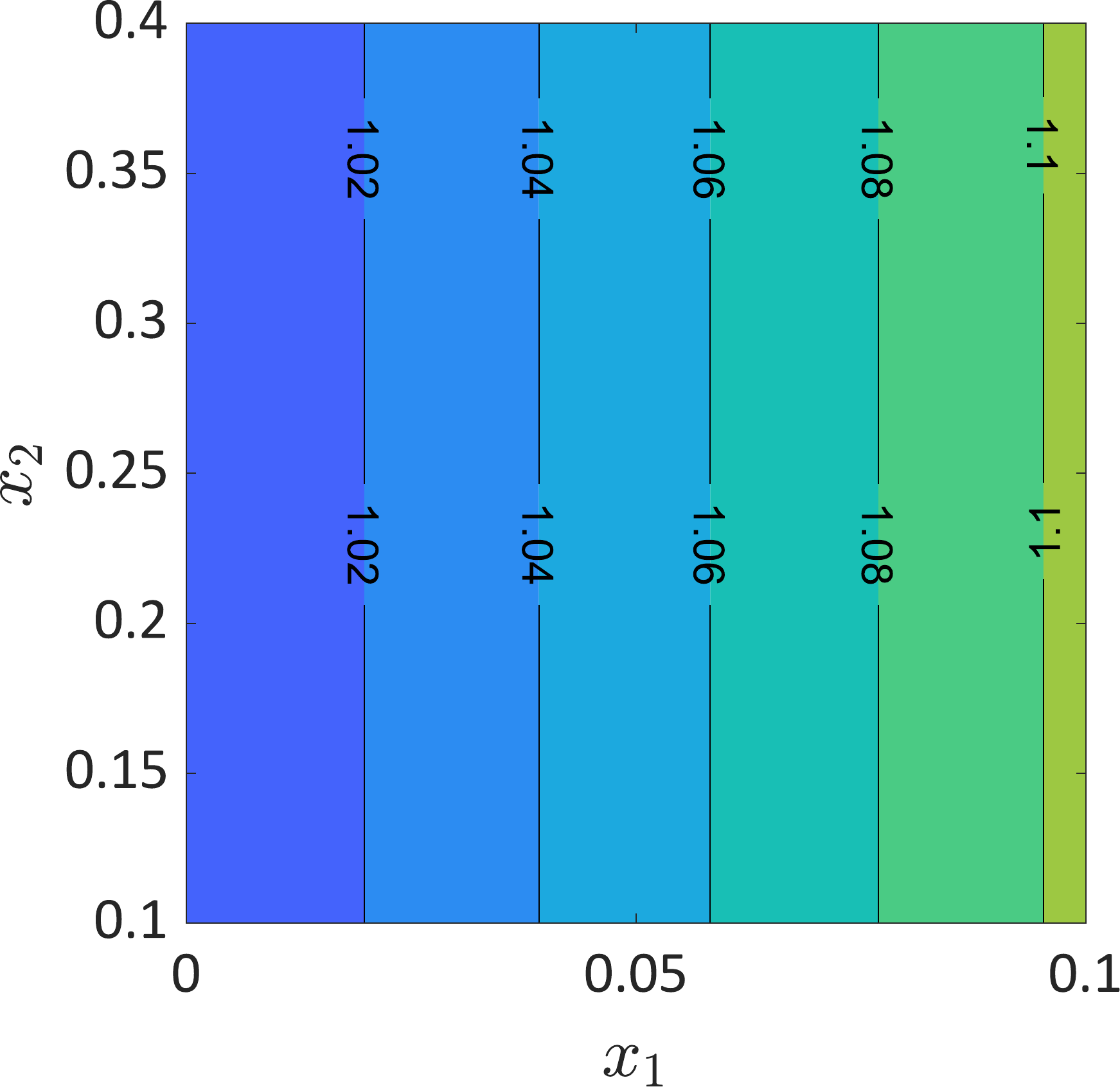}
		\caption{Reference}
	\end{subfigure}
	\begin{subfigure}{.44\linewidth}
		\centering
		\includegraphics[height=0.82\linewidth, keepaspectratio]{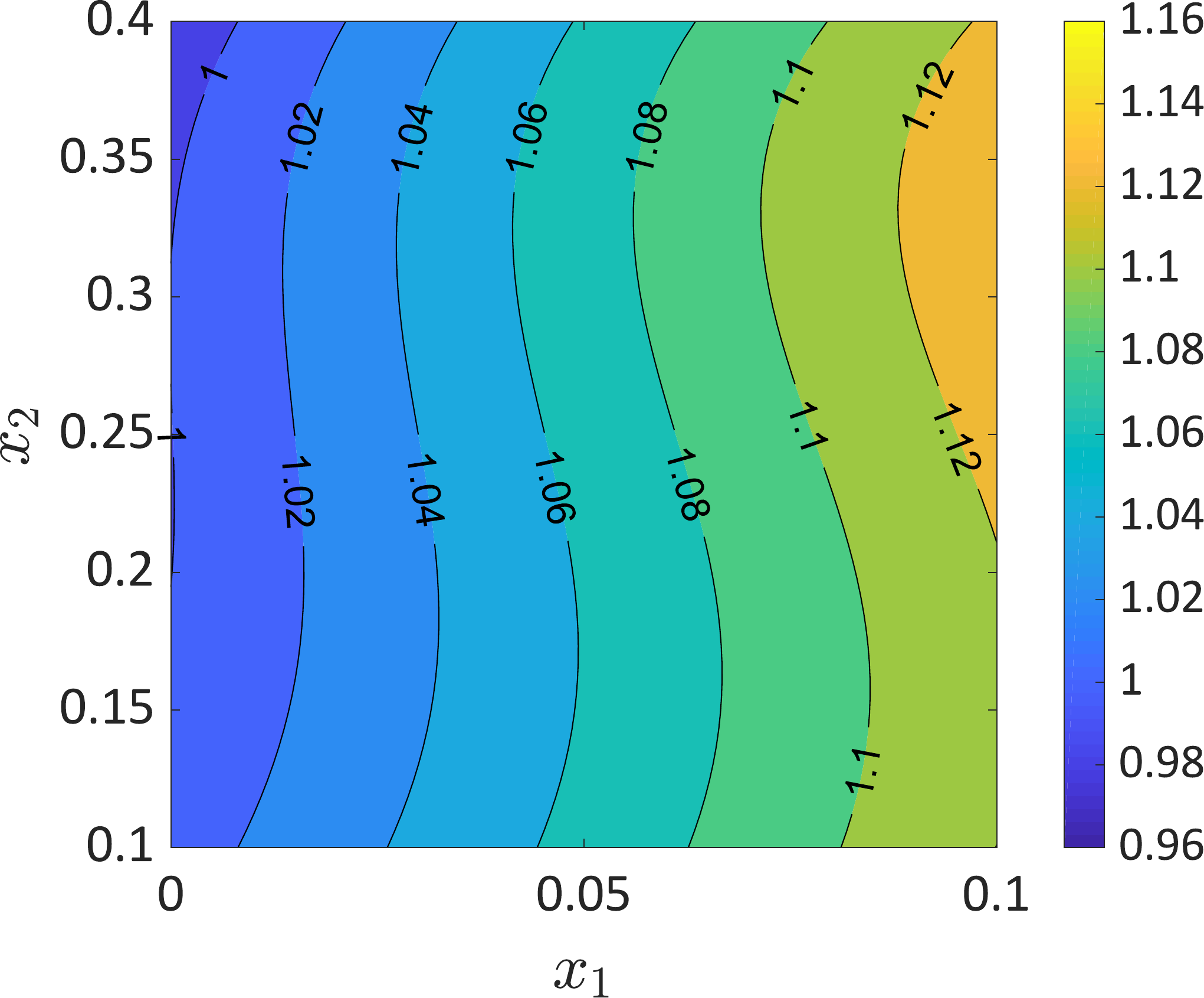}
		\caption{GLaM}
	\end{subfigure}
	\begin{subfigure}{.44\linewidth}
		\centering
		\includegraphics[height=0.82\linewidth, keepaspectratio]{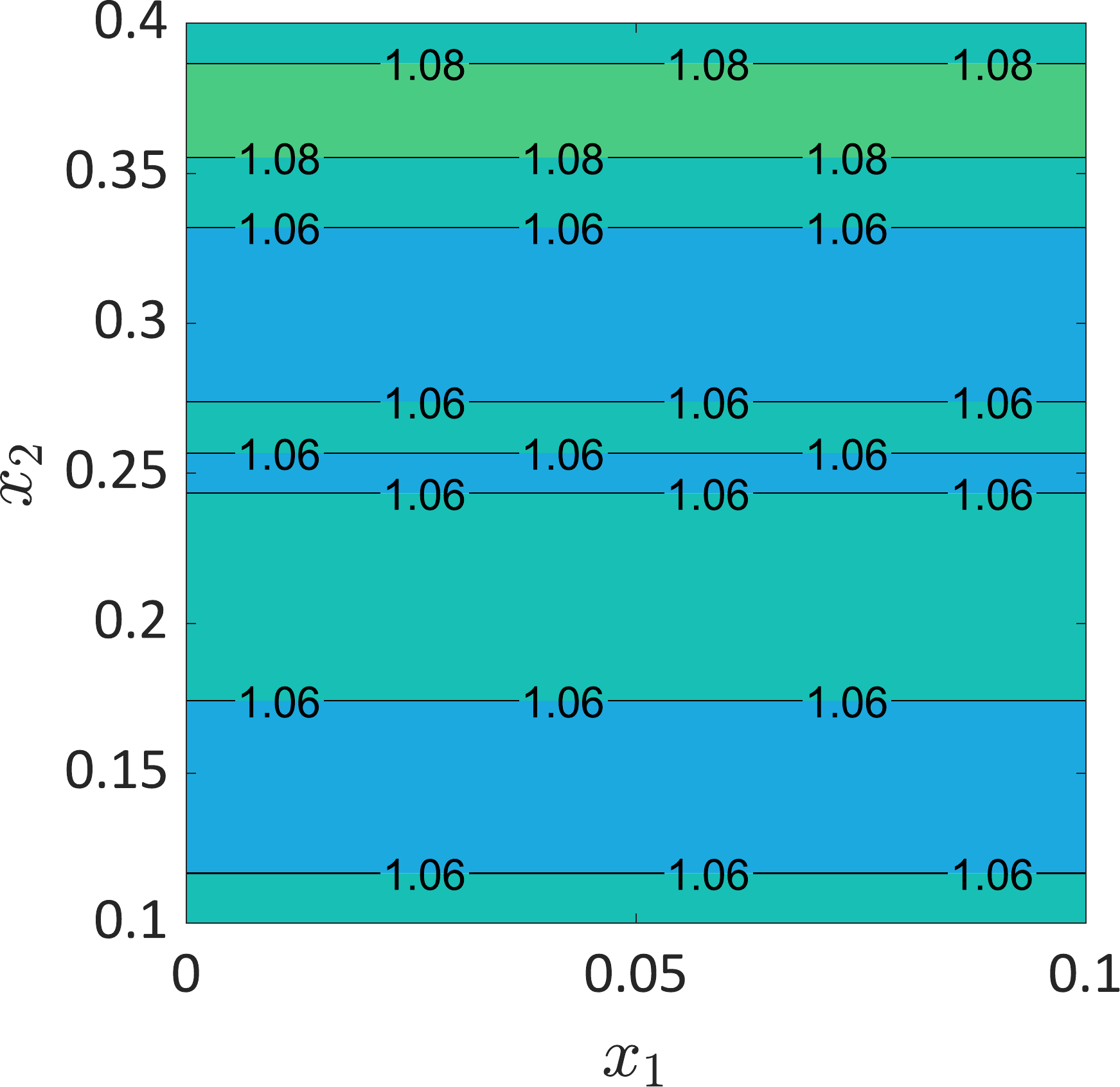}
		\caption{KCDE}
	\end{subfigure}
	\begin{subfigure}{.44\linewidth}
		\centering
		\includegraphics[height=0.82\linewidth, keepaspectratio]{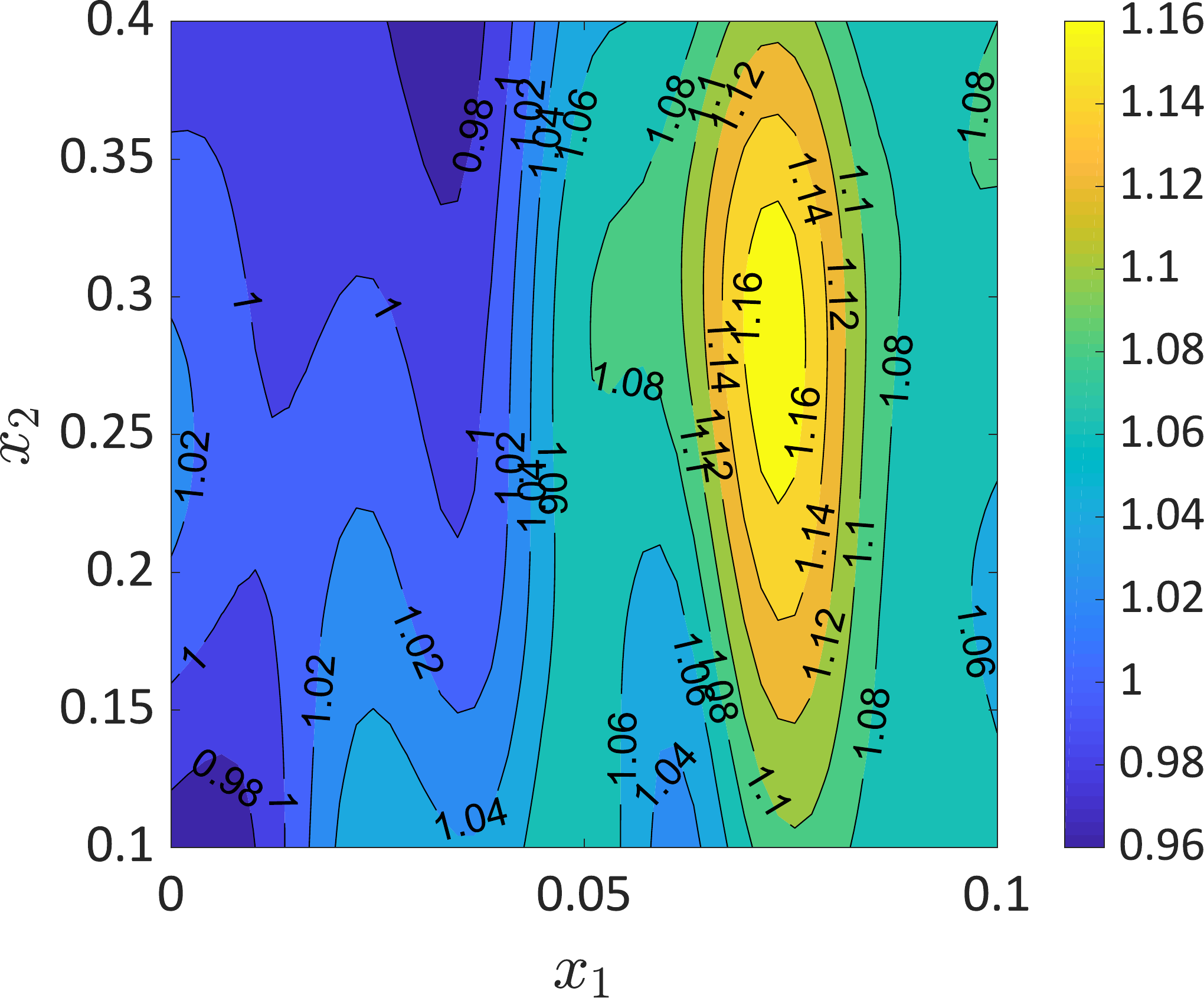}
		\caption{GP}
	\end{subfigure}
	\caption{Example 1 --- Comparisons of the mean function estimation, $N=500$.}
	\label{fig:GBMm}
\end{figure}

\begin{figure}[!htbp]
	\centering
	\begin{subfigure}{.44\linewidth}
		\centering
		\includegraphics[height=0.82\linewidth, keepaspectratio]{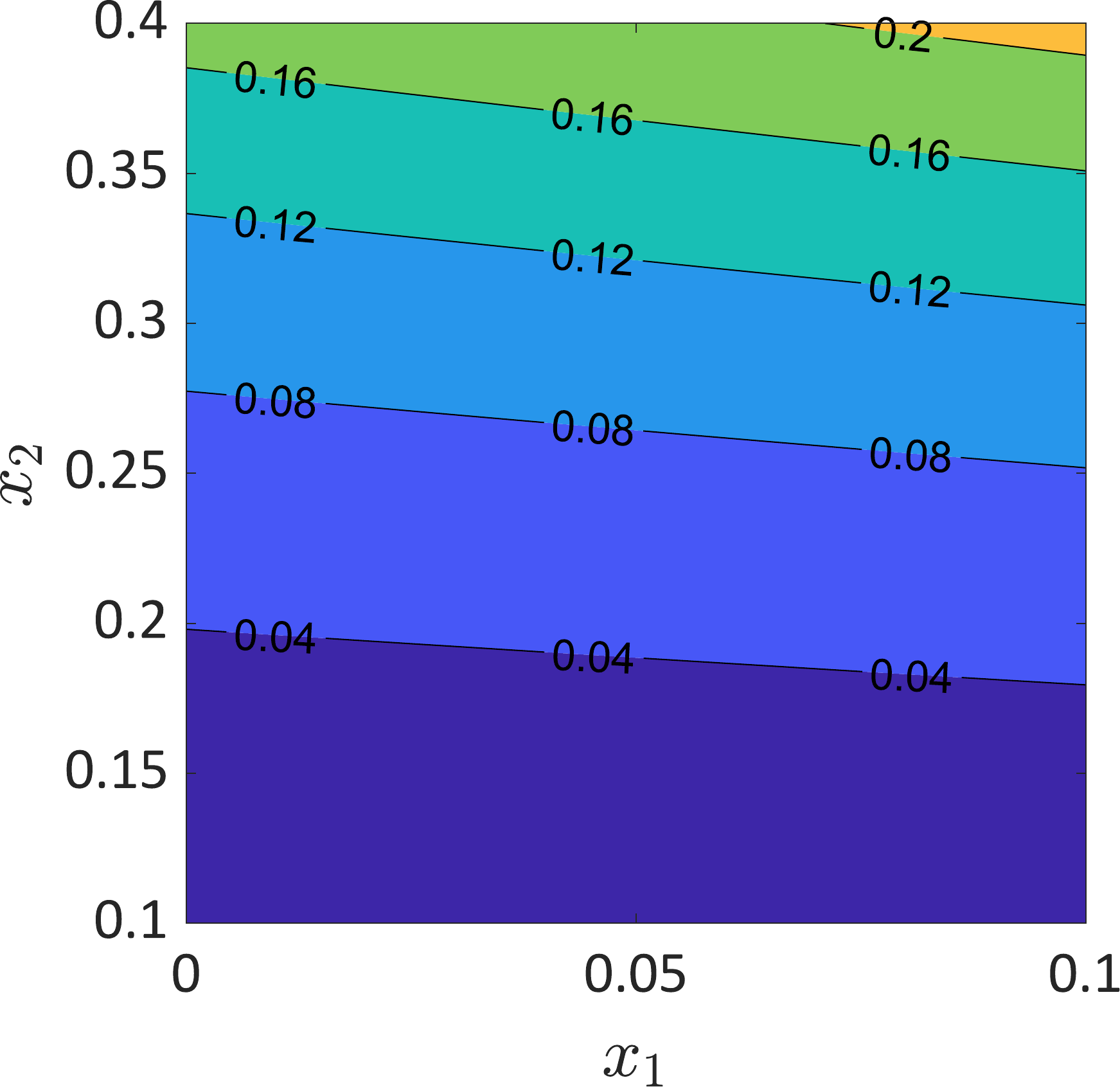}
		\caption{Reference}
	\end{subfigure}
	\begin{subfigure}{.44\linewidth}
		\centering
		\includegraphics[height=0.82\linewidth, keepaspectratio]{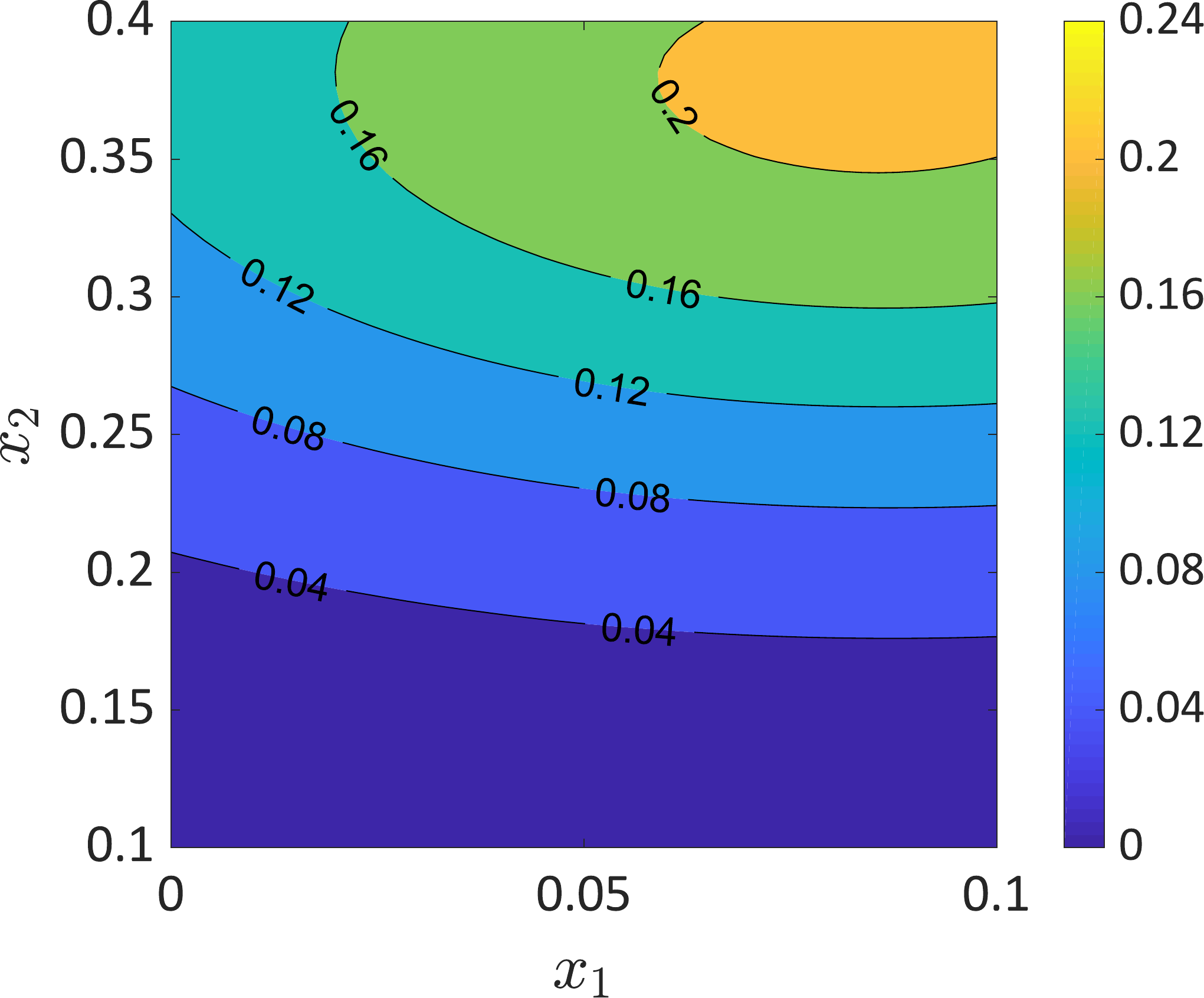}
		\caption{GLaM}
	\end{subfigure}
	\begin{subfigure}{.44\linewidth}
		\centering
		\includegraphics[height=0.82\linewidth, keepaspectratio]{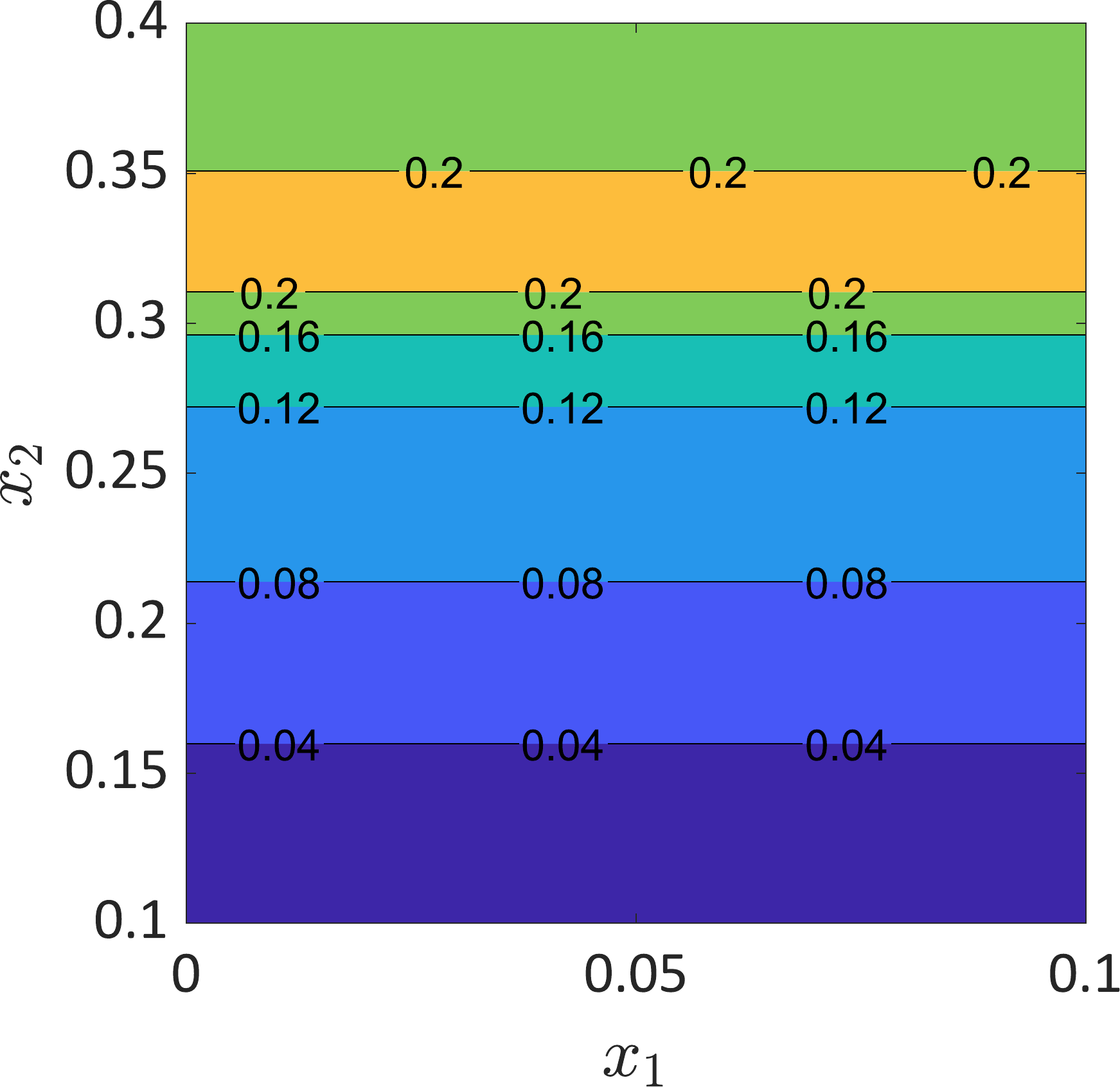}
		\caption{KCDE}
	\end{subfigure}
	\begin{subfigure}{.44\linewidth}
		\centering
		\includegraphics[height=0.82\linewidth, keepaspectratio]{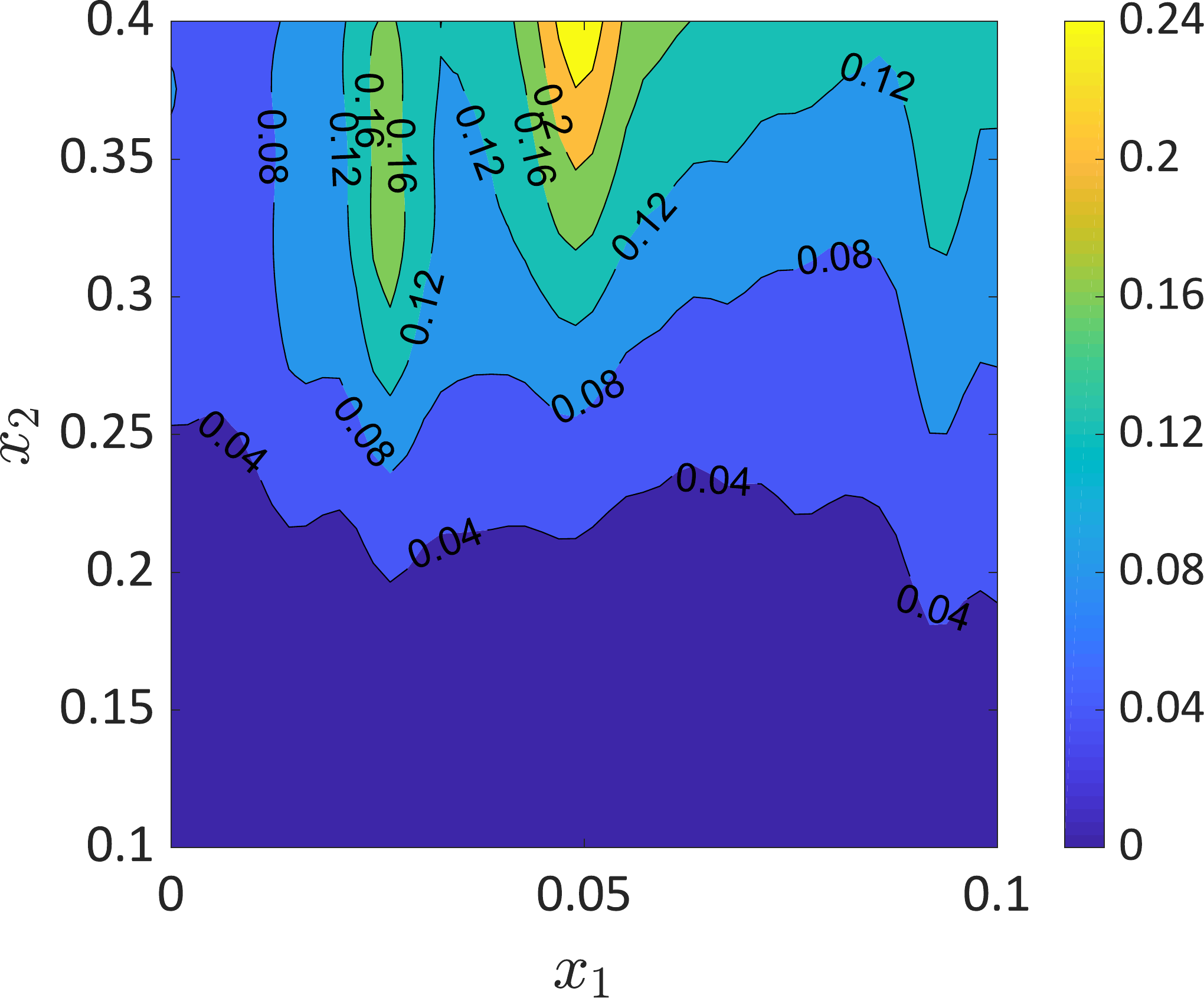}
		\caption{GP}
	\end{subfigure}
	\caption{Example 1 --- Comparisons of the variance function estimation, $N=500$.}
	\label{fig:GBMv}
\end{figure}

\Cref{fig:GBMpdf} shows two PDFs predicted by a GLaM and a KCDE built on an experimental design of size $N=500$. We observe that with $500$ model runs, the KCDE yields PDFs with spurious oscillations and demonstrates relatively poor representation of the bulk. In contrast, the GLaM can better approximate the underlying response PDF in terms of both magnitude and shape variations. \Cref{fig:GBMm,fig:GBMv} compare the mean and variance function predicted by the GLaM, KCDE, and GP. The analytical mean function following \cref{eq:GMB_solu} is $\exp(x_1)$, which only depends on the first variable. The GLaM gives an accurate estimate of the mean function, whereas the KCDE captures a wrong dependence, and GP produces a rather complex structure. For the variance function, the GLaM yields a more detailed trend than the KCDE and GP.
\par
For quantitative comparisons, \Cref{fig:GBM_WS} summarizes the error measure 
\cref{eq:Rlevel1} with respect to the size of experimental design. The 
accuracy of the oracle normal approximation is also reported (black dashed 
line). This error is only due to model misspecifications because we 
use the true mean and variance (however, the true response distribution is lognormal). The GP approach performs rather poorly and converges to the oracle normal approximation when the number of points in the experimental design increases. This means that it can accurately estimate the mean and variance functions for large data sets. However, due to the limitation of the Gaussian assumption, GP cannot further decrease the error. The average error of GLaMs built on $N=500$ model 
runs are smaller than that of the normal approximation. For $N>500$, GLaMs 
clearly provide more accurate results. KCDEs show a 
slow rate of convergence even in this example of dimension two. In contrast, 
GLaMs reveal high efficiency with a faster decrease of the errors. In terms of 
the average error, GLaMs outperform KCDEs for all sizes of experimental 
design. Furthermore, GLaMs yield an average error near $0.1$ for $N=1{,}000$, 
which can be hardly achieved by KCDEs even with four times more model runs.

\begin{figure}[!htbp]
	\centering
	\includegraphics[width=.65\linewidth, keepaspectratio]{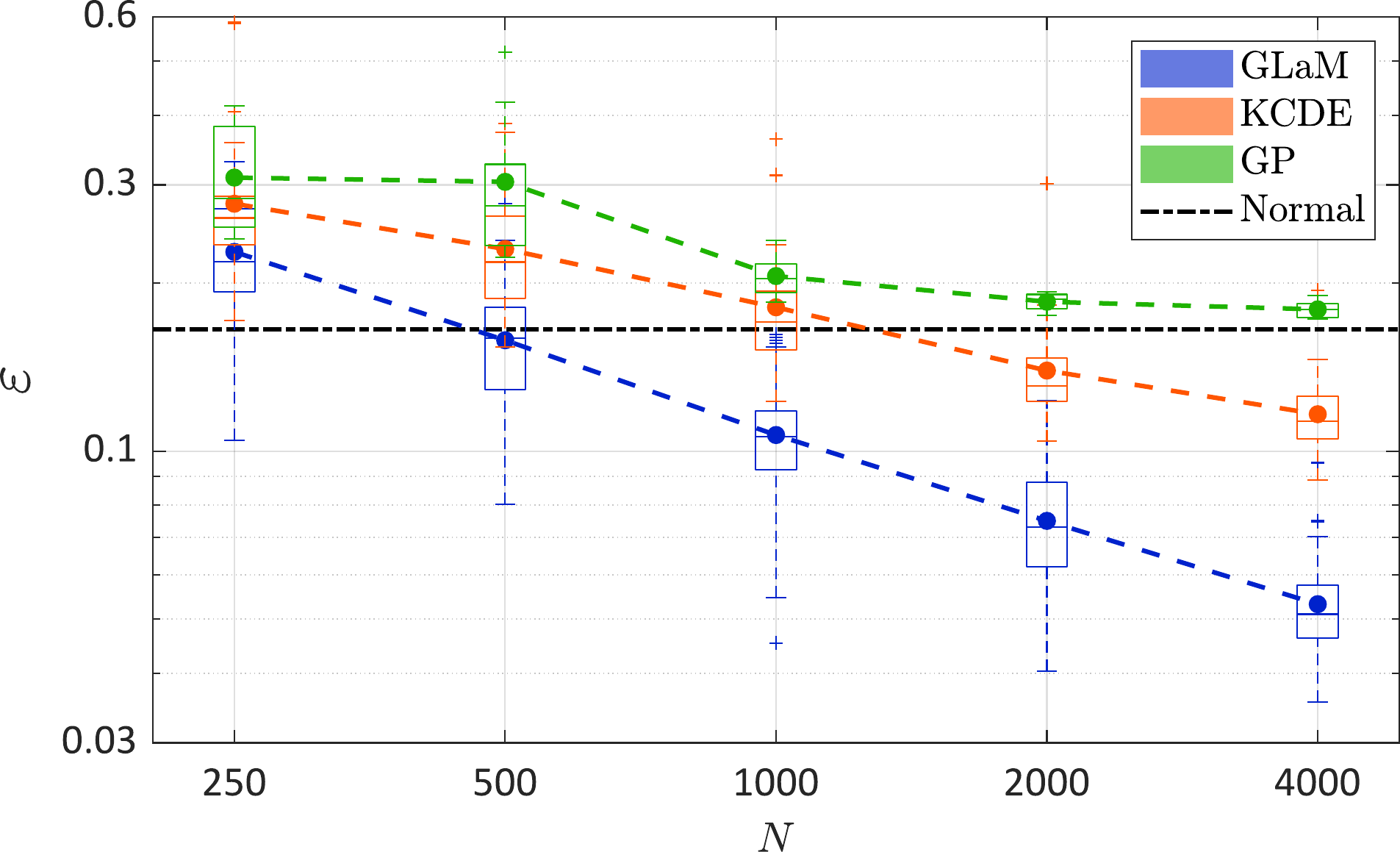}
	\caption{Example 1 --- Comparison of the convergence between GLaMs and KCDEs in terms of the normalized Wasserstein distance as a function of the size of the experimental design. The dashed lines denote the average value over 50 repetitions of the full analysis. The green box plots and associated dashed lines correspond to the errors of the heteroskedastic Gaussian Process with sequential design (10 repetitions for each size of the experimental design). The black dash-dotted line represents the error of the model assuming that the response distribution is normal with the true mean and variance.}
	\label{fig:GBM_WS}
\end{figure}

\subsection{Example 2: a five-dimensional simulator}
\label{sec:ex2}
The second example is given by
\begin{equation}
	Y(\ve{x})=\cm_s(\ve{x},\omega) = \mu(\ve{x})+\sigma(\ve{x}) \cdot Z(\omega),
\end{equation}
where $\ve{X} \sim \cu\left([0,1]^5\right)$ are the input variables, and $Z 
\sim \cn(0,1)$ is the latent variable that introduces the stochasticity. The 
simulator has an input dimension of $M=5$, which is used to show the 
performance of the proposed method in a moderate-dimensional problem. By 
definition, $Y(\ve{x})$ is a Gaussian random variable with mean 
$\mu(\ve{x})$ and standard deviation $\sigma(\ve{x})$ which are defined by
\begin{equation}\label{eq:ex2mv}
	\begin{split}
		\mu(\ve{x}) &= 3-\sum_{j=1}^{5}j\,x_j + \frac{1}{5}\sum_{j=1}^{5}j\,x^3_j +  
		\frac{1}{15}\sum_{j=1}^{5}j\,\log\left((x^2_j+x^4_j)\right) + x_1 \, x^2_2 - 
		x_5 \, x_3 + x_2 \, x_4, \\
		\sigma(\ve{x})& = \exp\left(\frac{1}{10}\sum_{j=1}^{5} j\,x_j\right) ,
	\end{split}
\end{equation}
Thus, this example has a nonlinear mean function and a strong 
heteroskedastic effect: the variance varies between 1 and 20.
\par
\begin{figure}[!htbp]
	\centering
	\begin{subfigure}[b]{.45\linewidth}
		\centering
		\includegraphics[height=0.7\linewidth, keepaspectratio]{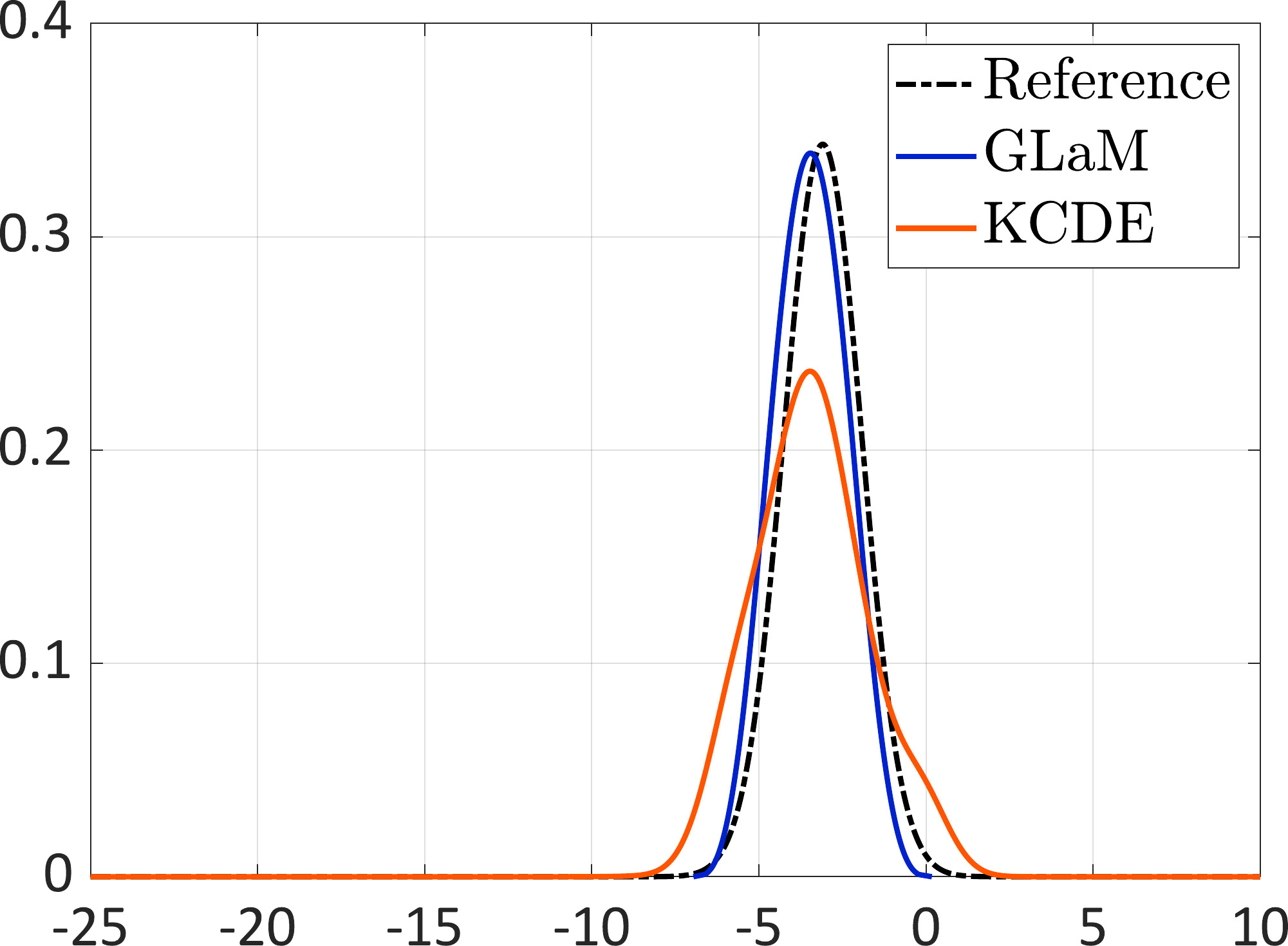}
		\caption{PDF for $\ve{x} = (0.1,0.1,0.1,0.1,0.1)^T$}
		\label{fig:normpdf1}
	\end{subfigure}
	\hspace{0.5cm}
	\begin{subfigure}[b]{.45\linewidth}
		\centering
		\includegraphics[height=0.7\linewidth, keepaspectratio]{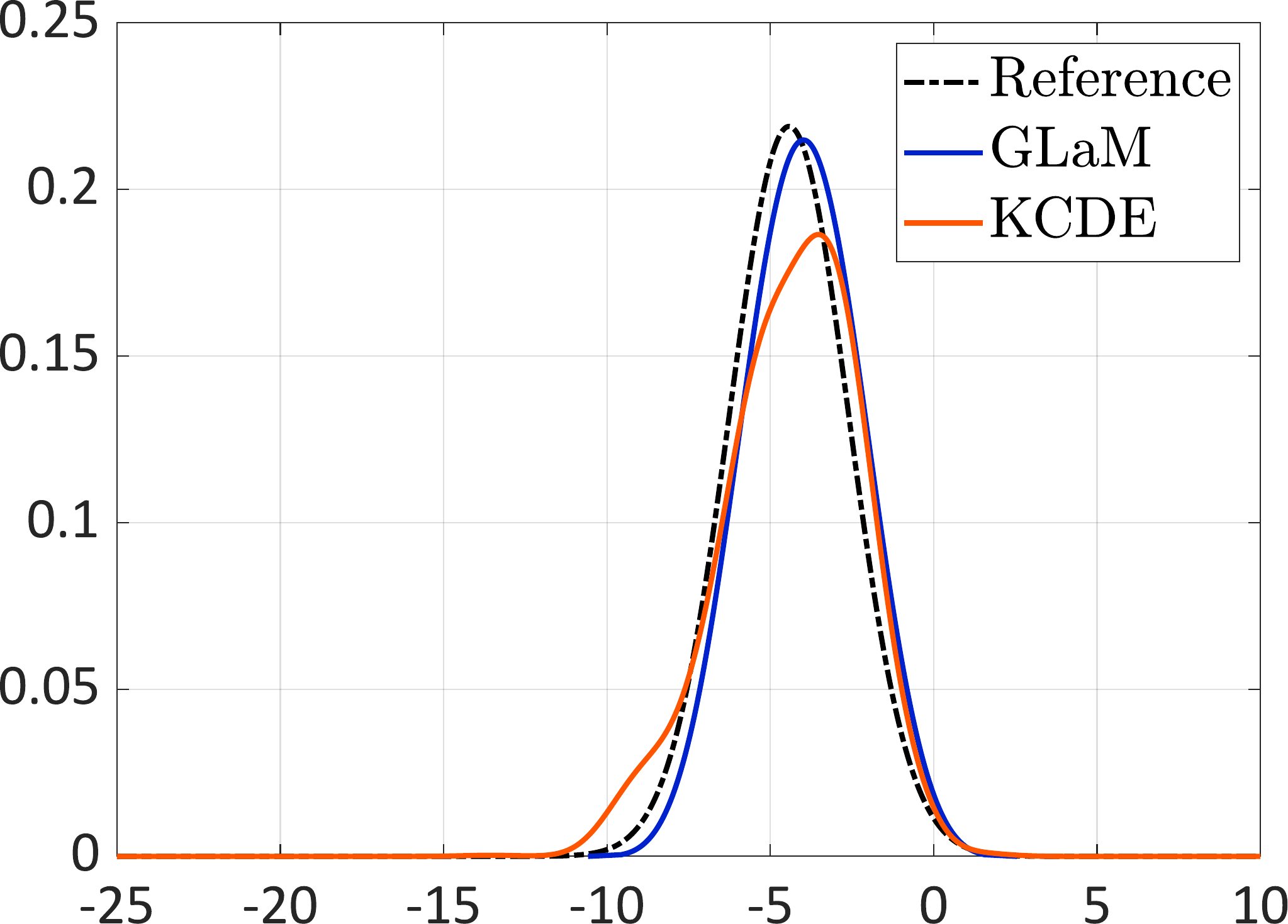}
		\caption{PDF for $\ve{x} = (0.4,0.4,0.4,0.4,0.4)^T$}
		\label{fig:normpdf2}
	\end{subfigure}
	\begin{subfigure}[b]{.45\linewidth}
		\centering
		\includegraphics[height=0.7\linewidth, keepaspectratio]{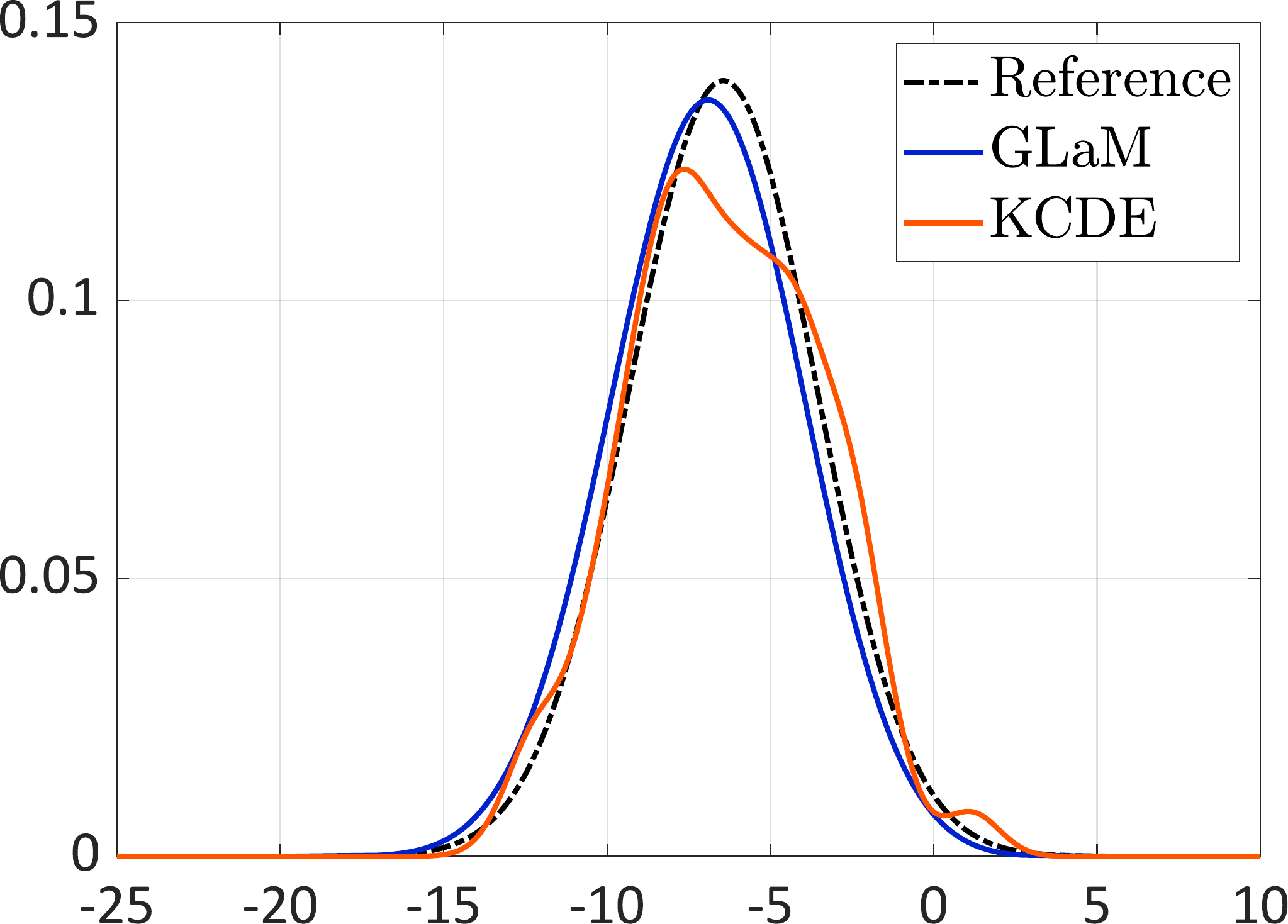}
		\caption{PDF for $\ve{x} = (0.7,0.7,0.7,0.7,0.7)^T$}
		\label{fig:normpdf3}
	\end{subfigure}
	\hspace{0.5cm}
	\begin{subfigure}[b]{.45\linewidth}
		\centering
		\includegraphics[height=0.7\linewidth, keepaspectratio]{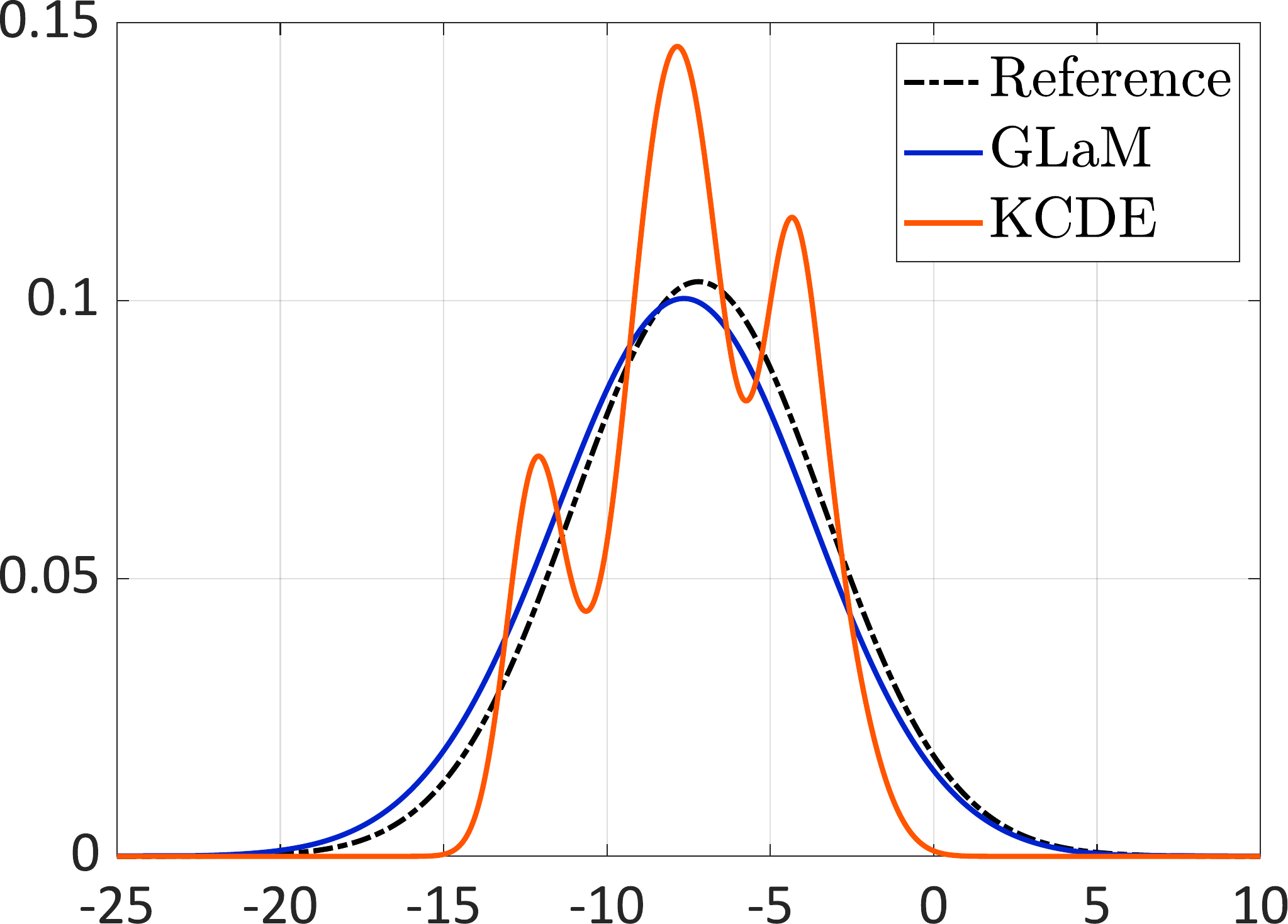}
		\caption{PDF for $\ve{x} = (0.9,0.9,0.9,0.9,0.9)^T$}
		\label{fig:normpdf4}
	\end{subfigure}
	\caption{Example 2 --- Comparisons of the emulated PDF, $N=1{,}000$. Variance 
		values $1.35$, $3.32$, $8.17$, $14.88$ from (a) to (d)}
	\label{fig:normpdf}
\end{figure}
\begin{figure}[!htbp]
	\centering
	\begin{subfigure}{.44\linewidth}
		\centering
		\includegraphics[height=0.82\linewidth, keepaspectratio]{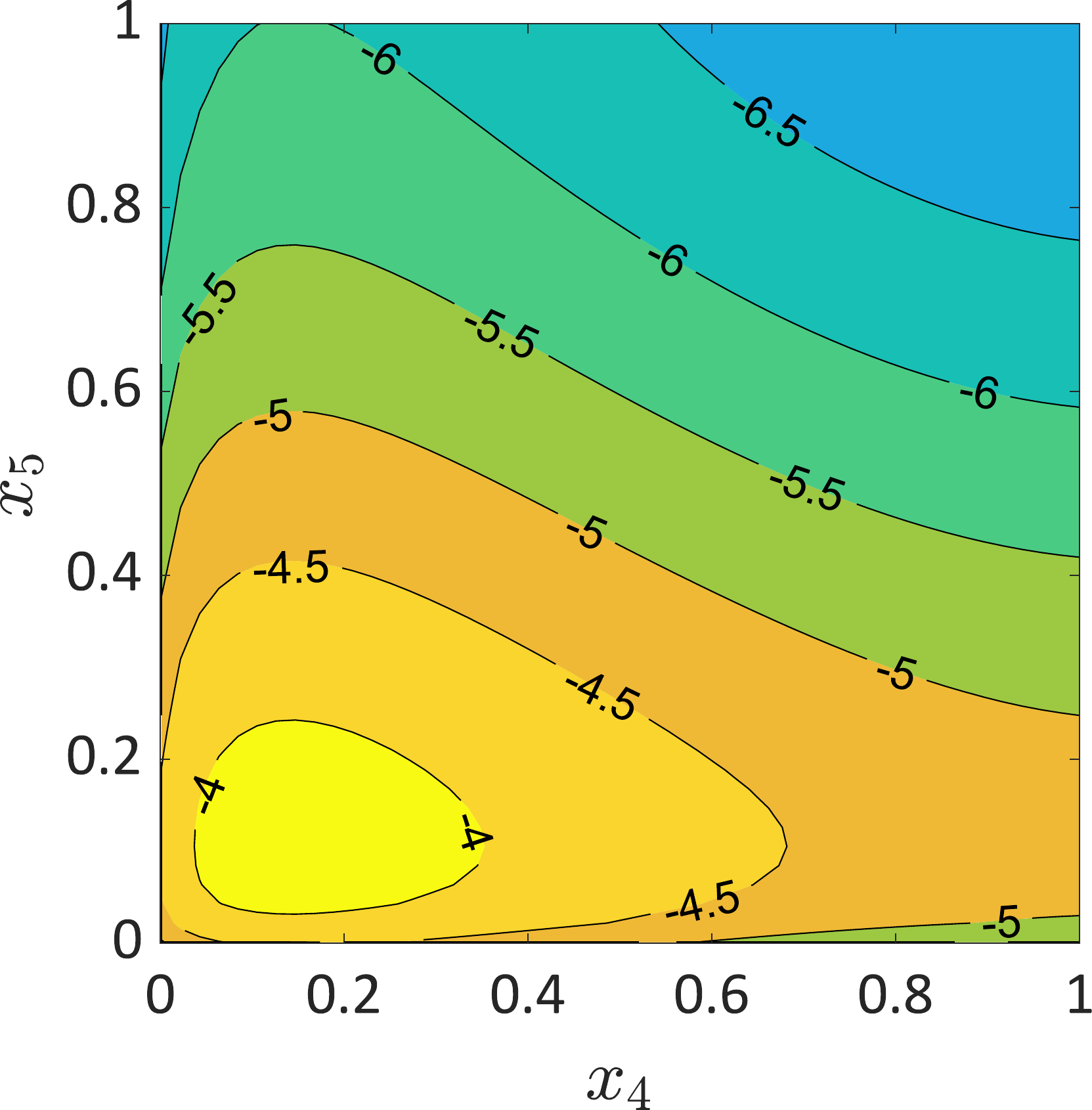}
		\caption{Reference}
	\end{subfigure}
	\begin{subfigure}{.44\linewidth}
		\centering
		\includegraphics[height=0.82\linewidth, keepaspectratio]{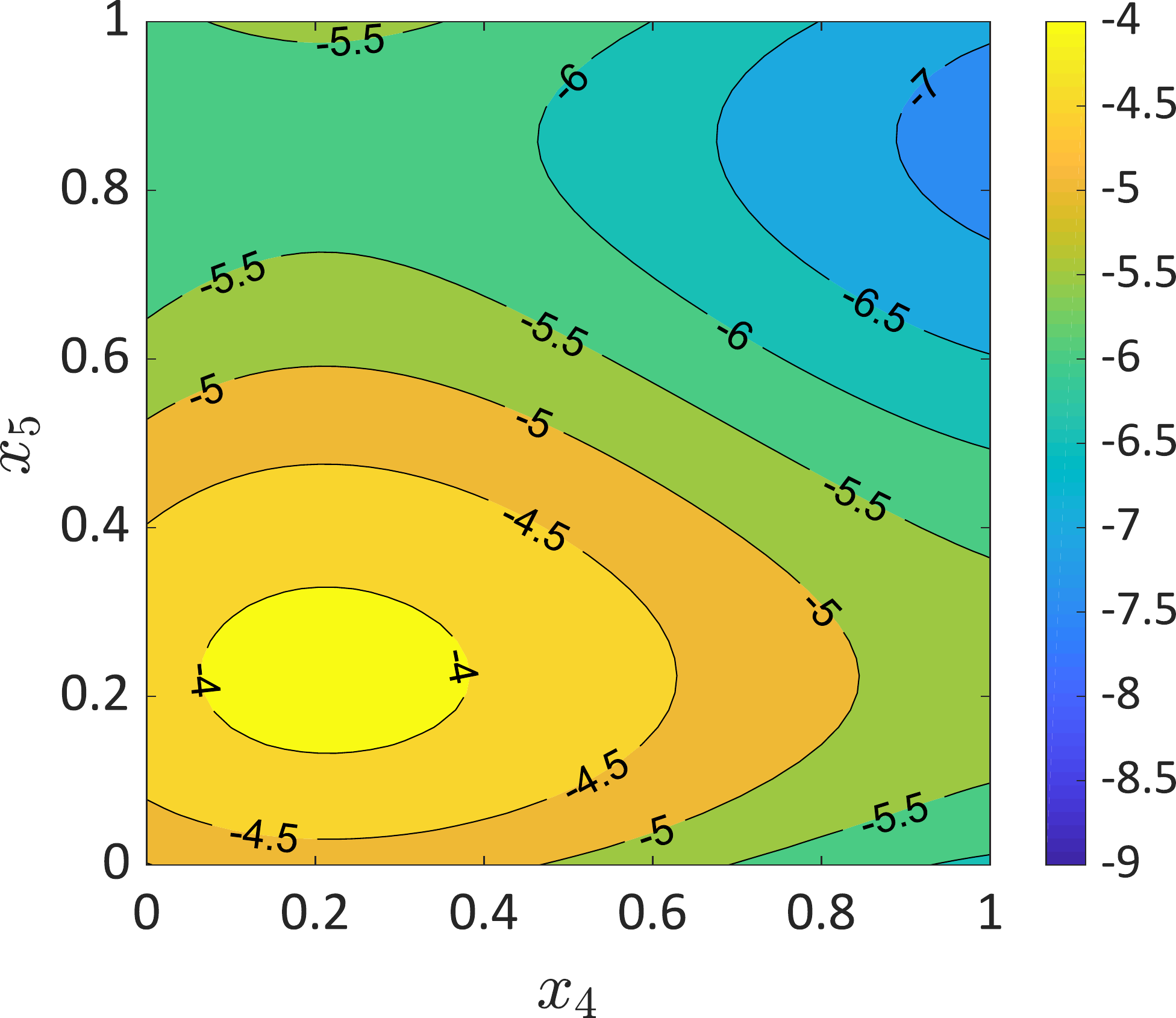}
		\caption{GLaM}
	\end{subfigure}
	\begin{subfigure}{.44\linewidth}
		\centering
		\includegraphics[height=0.82\linewidth, keepaspectratio]{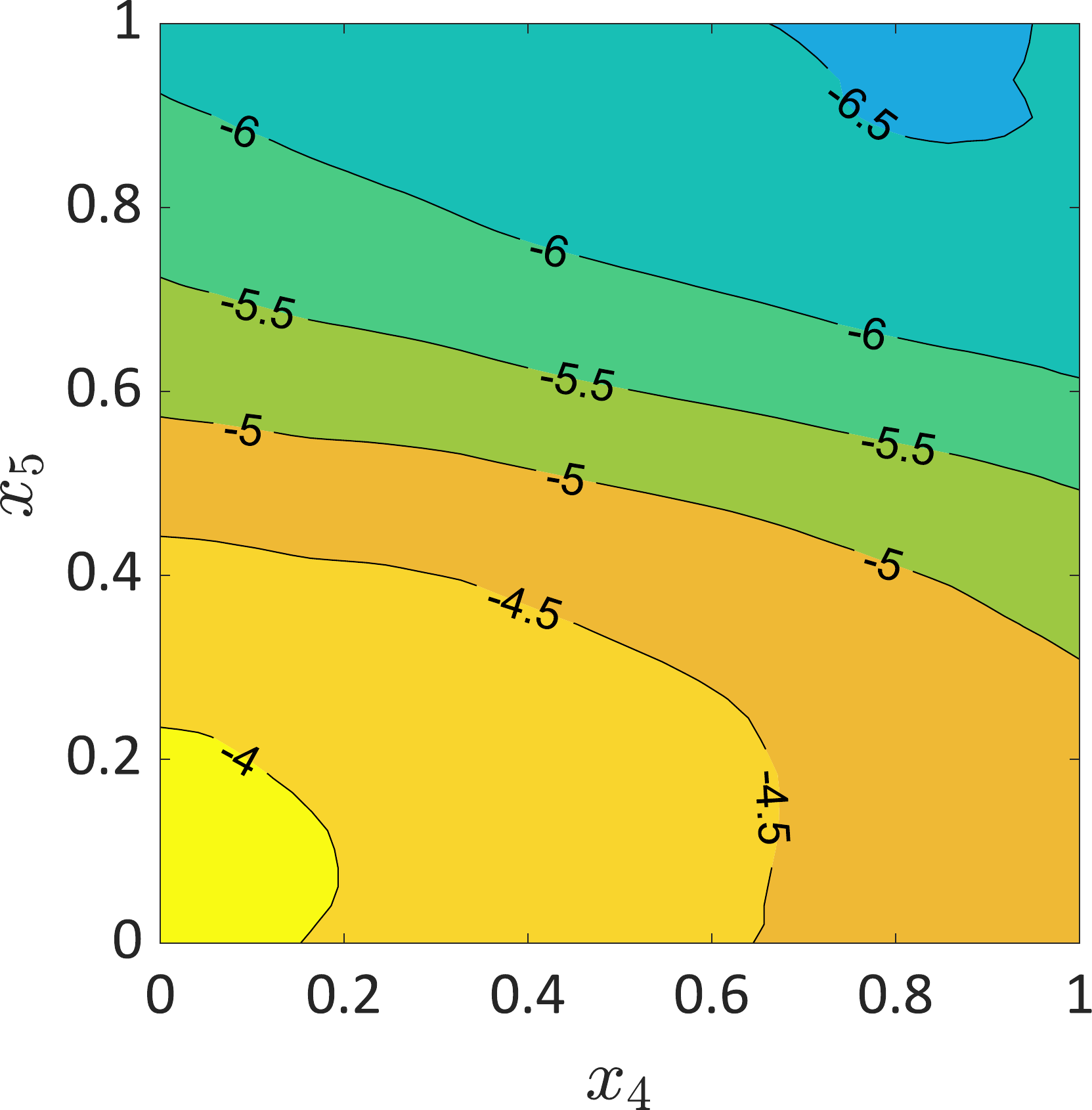}
		\caption{KCDE}
	\end{subfigure}
	\begin{subfigure}{.44\linewidth}
		\centering
		\includegraphics[height=0.82\linewidth, keepaspectratio]{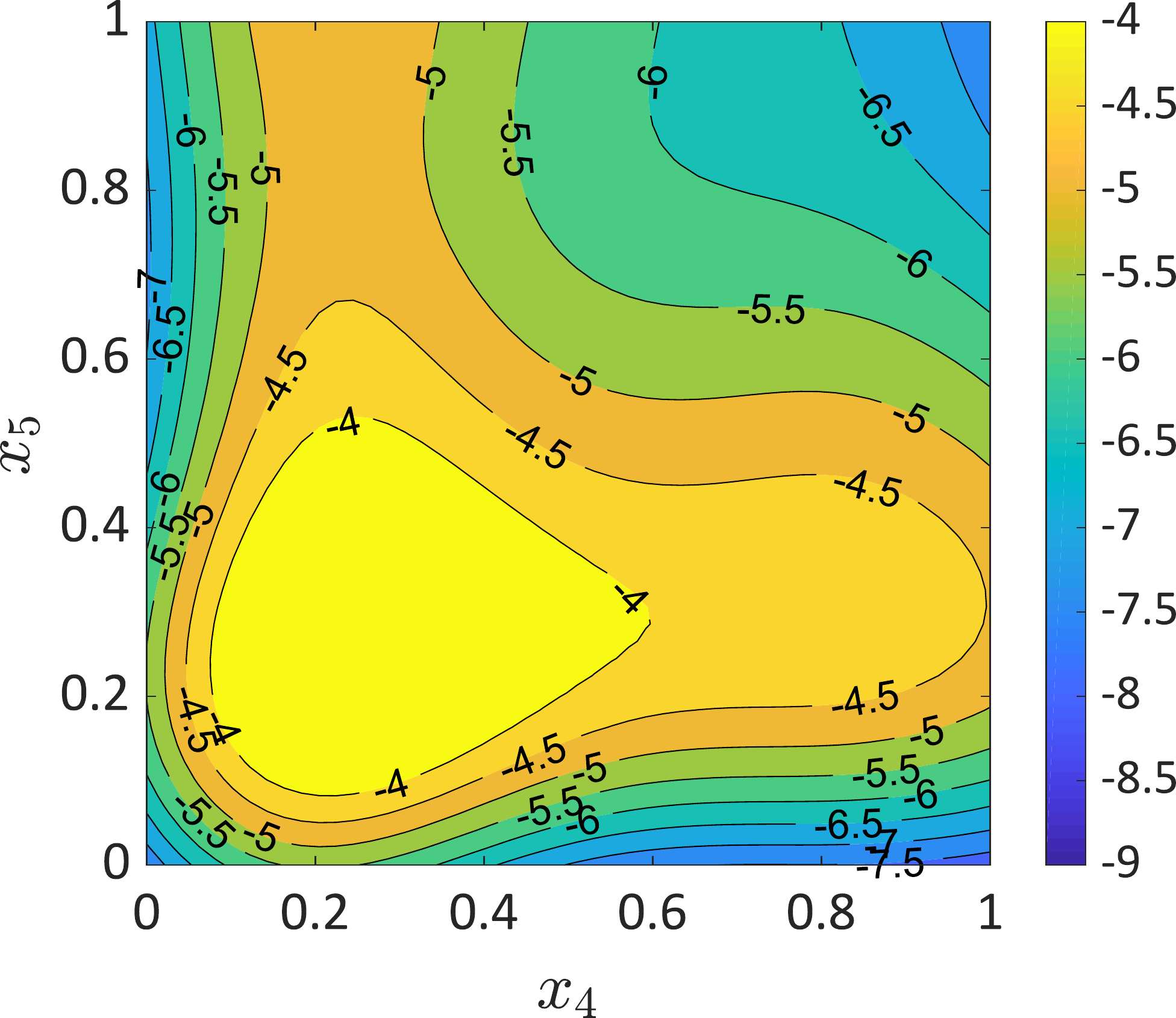}
		\caption{GP}
	\end{subfigure}
	\caption{Example 2 --- Comparisons of the mean function estimation in the plan $x_4-x_5$ with all the other input fixed at their expected value. The surrogate models are fitted to an ED with $N=1{,}000$.}
	\label{fig:normm}
\end{figure}
\begin{figure}[!htbp]
	\centering
	\begin{subfigure}{.44\linewidth}
		\centering
		\includegraphics[height=0.82\linewidth, keepaspectratio]{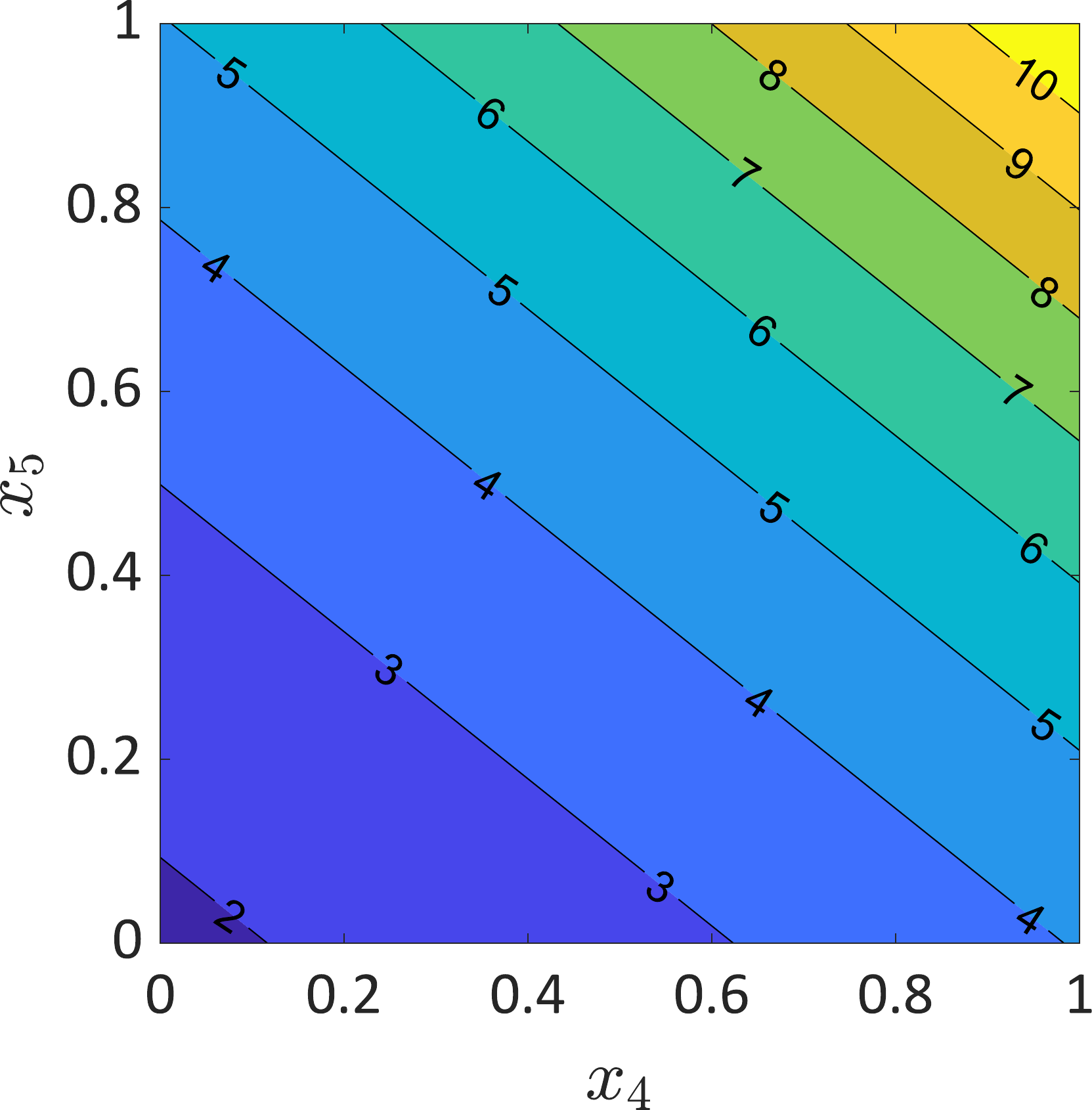}
		\caption{Reference}
	\end{subfigure}
	\begin{subfigure}{.44\linewidth}
		\centering
		\includegraphics[height=0.82\linewidth, keepaspectratio]{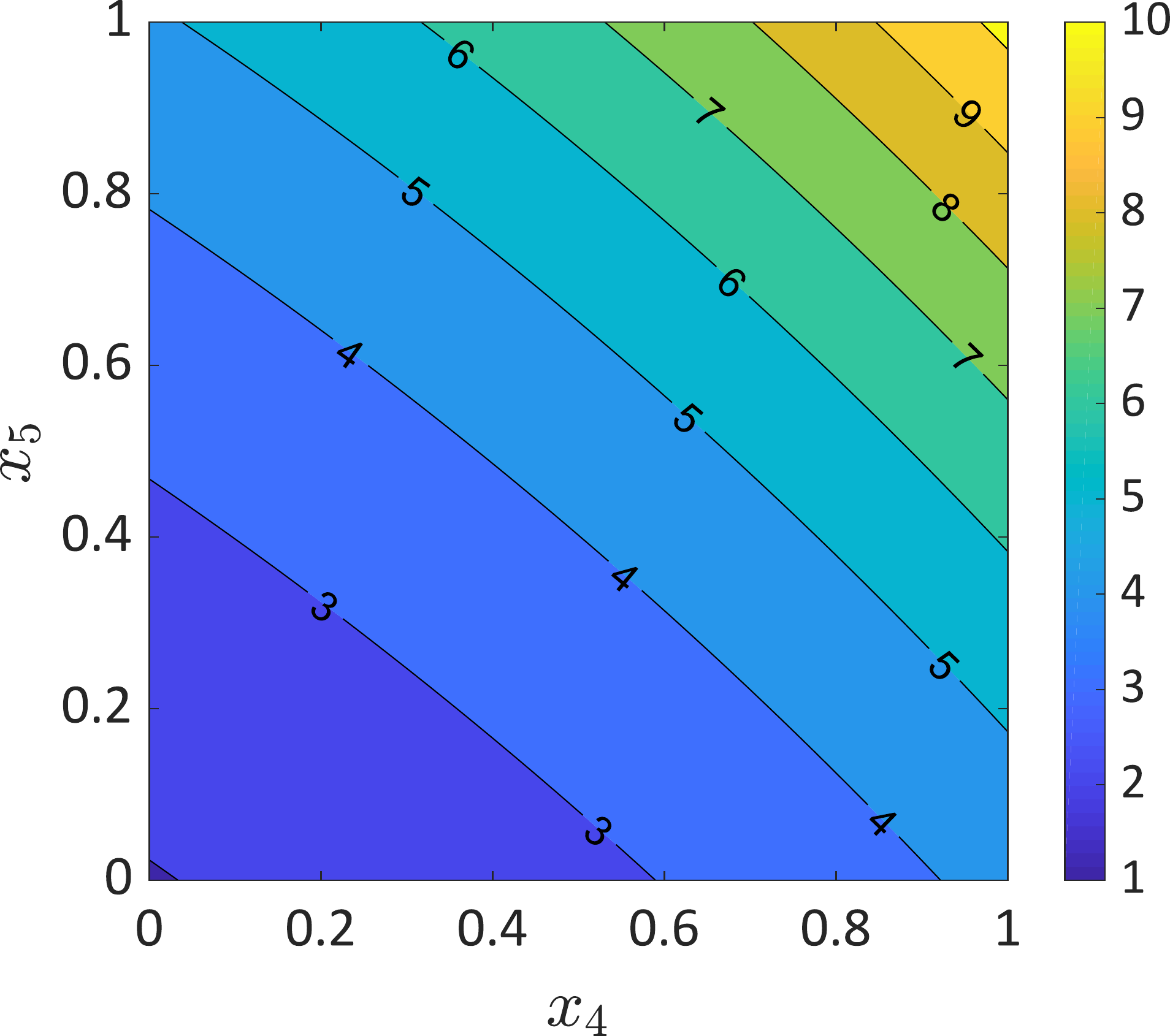}
		\caption{GLaM}
	\end{subfigure}
	\begin{subfigure}{.44\linewidth}
		\centering
		\includegraphics[height=0.82\linewidth, keepaspectratio]{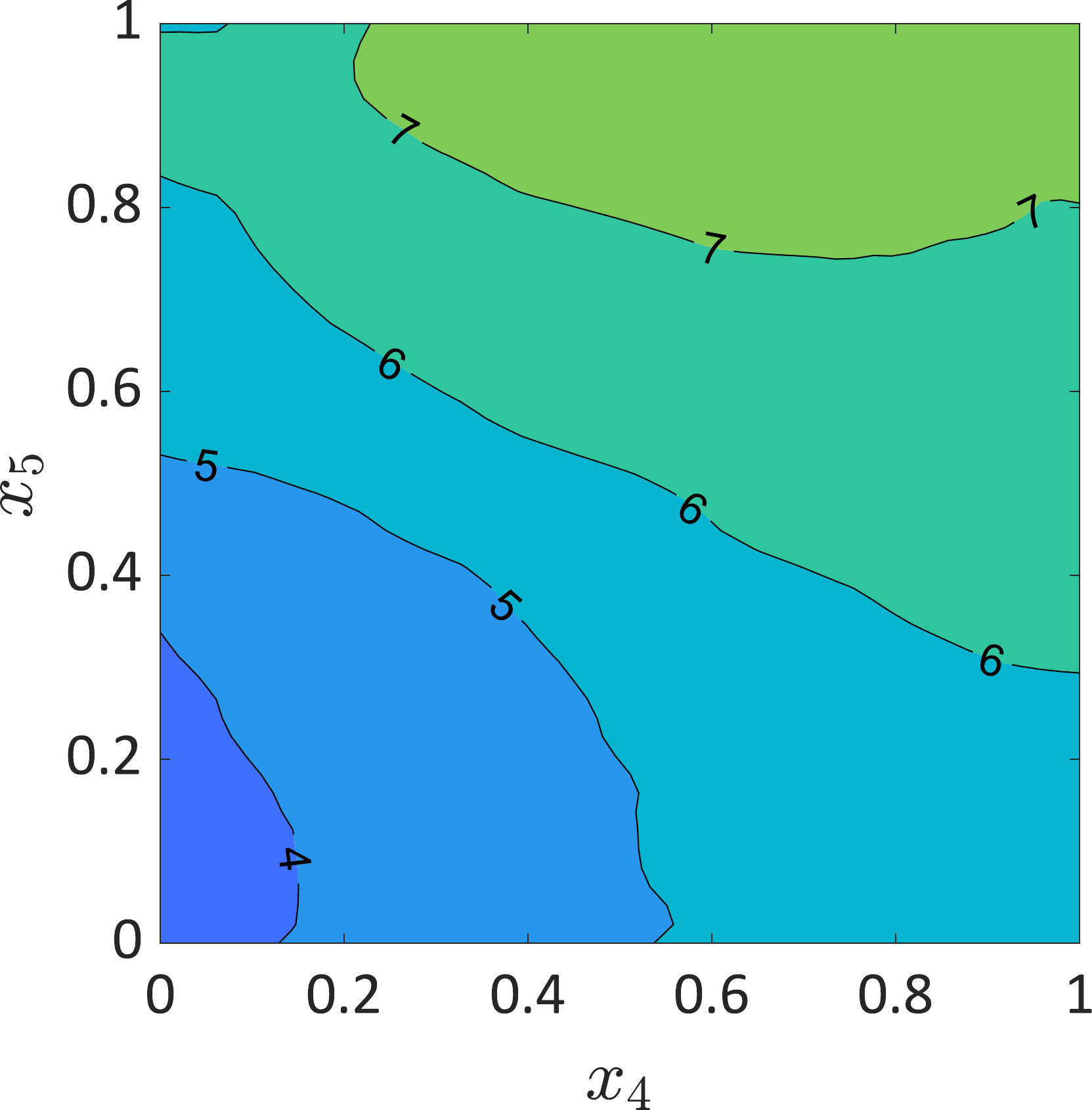}
		\caption{KCDE}
	\end{subfigure}
	\begin{subfigure}{.44\linewidth}
		\centering
		\includegraphics[height=0.82\linewidth, keepaspectratio]{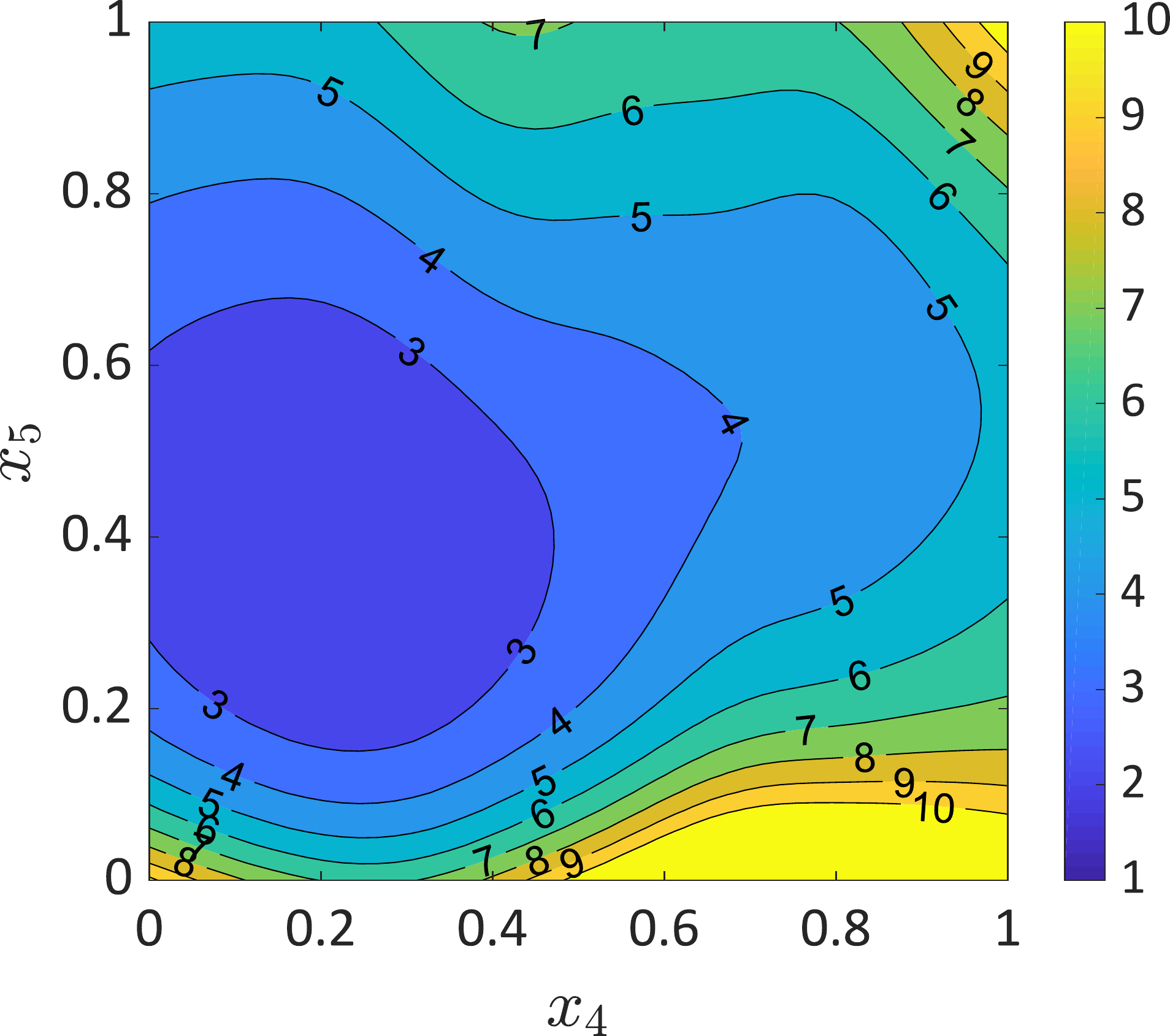}
		\caption{GP}
	\end{subfigure}
	\caption{Example 2 --- Comparisons of the variance function estimation in the plan $x_4-x_5$ with all the other input fixed at their expected value. The surrogate models are fitted to an ED with $N=1{,}000$.}
	\label{fig:normv}
\end{figure}
\par
\Cref{fig:normpdf} compares the model response PDFs (with different variances) 
for four input values with those predicted by a GLaM and a KCDE built upon 
$1{,}000$ model runs. The results show that the GLaM correctly identifies the 
shape of the underlying normal distribution among all possible shapes of the 
GLD. Moreover, it yields a better approximation to the reference PDF, whereas 
KCDE tends to ``wiggle'' in \Cref{fig:normpdf4} (high variance) and 
overestimate the spread in \Cref{fig:normpdf1} (low variance). \Cref{fig:normm,fig:normv} illustrate the mean and variance function predicted by the GLaM, KCDE, and GP in the $x_4-x_5$ plan with all the other variables fixed at their expected value. The results show that the GLaM provides more accurate estimates for both functions.
\par
\begin{figure}[!htbp]
	\centering
	\includegraphics[width=.65\linewidth, keepaspectratio]{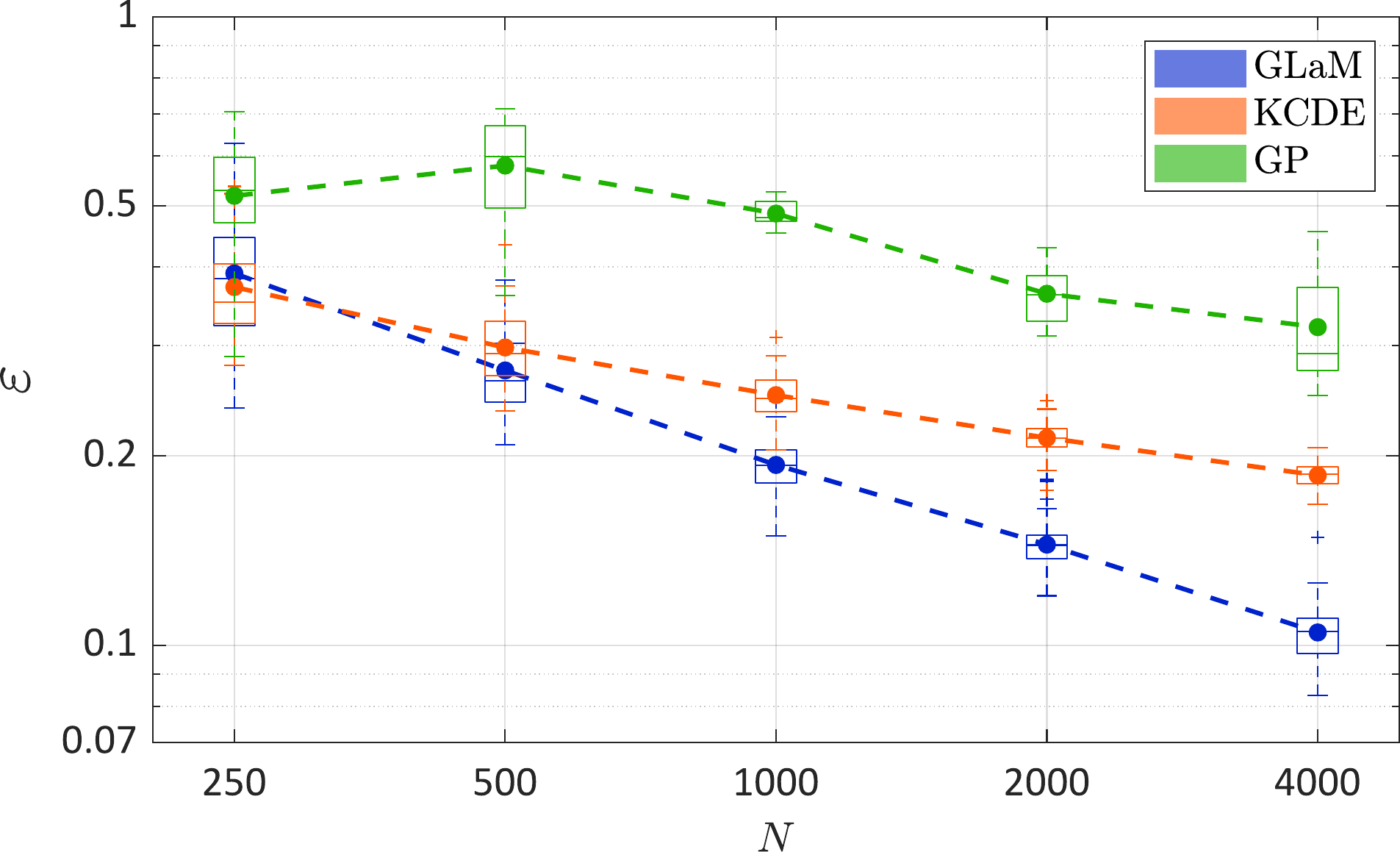}
	\caption{Example 2 --- Comparison of the convergence between GLaMs and
		KCDEs in terms of the normalized Wasserstein distance as a function of 
		the size of the experimental design. The dashed lines denote the 
		average value over 50 repetitions of the full analysis. The green box plots and associated dashed lines correspond to the errors of the heteroskedastic Gaussian Process with sequential design (10 repetitions for each size of the experimental design). The ``oracle'' normal model has an error $\varepsilon = 0$ that is not plotted here.}
	\label{fig:norm_WS}
\end{figure}
\par
Similar to the first example, we perform a convergence study for $N 
\in \{250;\allowbreak 500;\allowbreak 1{,}000; \allowbreak 2{,}000; 
\allowbreak4{,}000\}$, the results of which are 
shown in \cref{fig:norm_WS}. The underlying response distribution is Gaussian, and thus the oracle normal approximation has $\varepsilon=0$, which is not reported in the figure. Surprisingly, GP gives the worst results. This may be understood as follows: the updating criterion of the sequential design targets at minimizing the \emph{integrated mean-squared error}. The latter mainly focuses  on improving the mean estimation (as illustrated in \cref{fig:normm,fig:normv}), yet both the mean and variance contribute to the Wasserstein distance \cref{eq:WS}. Also, this example is a five-dimensional problem, which results in more parameters to estimate for GP.
In the case of small $N$, namely $N=250$, both the GLaMs and KCDEs perform poorly, with the GLaMs showing a similar average error but higher variability. This is explained as follows. Because of the use of $\AOLS$ in the modified FGLS procedure, we observe that the total number of coefficients of GLaMs to be estimated varies between 19 to 39 for $N=250$. Since the GLD is very flexible, a relatively large data set is necessary to provide enough evidence of the underlying PDF shape. Consequently, a small $N$ can lead to overfitting for high-dimensional $\ve{c}$, but good surrogates can be obtained for more parsimonious models. In contrast, KCDE always performs a thorough leave-one-out cross-validation strategy to select the bandwidths. Therefore, KCDEs show a slightly more stable estimate for $N=250$. With $N$ increasing, however, GLaMs converge much faster and outperform KCDEs for $N \geq 500$ both in terms of the mean and median of the errors. For $N\geq 1{,}000$, the average performance of GLaM is even better than the best KCDE model among the 50 repetitions.

In this example of moderate dimensionality, building a GP with sequential design is surprisingly time-consuming, especially for large experimental designs. This is probably due to the sequential design of experiments, which adds new points one by one and updates the surrogate after each enrichment. The associated simulations were performed on the ETH Euler cluster, and the average CPU time varied from 463 seconds for $N=250$ to over 9 days for $N=4{,}000$ to build a single GP. For KCDE, it took about 20 CPU seconds for $N=250$ up to $30$ minutes for $N=4{,}000$ on a standard laptop. In comparison, constructing a GLaM is always on the order of seconds: around 8 seconds for both $N=250$ and $N=4{,}000$ on a standard laptop.


\subsection{Effect of replications}
As pointed in \Cref{rem:rep}, the proposed method can also work with a data set containing replicates. The latter are simply treated as separate points in the ED. In this section, we analyze the effect of replications using the previous two analytical examples. To this end, we generate data by replicating $R\in\acc{5;10;25;50}$ for each set of input parameters in the ED. We keep the total number of simulations the same as nonreplicated cases by reducing the size of the ED accordingly. For instance, a data set of total $N=1{,}000$ model evaluations with 10 replications consists of 100 different sets of input parameters, each of which is simulated 10 times.

For quantitative comparisons, we investigate a convergence study similar to \Cref{sec:ex1,sec:ex2}: the total number of runs $N$ varies in $\acc{250;500;1{,}000;2{,}000;4{,}000}$, and each scenario is repeated 50 times.

\Cref{fig:GBM_rep,fig:norm_rep} summarize the error defined in \cref{eq:Rlevel1} averaged over the 50 repetitions for each $R\in\acc{5;10;25;50}$. In the first example, replications do not have a strong effect for $R \in \acc{5;10;25}$. This is because the expansions for $\ve{\lambda}(\ve{x})$ contain only a few terms. Therefore, as long as we have enough ED points, exploring the input space and performing replications bring similar improvements to the surrogate accuracy. However, a large number of replications, i.e., $R=50$, gives too few ED points for small values of $N$, which yields GLaMs of poor performance. 

In the second example, we observe a clear negative effect of replications: for the same total amount of model runs, the surrogate quality deteriorates when increasing the number of replications / decreasing the size of the experimental design. 

In summary, homogeneous replications (i.e., those with the same number of replicates for each point of the experimental design) do not necessarily bring additional accuracy and may even lead to a ``waste'' of computational budget for the proposed GLaM method. Nevertheless, this does not imply that replications are always useless. On the one hand, for methods that explore the usage of replications, there is a trade-off between replications and exploration \cite{Binois2019}. On the other hand, an adaptive selection of different numbers of replications for each point in the experimental design could possibly improve the performance of the proposed method. However, unlike the heteroskedastic GP, GLaM not only estimates the mean and the variance but also produces the whole PDF. As a result, sequential design strategies for building GLaMs remain to be developed in future study and are outside the scope of the paper.
\begin{figure}[!htbp]
	\centering
	\includegraphics[width=.65\linewidth, keepaspectratio]{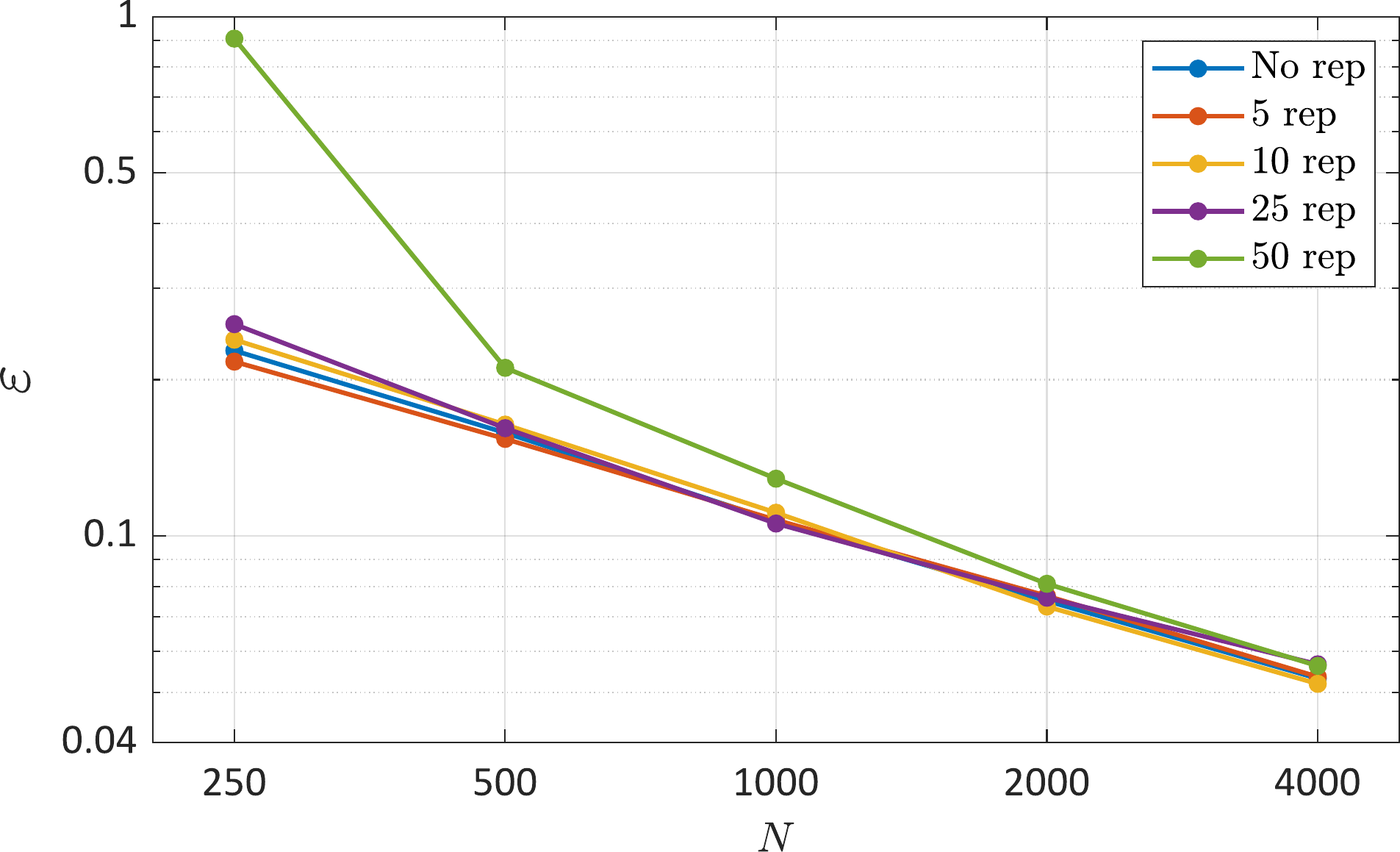}
	\caption{Example 1 --- Comparison of the GLaMs built on data with different number of replications. The curves corresponds to the mean error over the 50 repetitions.}
	\label{fig:GBM_rep}
\end{figure}

\begin{figure}[!htbp]
	\centering
	\includegraphics[width=.65\linewidth, keepaspectratio]{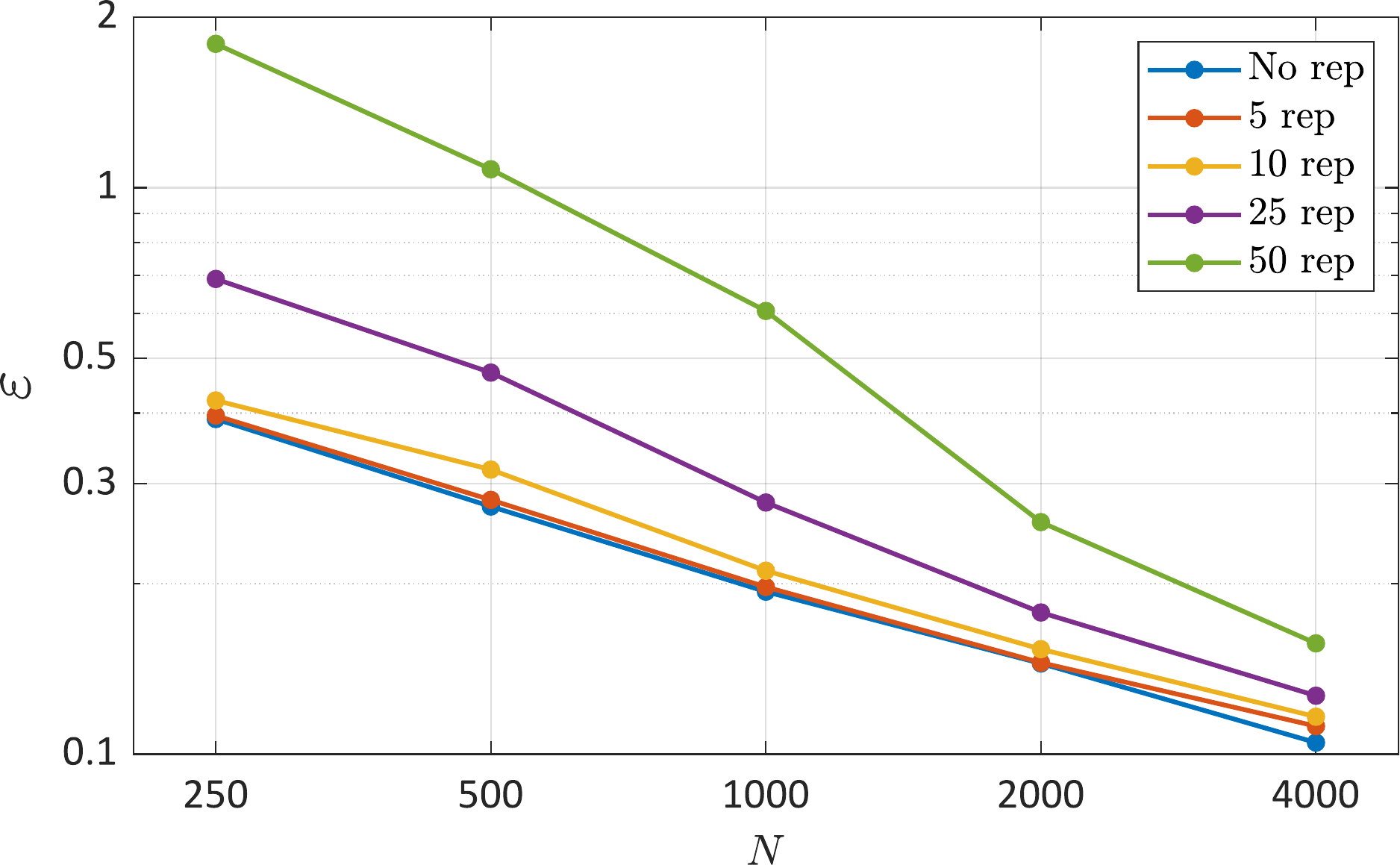}
	\caption{Example 2 --- Comparison of the GLaMs built on data with different number of replications. The curves corresponds to the mean error over the 50 repetitions.}
	\label{fig:norm_rep}
\end{figure}

\subsection{Example 3: Asian options}
\label{sec:app1}
In this third example, we apply the proposed method to a financial case study, namely an \emph{Asian option} \cite{Kemna1990}. Such an option, a.k.a. average value option, is a derivative contract, the payoff of which is contingent on the average price of the underlying asset over a certain fixed time period. Due to the path-dependent nature, an Asian option has complex behavior, and its valuation is not straightforward, as opposed to European options. 
\par
Recall the Black--Scholes model defined in \cref{eq:GBM} that represents the evolution of a stock price $S_t(\ve{x})$. Instead of relying on the stock price on the maturity date $t=T$, the payoff of an Asian call option reads
\begin{equation}\label{eq:V-A}
	C(\ve{x}) = \max\acc{A_T(\ve{x})-K,0}, \text{ with } A_t(\ve{x}) = \frac{1}{t} \int_{0}^{t} S_u(\ve{x}) \D u.
\end{equation}
where $A_t(\ve{x})$ is called the \emph{continuous average process}, and $K$ denotes the \emph{strike price}. Because $A_T(\ve{x})$ plays an important role in the Asian option modeling \cref{eq:V-A}, the PDF of $A_T(\ve{x})$ is of interest in this case study. As in \Cref{sec:ex1}, we set $T=1$, which corresponds to a one-year inspection period. We choose $X_1 \sim \cu(0,0.1)$ and $X_2\sim \cu(0.1,0.4)$ for the two input random variables. Unlike $S_1(\ve{x})$, the distribution of $A_1(\ve{x})$ cannot be derived 
analytically. It is necessary to simulate the trajectory of $S_t(\ve{x})$ to compute $A_1(\ve{x})$. Based on the Markovian and lognormal properties of $S_t(\ve{x})$, we apply the following recursive equations for the path simulation with a time step $\Delta t = 0.001$:
\begin{equation*}
	\begin{split}
		S_0(\ve{x}) &= 1, \\
		S_{t+\Delta t}(\ve{x}) \mid S_t(\ve{x}) &\sim \cl\cn \left( 
		\log\left(S_t(\ve{x})\right) + \left(x_1 - 
		\frac{x^2_2}{2}\right)\Delta t, x_2\sqrt{\Delta t}\right).
	\end{split}
\end{equation*}
Finally, the continuous average defined in \cref{eq:V-A} is approximated by the 
arithmetic mean, that is,
\begin{equation*}
	A_1(\ve{x}) = \frac{\sum_{k=1}^{1{,}000}S_{k\Delta t}(\ve{x})}{1{,}000}
\end{equation*}
\par
\begin{figure}[!htbp]
	\centering
	\begin{subfigure}[b]{0.45\textwidth}
		\centering
		\includegraphics[height=0.7\linewidth, keepaspectratio]{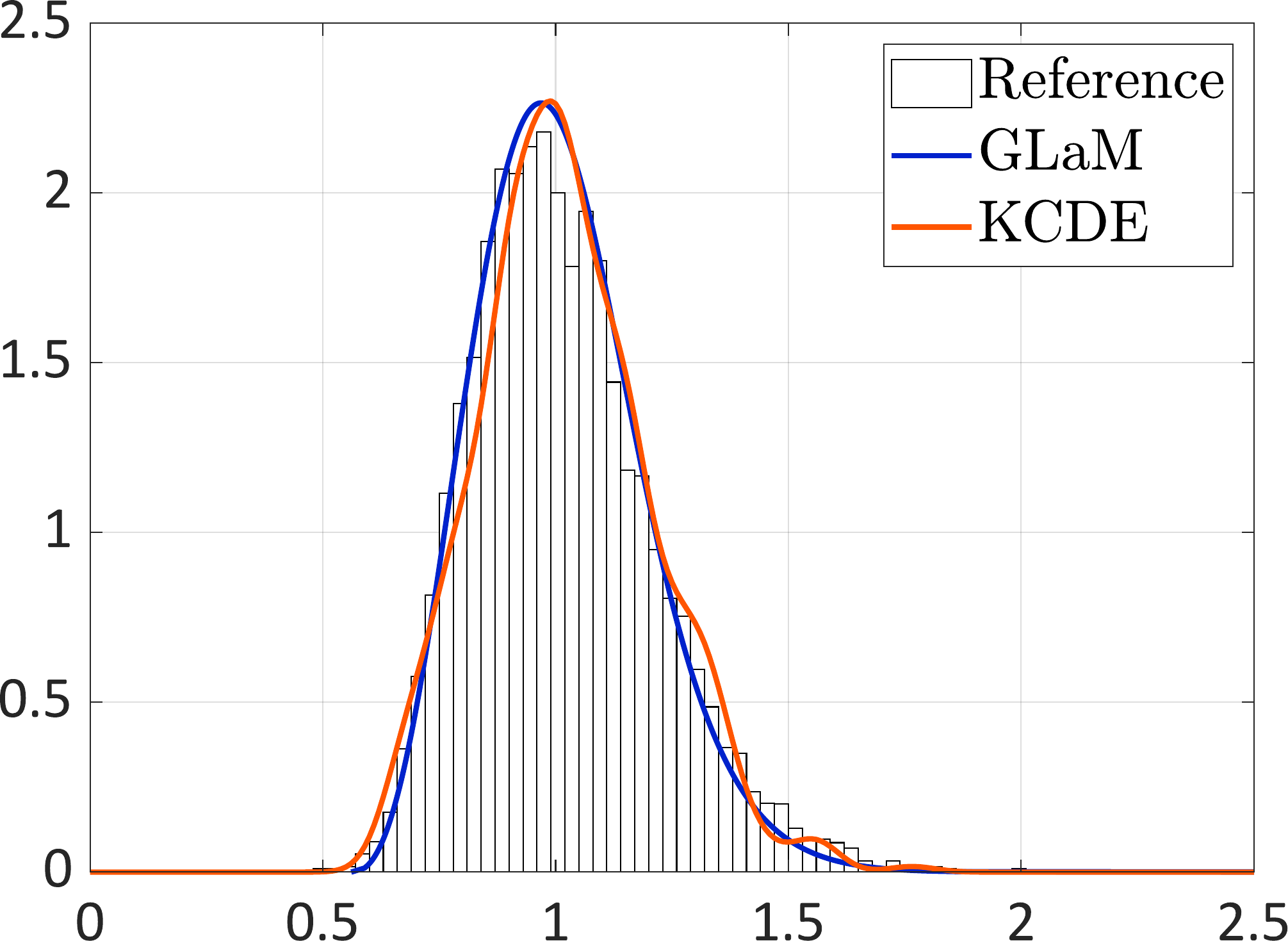}
		\caption{PDF for $\ve{x} = (0.03,0.33)^T$}
		\label{fig:asianpdf1}
	\end{subfigure}
	\hspace{0.5cm}
	\begin{subfigure}[b]{0.45\textwidth}
		\centering
		\includegraphics[height=0.7\linewidth, keepaspectratio]{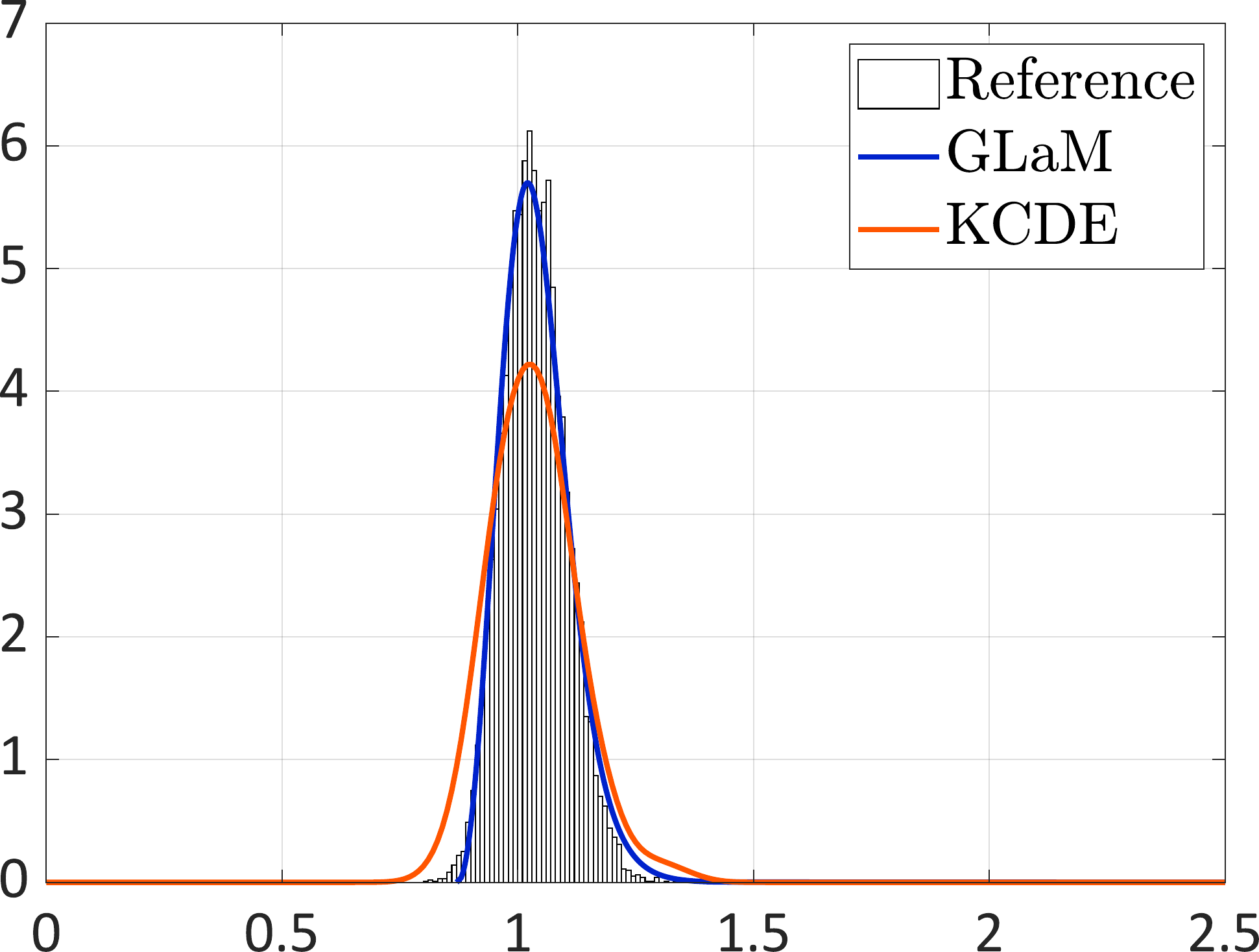}
		\caption{PDF for $\ve{x} = (0.07,0.11)^T$}
		\label{fig:asianpdf2}
	\end{subfigure}
	\caption{Asian option --- Comparisons of the emulated PDF, $N=500$}
	\label{fig:asianpdf}
\end{figure}
\begin{figure}[!htbp]
	\centering
	\begin{subfigure}{.33\linewidth}
		\centering
		\includegraphics[height=0.84\linewidth, keepaspectratio]{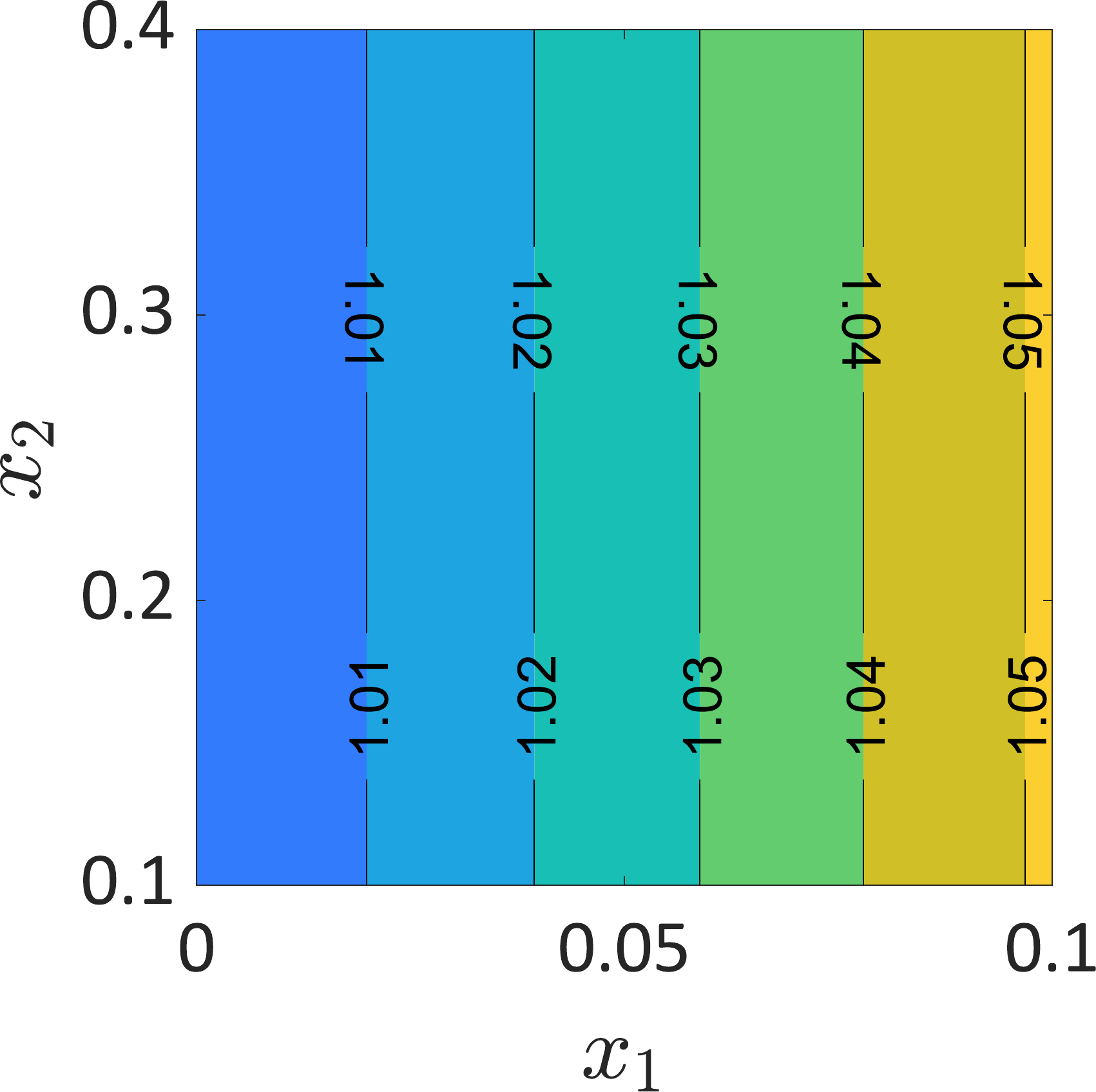}
		\caption{Reference}
	\end{subfigure}
	\hspace*{-6mm}
	\begin{subfigure}{.33\linewidth}
		\centering
		\includegraphics[height=0.84\linewidth, keepaspectratio]{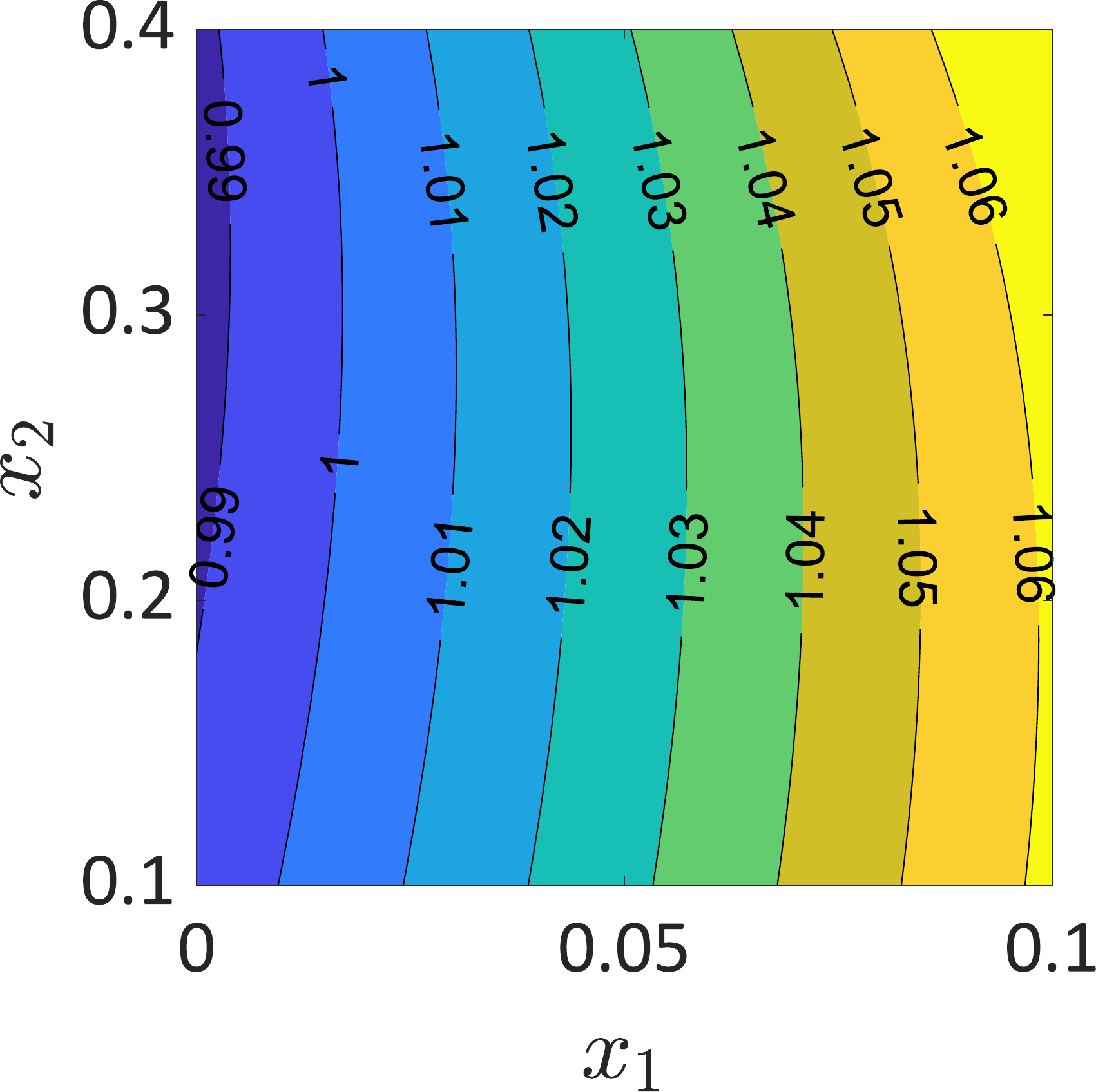}
		\caption{GLaM}
	\end{subfigure}
	\hspace*{-3mm}
	\begin{subfigure}{.33\linewidth}
		\centering
		\includegraphics[height=0.84\linewidth, keepaspectratio]{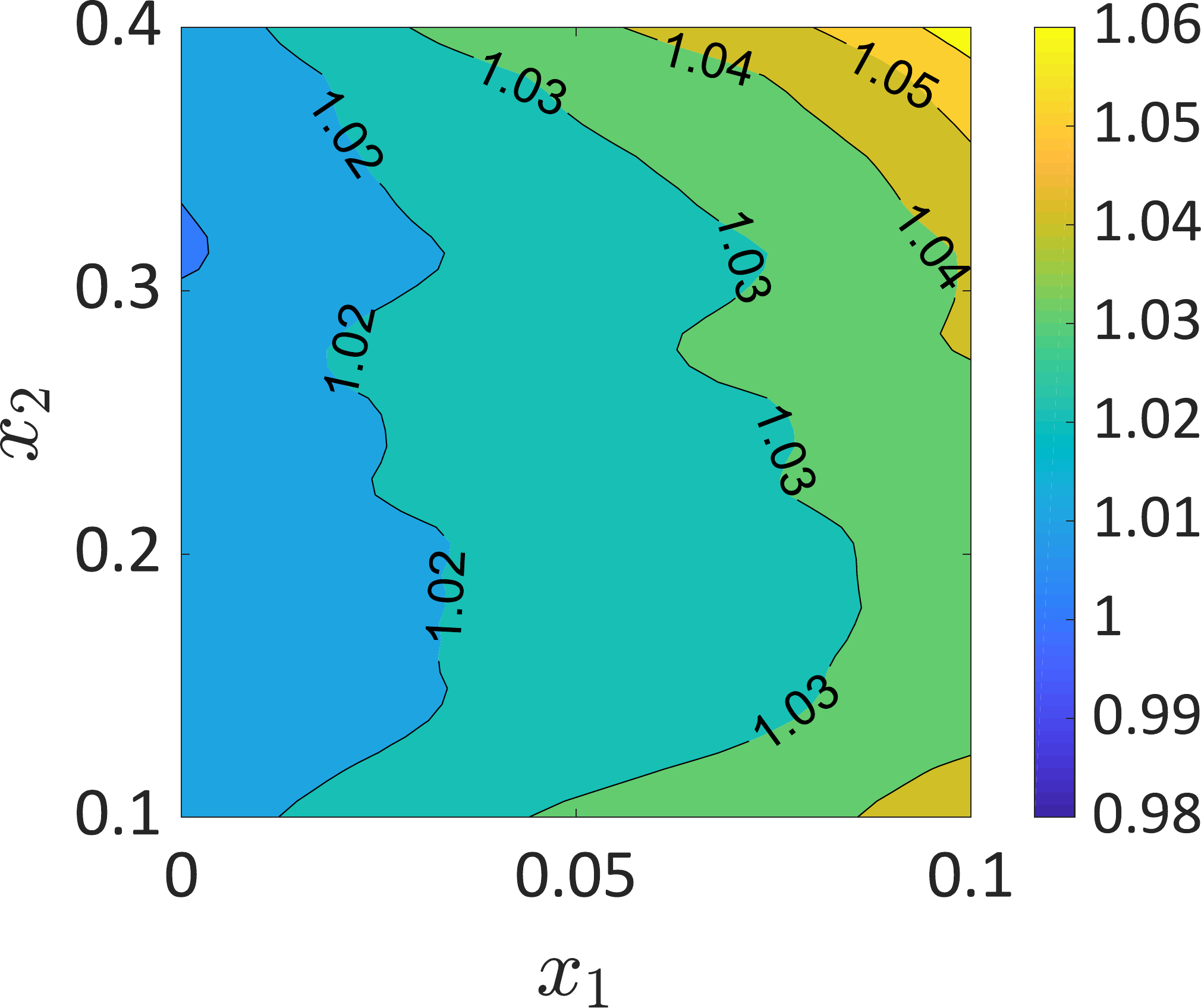}
		\caption{KCDE}
	\end{subfigure}
	\caption{Asian option --- Comparisons of the mean function estimation, $N=500$.}
	\label{fig:asianm}
\end{figure}
\begin{figure}[!htbp]
	\centering
	\begin{subfigure}{.33\linewidth}
		\centering
		\includegraphics[height=0.82\linewidth, keepaspectratio]{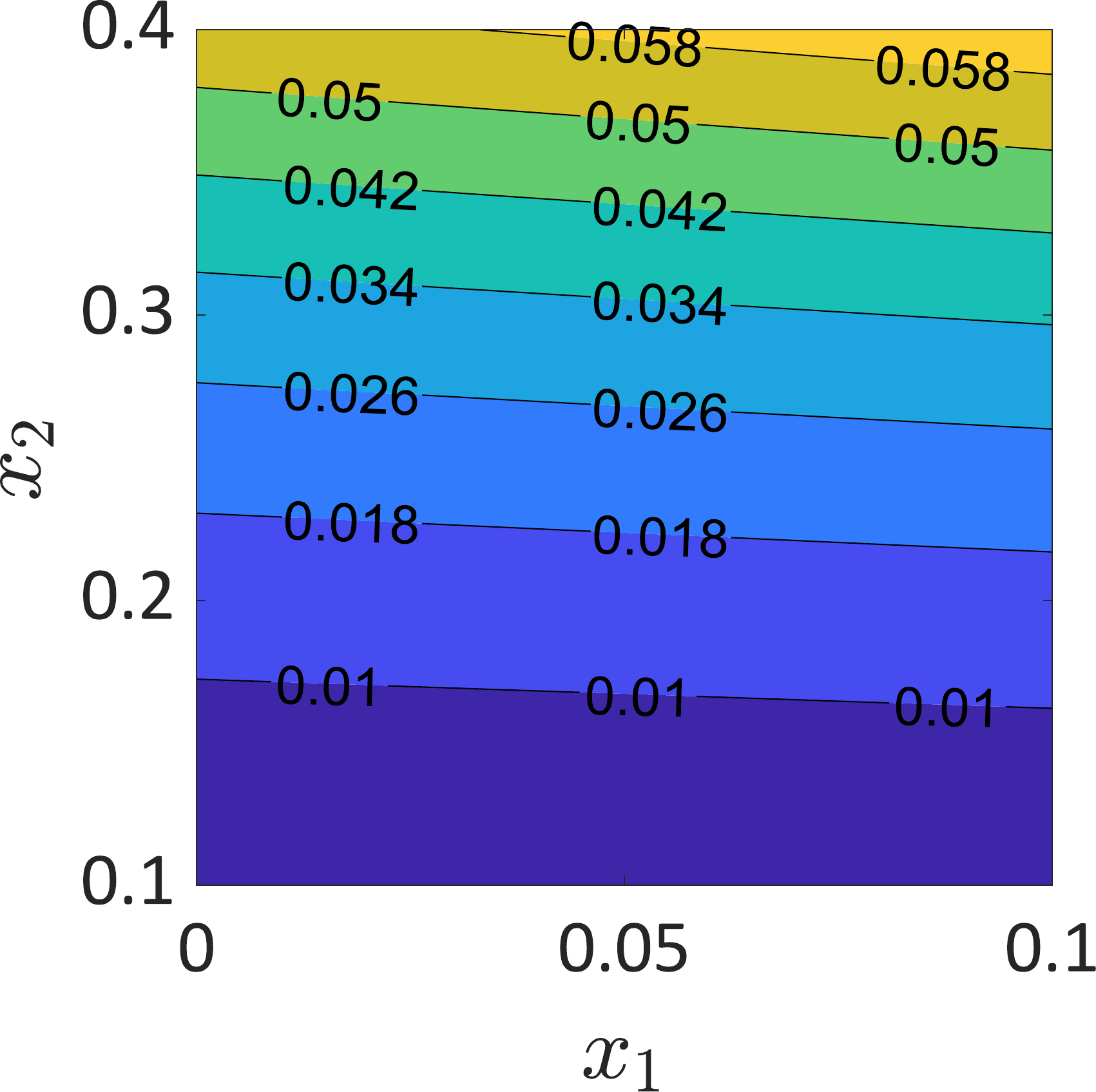}
		\caption{Reference}
	\end{subfigure}
	\hspace*{-6mm}
	\begin{subfigure}{.33\linewidth}
		\centering
		\includegraphics[height=0.82\linewidth, keepaspectratio]{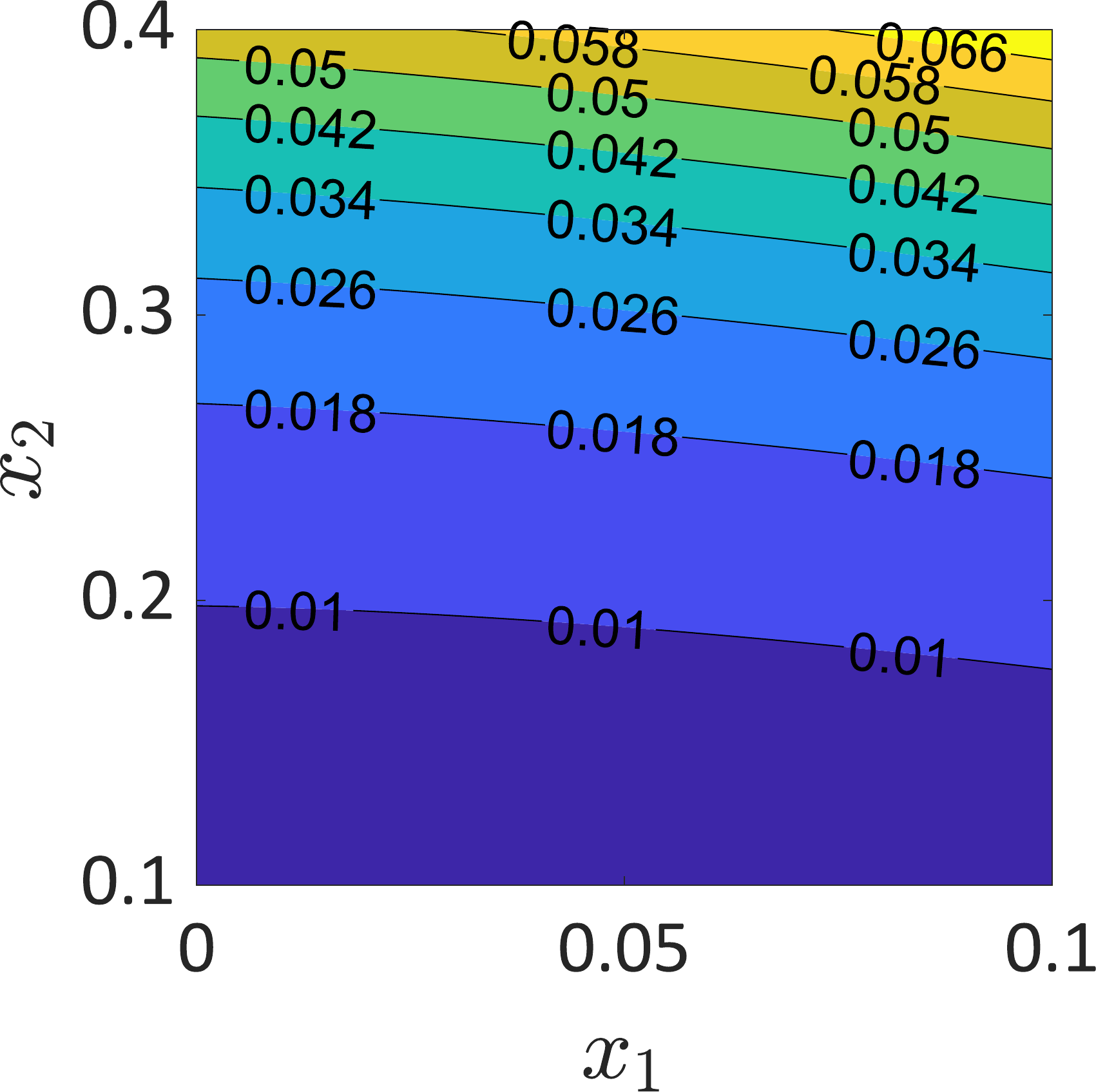}
		\caption{GLaM}
	\end{subfigure}
	\hspace*{-3mm}
	\begin{subfigure}{.33\linewidth}
		\centering
		\includegraphics[height=0.82\linewidth, keepaspectratio]{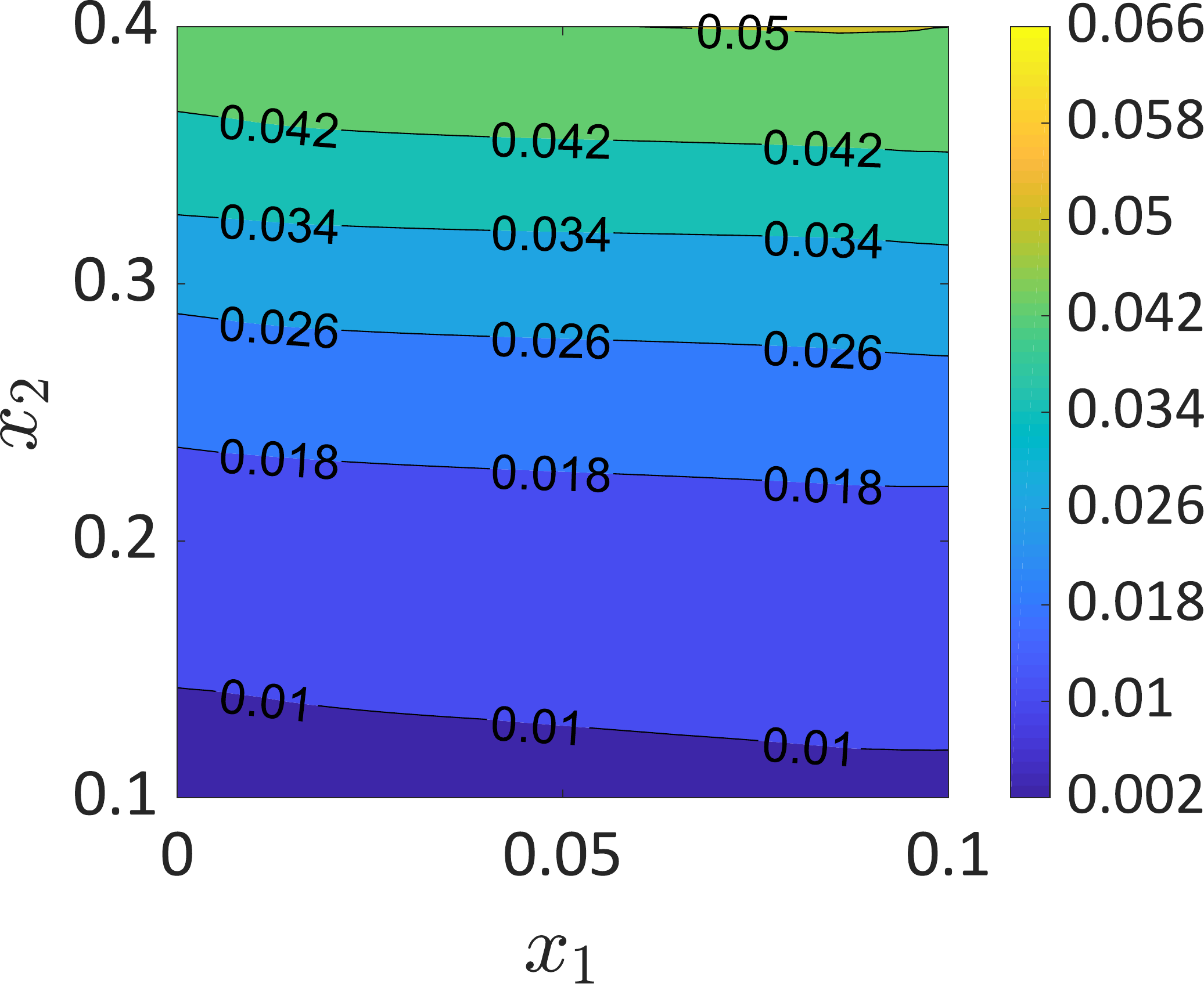}
		\caption{KCDE}
	\end{subfigure}
	\caption{Asian option --- Comparisons of the variance function estimation, $N=500$.}
	\label{fig:asianv}
\end{figure}
\par
\Cref{fig:asianpdf} shows two response PDFs predicted by the two surrogate models constructed on an experimental design of $N=500$. The reference histograms are calculated from $10^4$ repeated runs of the simulator for each set of input parameters. We observe that the KCDE exhibits slight fluctuations at the right tail for high volatility (in \Cref{fig:asianpdf1}) and does not well approximate the bulk of the response distribution for low volatility (in \Cref{fig:asianpdf2}). In comparison, the GLaM can well represent the PDF shape in both cases and also more accurately approximates the tails. \Cref{fig:asianm,fig:asianv} shows the mean and variance function, where the reference values can be obtained by applying It\^o's calculus. For the experimental design of $N=500$, the GLaM more accurately predicts the two functions. Finally, quantitative comparisons in \Cref{fig:Asian_WS} confirm the superiority of GLaMs to KCDEs: GLaMs yield smaller average error for all $N \in \acc{250;500;1{,}000;2{,}000;4{,}000}$ and demonstrate a better convergence rate. Moreover, for large experimental designs ($N \geq 2{,}000$), the average error of GLaMs is nearly half of that of KCDEs. The oracle Gaussian approximation in this case study has a similar error to GLaMs built on $1{,}000$ model runs. For $N\geq 2{,}000$, GLaMs fitted from data are much more accurate than the best possible Gaussian-type mean-variance model.
\par
\begin{figure}[!htbp]
	\centering
	\includegraphics[width=.65\linewidth, keepaspectratio]{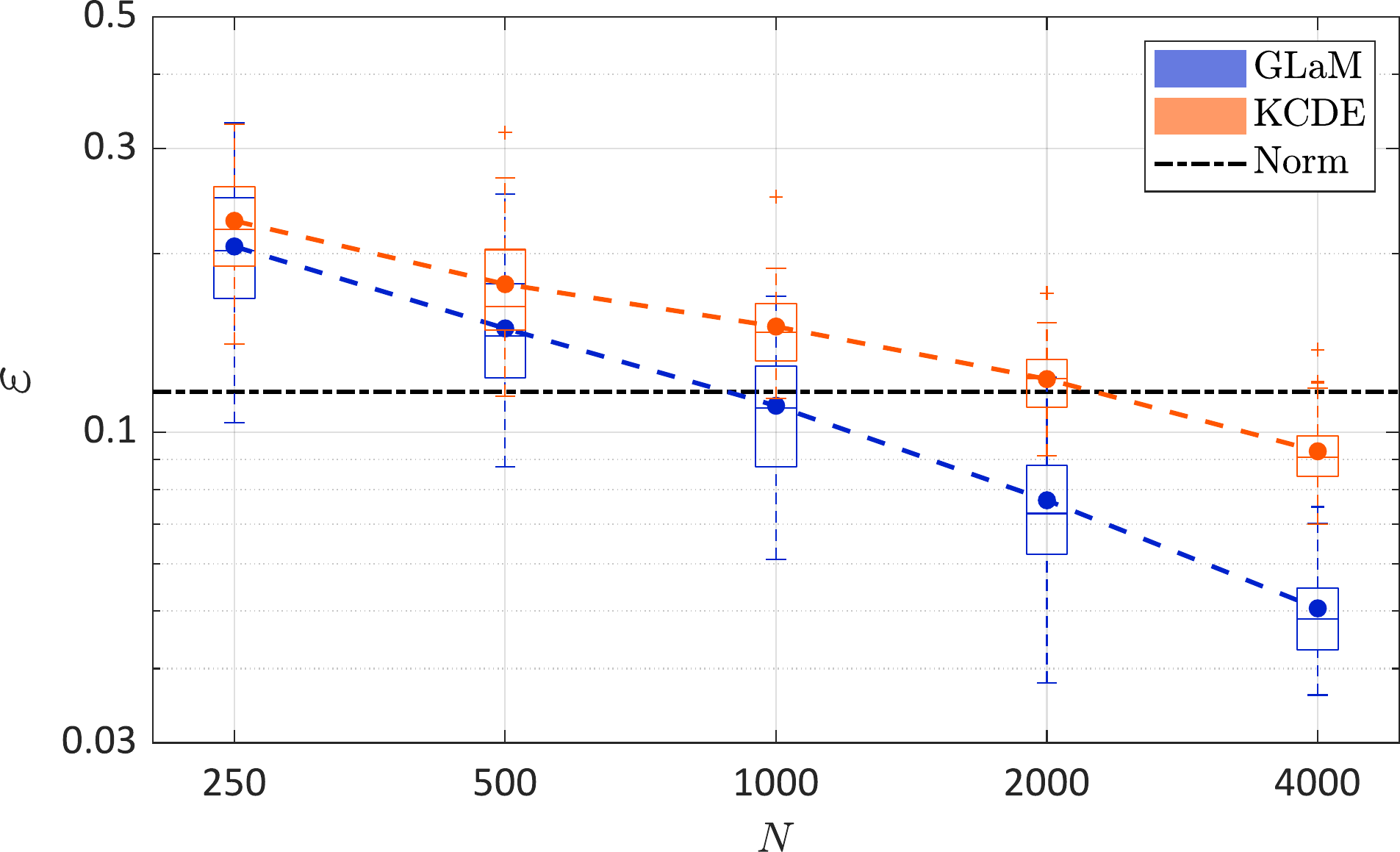}
	\caption{Asian option, average process $A_1(\ve{x})$ at $T=1$~year --- 
		Comparison of the convergence of GLaMs and KCDEs in terms of the 
		normalized Wasserstein distance as a function of the size of the 
		experimental design. The dashed lines denote the average value over $50$ 
		repetitions of the full analysis. The black dash-dotted line 
		represents the error of the model assuming that the response distribution 
		is normal with the true mean and variance}
	\label{fig:Asian_WS}
\end{figure} 
\par
As a second quantity of interest, we consider the expected payoff $\mu_C(\ve{x})=\Esp{C(\ve{x})}$. This quantity not only is important for making investment decisions but also has a very similar form to the option price \cite{Kemna1990}. The definition \cref{eq:V-A} implies that the payoff $C(\ve{x})$ is a mixed random variable, which has a probability mass at $0$ and a continuous PDF on the positive line depending on the strike price $K$. In the following analysis, $K$ is set to 1. 
\par
For GLaMs, $\mu_C(\ve{x})$ can be calculated by
\begin{equation}\label{eq:epoff}
	\mu_C(\ve{x}) = \left(\lambda_1 - \frac{1}{\lambda_2 \lambda_3} +	\frac{1}{\lambda_2 \lambda_4} - K\right)\,(1-u_K) + 
	\frac{1}{\lambda_2}\,\left(
	\frac{1-u_K^{\lambda_3+1}}{\lambda_3\,\left(\lambda_3+1\right)}
	-\frac{(1-u_K)^{\lambda_4+1}}{\lambda_4\,\left(\lambda_4+1\right)}
	\right)
\end{equation}
where $\lambda$'s are the distribution parameters at $\ve{x}$, 
and $u_K$ is the solution of the nonlinear equation
\begin{equation}
	Q(u_K;\ve{\lambda}) = K.
\end{equation}
with $Q$ being the quantile function defined in \cref{eq:FKML}.
\par
\Cref{fig:Asian_payoff} shows the convergence of estimations of $\mu_C(\ve{x})$ in terms of the error defined in \cref{eq:Berror}. The difference between the performance of GLaMs and KCDEs is not as significant as for the distribution estimation of $A_1(\ve{x})$ in \Cref{fig:Asian_WS}. For relatively small data sets, namely $N\leq 500$, both models work poorly: they are only able to explain on average no more than 70\% of the variance of $\mu_C(\ve{X})$. In addition, GLaMs demonstrate a higher variability of the errors. For larger experimental designs $N\geq 2{,}000$, however, the performance of GLaMs improves significantly more than that of KCDEs. For $N=4{,}000$, the average error of GLaMs is twice smaller than that of KCDEs, and the smallest error achieved by GLaMs is one order of magnitude smaller than the best KCDE.
\par
\begin{figure}[!htbp]
	\centering
	\includegraphics[width=.65\linewidth, keepaspectratio]{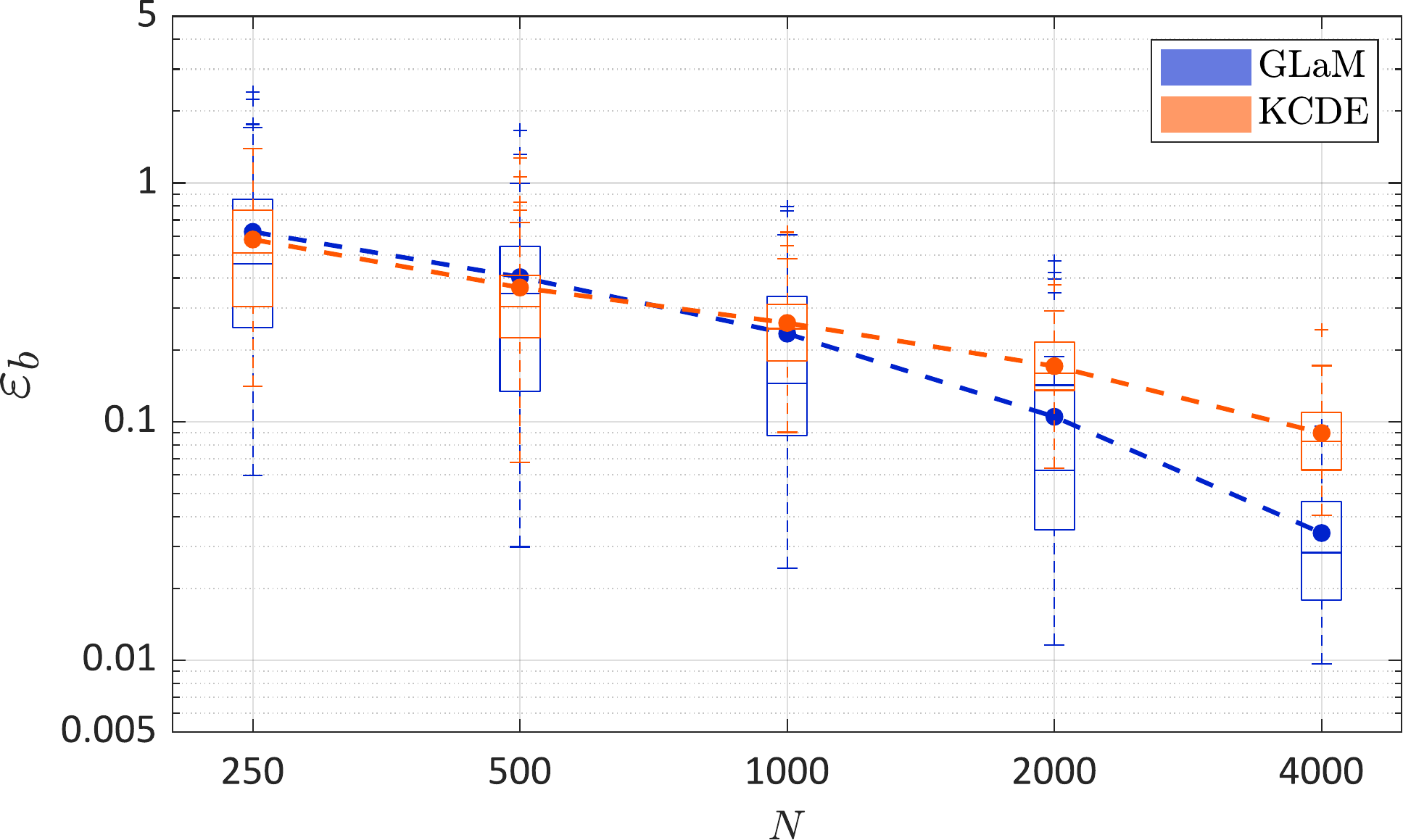}
	\caption{Asian option, expected payoff estimations --- Comparison of the 
		convergence of GLaMs and KCDEs in terms of the normalized mean squared 
		error as a function of the size of the experimental design. The dashed 
		lines denote the average value over $50$ repetitions of the full analysis.}
	\label{fig:Asian_payoff}
\end{figure}

\subsection{Example 4: Stochastic SIR model}
\label{sec:app2}

In this fourth example, we apply the proposed method to a \emph{stochastic 
	susceptible-infected-recovered} (SIR) model in epidemiology \cite{Britton2010}. 
This model simulates the spread of an infectious disease, which can help find 
appropriate epidemiological interventions to minimize social and ethical 
impacts during the outbreak.
\par
According to the standard SIR model, at time $t$ a population of size $P_t$ 
contains three groups of individuals: susceptible, infected, and recovered, the 
counts of which are denoted by $S_t$, $I_t$, and $R_t$, respectively. These 
three quantities fully characterize a population configuration at time $t$. 
Among the three groups, only susceptible individuals can get infected due to 
close contact with infected individuals, whereas an infected person can recover 
and becomes immune to future infections. We consider a fixed population without 
newborns and deaths, \ie the total population size is constant, $P_t = P$. As a 
result, $S_t$, $I_t$, and $R_t$ satisfy the constraint $S_t+I_t+R_t = P$, and 
only the time evolution of $(S_t,I_t)$ is necessary to characterize the 
spread of a disease. 
\par
To account for random recoveries and interactions among individuals, stochastic 
SIR models are usually preferred to represent the epidemic evolution. Without 
going into details, the model dynamics is briefly summarized as follows. The 
pair $\left(I_t,S_t\right)$ evolves as a continuous-time Markov 
process following mutual transition rates $\beta$ and $\gamma$, 
which denote the contact rate and recovery rate, respectively. The epidemic 
stops 
at time $t = T$ where $I_T=0$, indicating that no further infections can occur. 
The evolution process is simulated by the \emph{Gillespie algorithm} 
\cite{Gillespie1977}. The reader is referred to \cite{Britton2010} for a more 
detailed presentation of stochastic SIR models.
\par
In this case study, we set the total population equal to $P = 2{,}000$ and $\beta 
= \gamma = 0.5$ as in \cite{Binois2018}. The initial configuration 
$\ve{x}=(S_0,I_0)$ is the vector of input parameters. To account for different 
scenarios, the input variables $\ve{X}$ are modeled as $X_1 \sim 
\cu(1200,1800)$ (initial number of susceptible individuals) and $X_2 \sim 
\cu(20,200)$ (initial number of infected individuals). The QoI is the total 
number of newly infected individuals during the outbreak, \ie $Y(\ve{x}) = S_T 
- S_0$. 
\par
\begin{figure}[!htbp]
	\centering
	\begin{subfigure}[b]{.47\linewidth}
		\centering
		\includegraphics[height=0.7\linewidth, keepaspectratio]{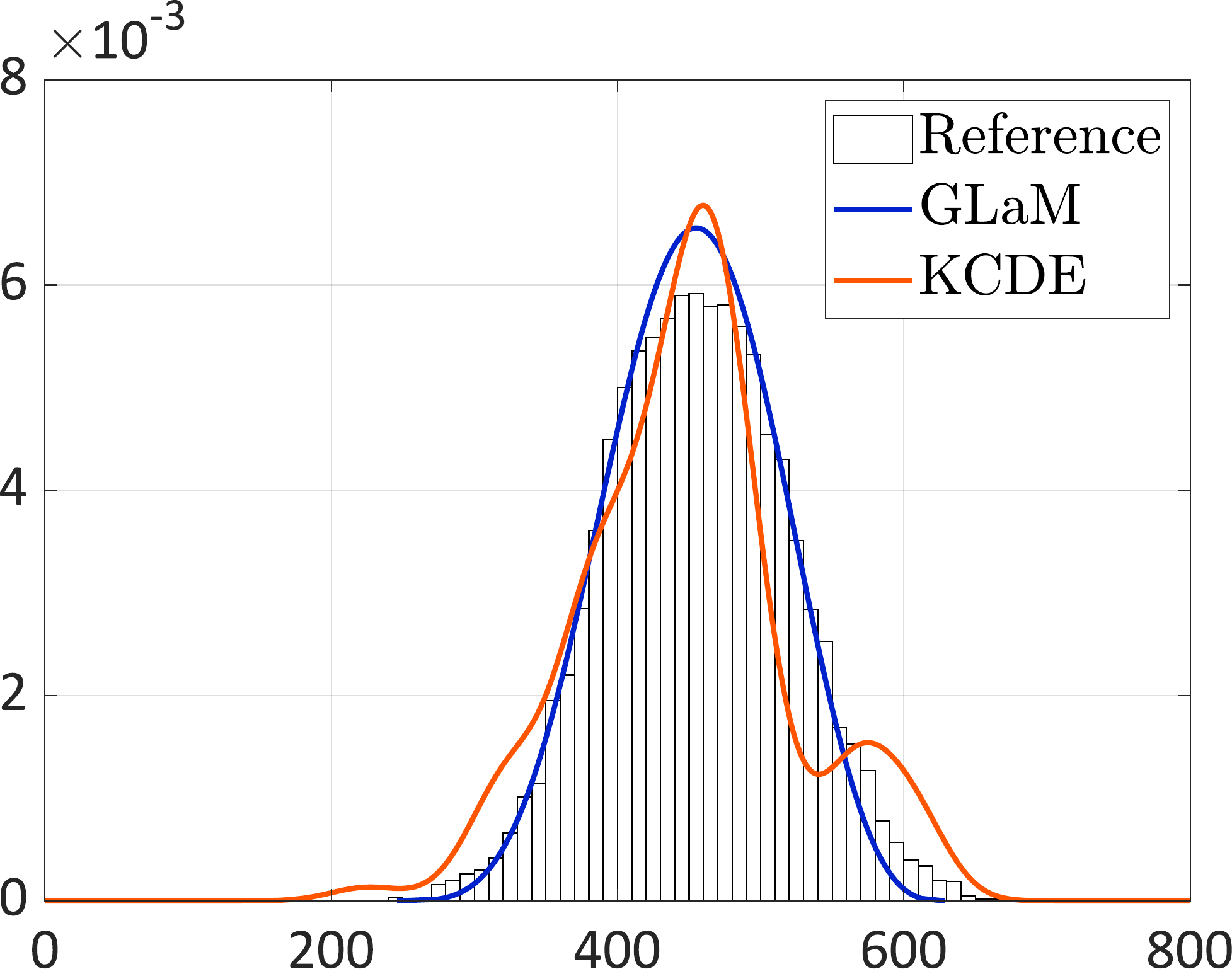}
		\caption{PDF for $\ve{x} = (1714,165)^T$}
	\end{subfigure}
	\hspace{0.5cm}
	\begin{subfigure}[b]{.47\linewidth}
		\centering
		\includegraphics[height=0.7\linewidth, keepaspectratio]{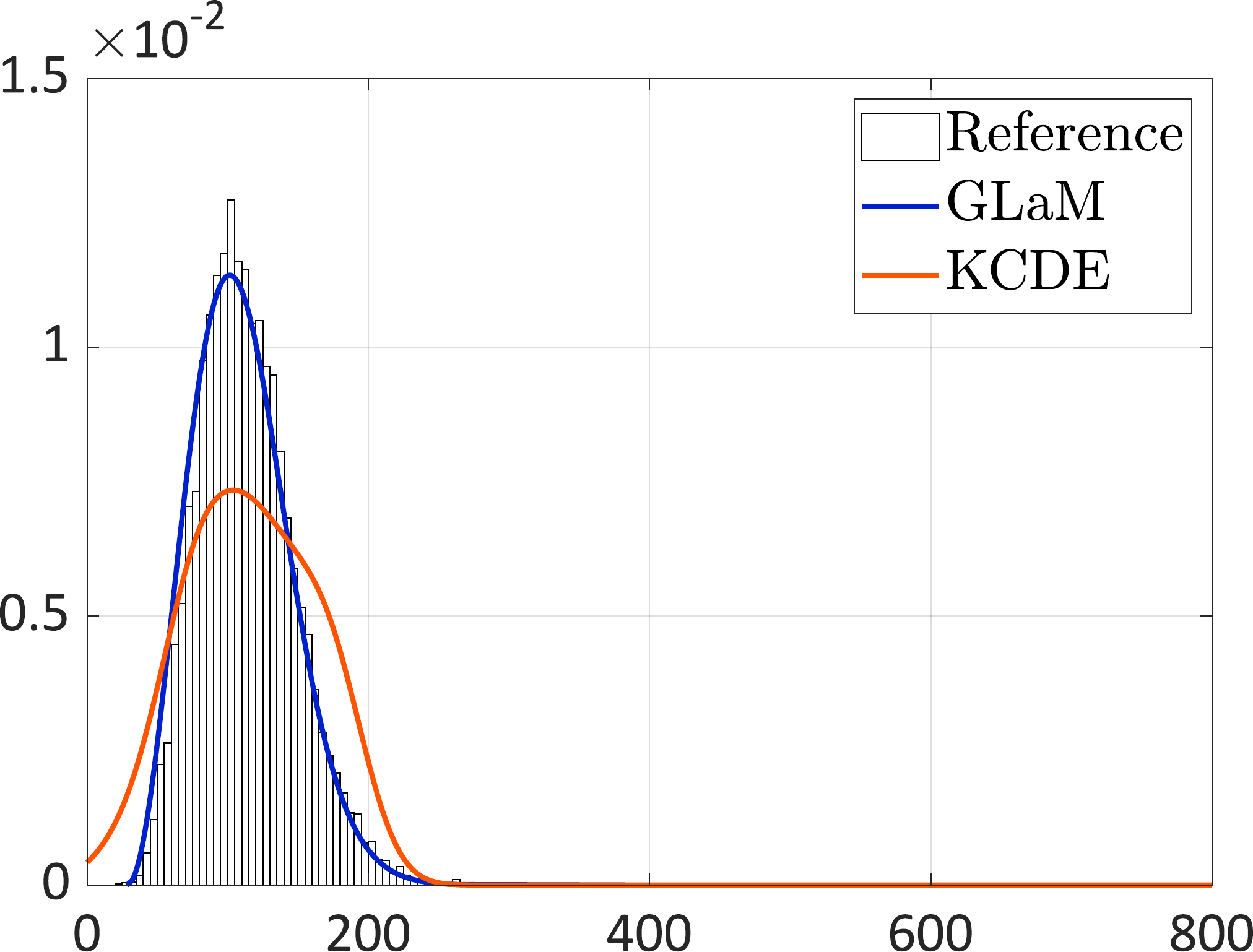}
		\caption{PDF for $\ve{x} = (1364,61)^T$}
	\end{subfigure}
	\caption{SIR model --- Comparisons of the emulated PDF, $N=500$}
	\label{fig:SIRpdf}
\end{figure}
\begin{figure}[!htbp]
	\centering
	\begin{subfigure}{.33\linewidth}
		\centering
		\includegraphics[height=0.8\linewidth, keepaspectratio]{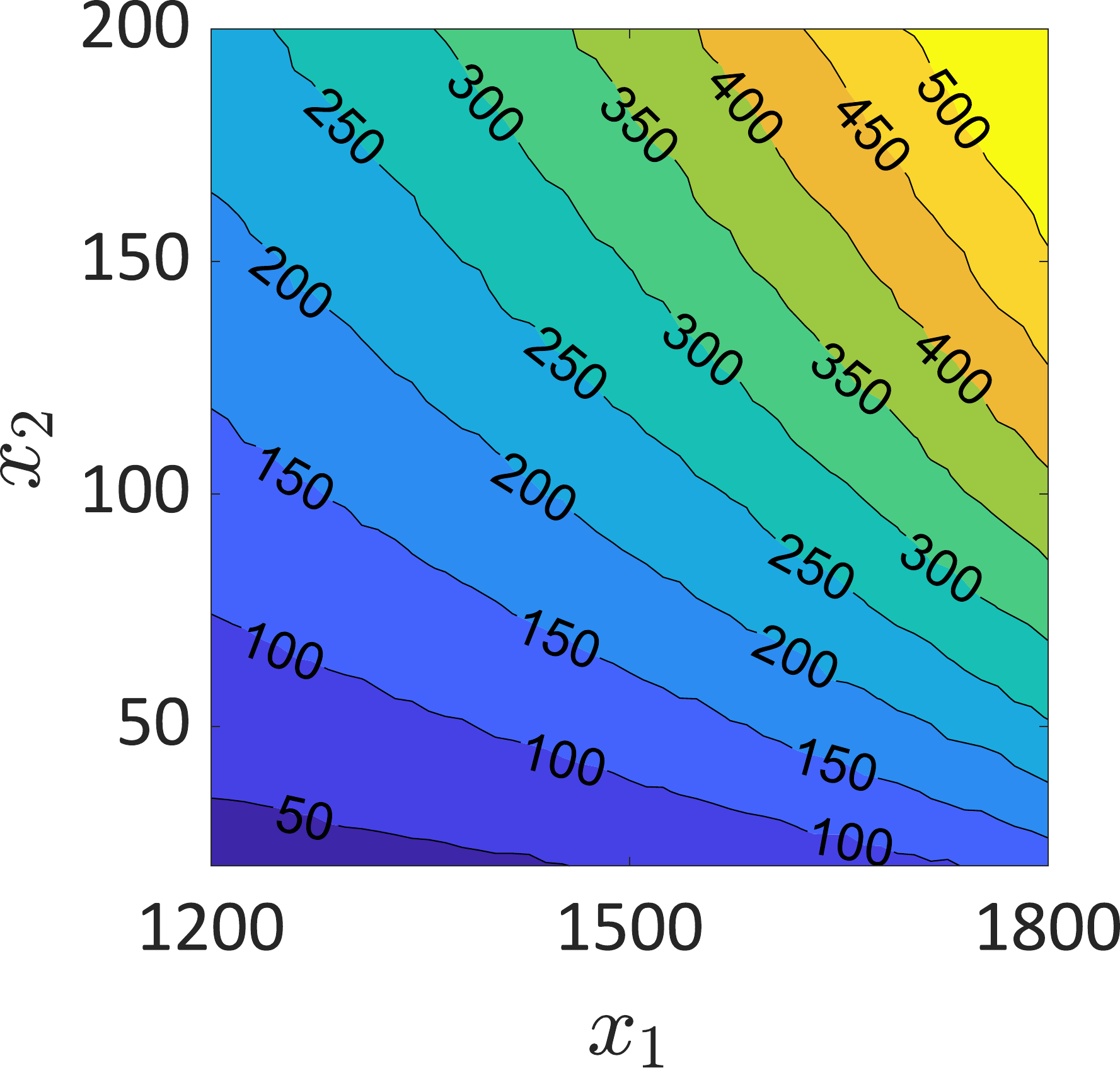}
		\caption{Reference}
	\end{subfigure}
	\hspace*{-6mm}
	\begin{subfigure}{.33\linewidth}
		\centering
		\includegraphics[height=0.8\linewidth, keepaspectratio]{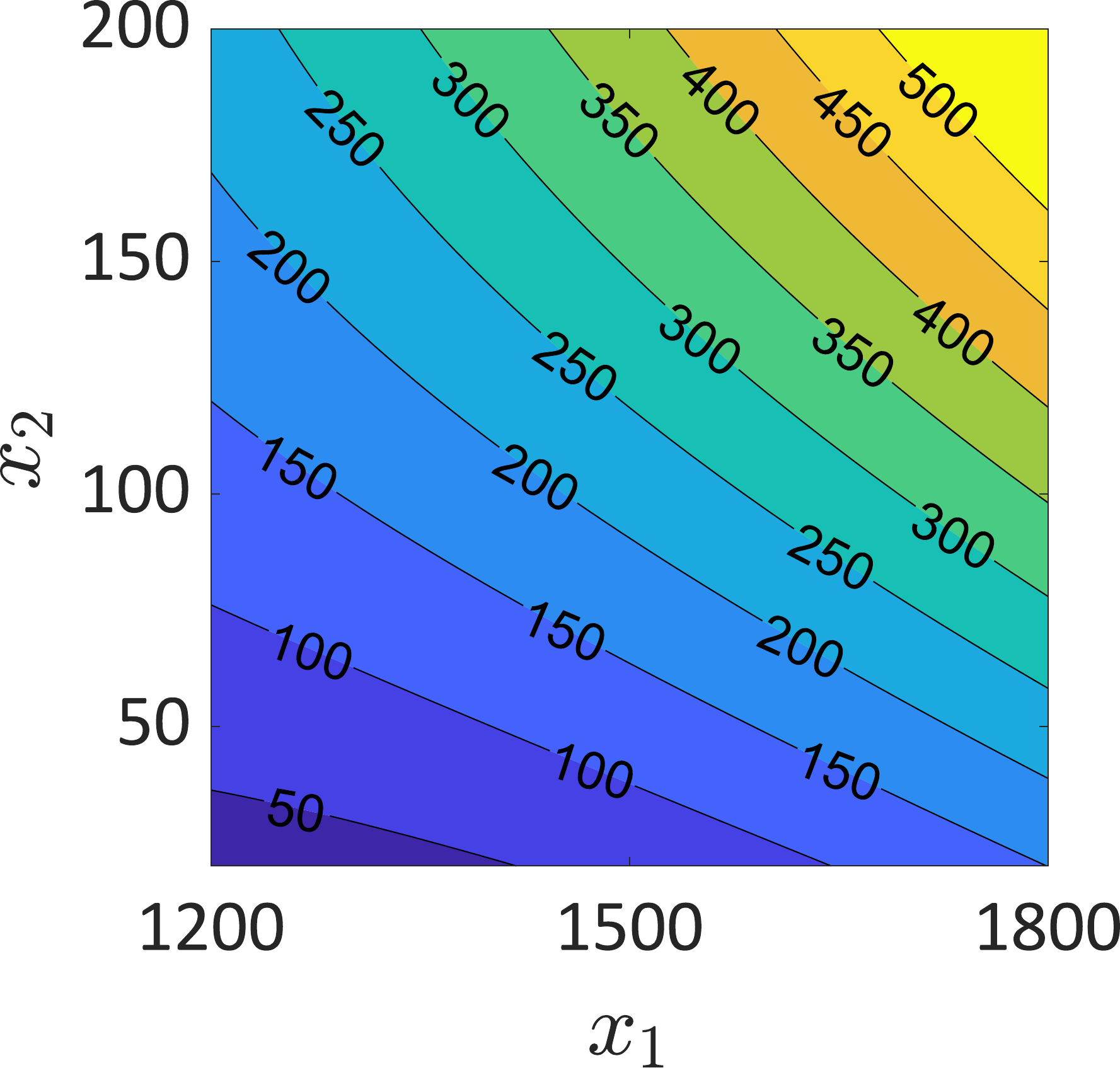}
		\caption{GLaM}
	\end{subfigure}
	\hspace*{-3mm}
	\begin{subfigure}{.33\linewidth}
		\centering
		\includegraphics[height=0.8\linewidth, keepaspectratio]{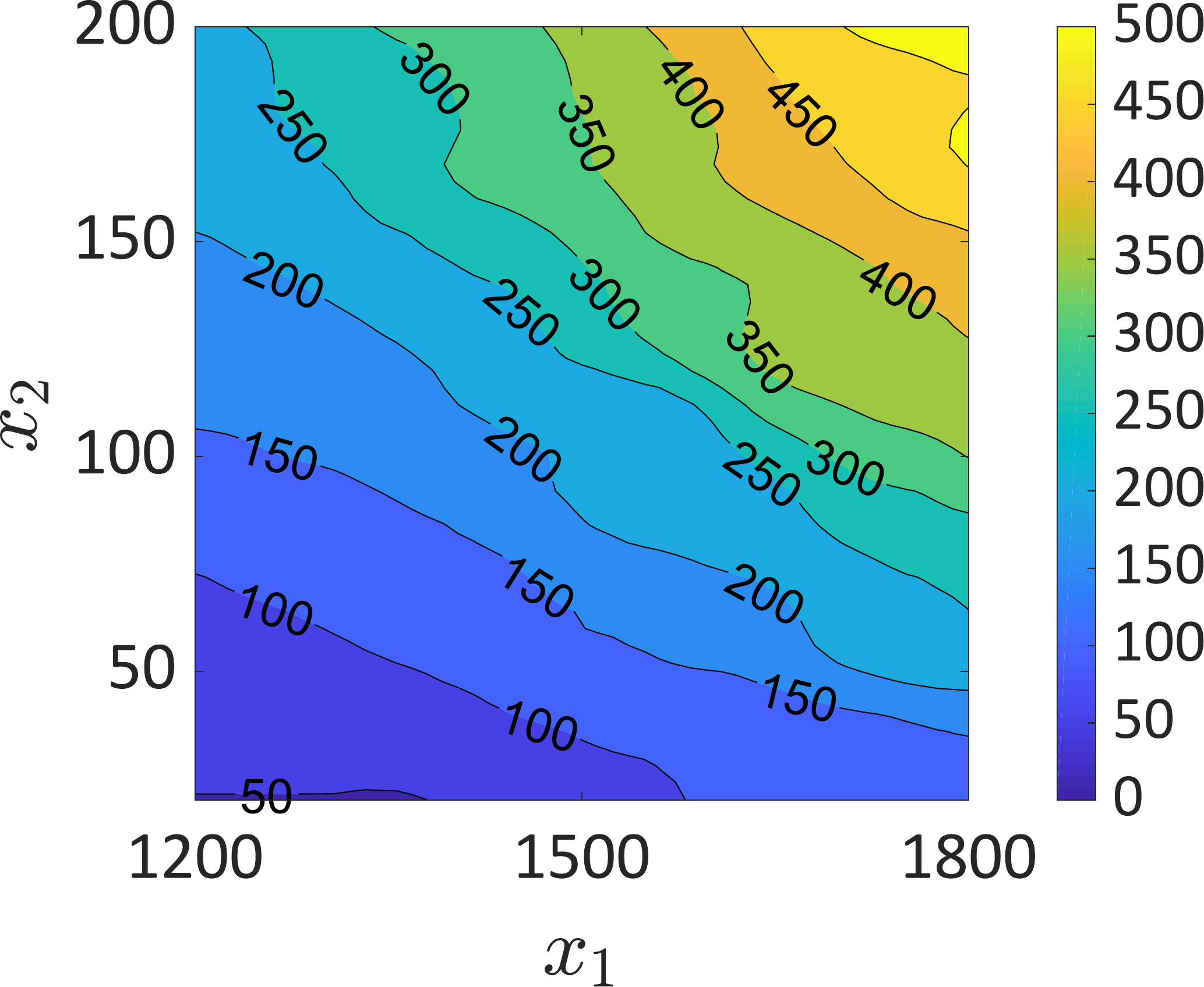}
		\caption{KCDE}
	\end{subfigure}
	\caption{SIR model --- Comparisons of the mean function estimation in the plan  $N=500$.}
	\label{fig:SIRm}
\end{figure}
\begin{figure}[!htbp]
	\centering
	\begin{subfigure}{.33\linewidth}
		\centering
		\includegraphics[height=0.8\linewidth, keepaspectratio]{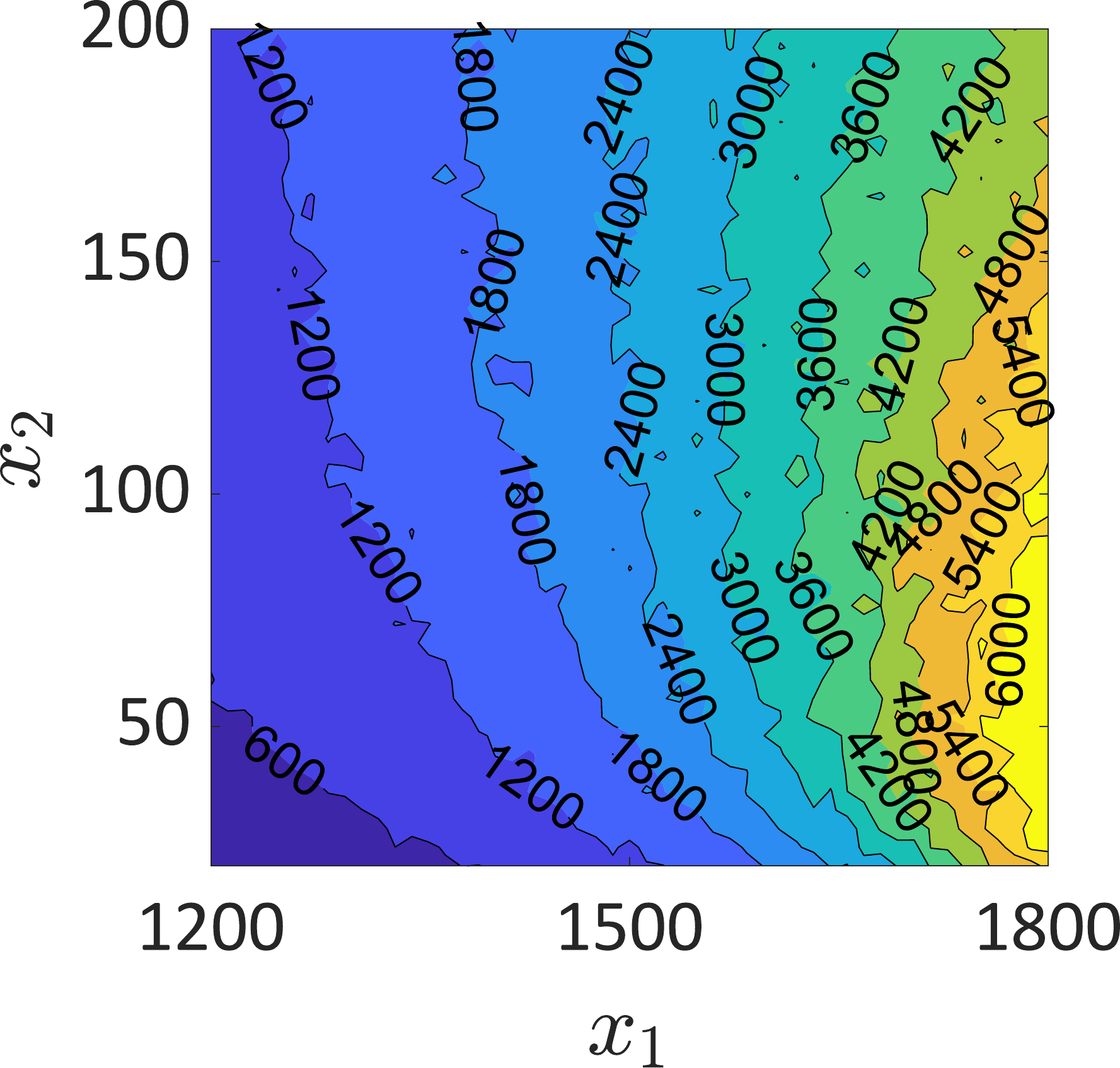}
		\caption{Reference}
	\end{subfigure}
	\hspace*{-6mm}
	\begin{subfigure}{.33\linewidth}
		\centering
		\includegraphics[height=0.8\linewidth, keepaspectratio]{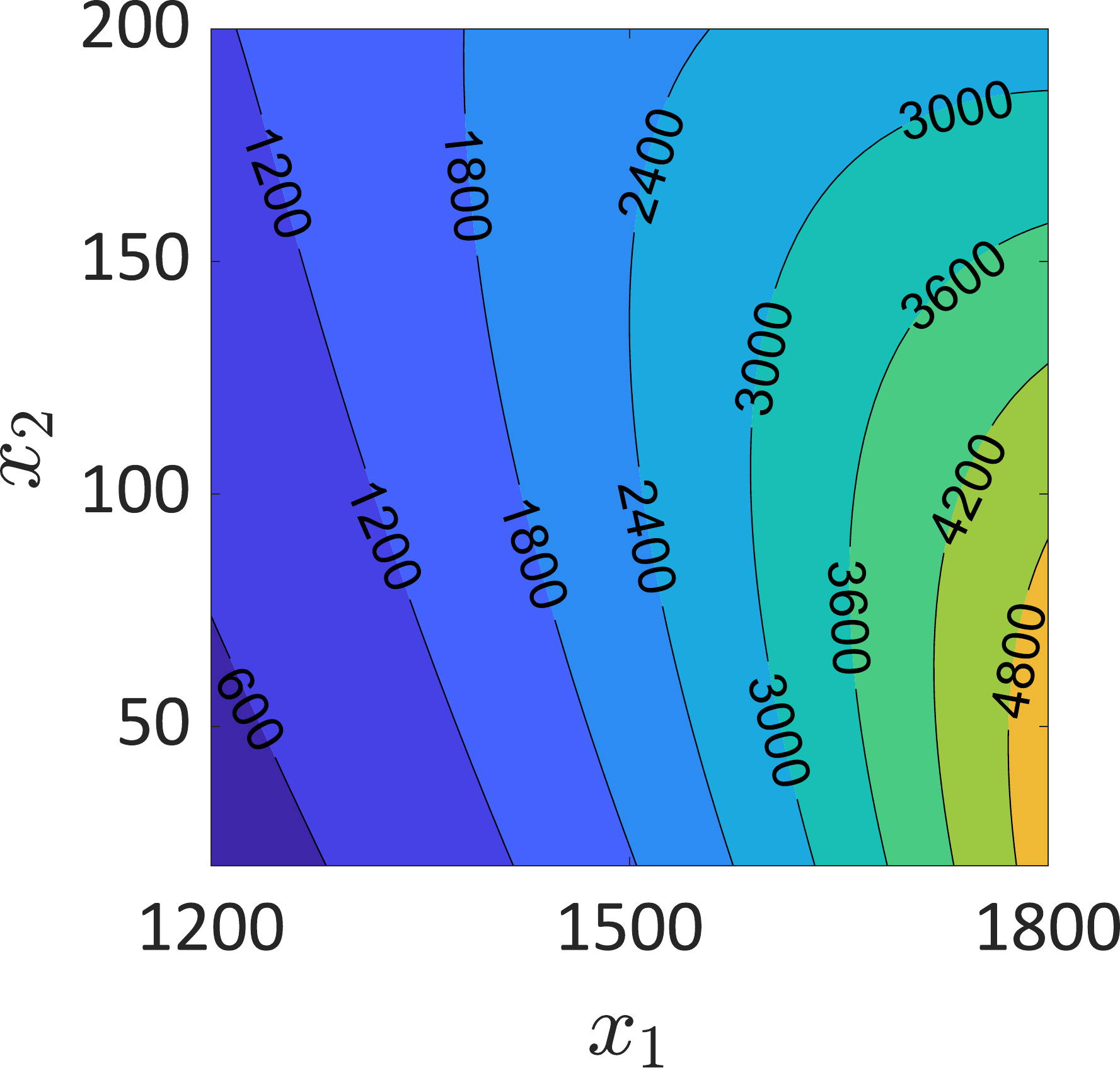}
		\caption{GLaM}
	\end{subfigure}
	\hspace*{-3mm}
	\begin{subfigure}{.33\linewidth}
		\centering
		\includegraphics[height=0.8\linewidth, keepaspectratio]{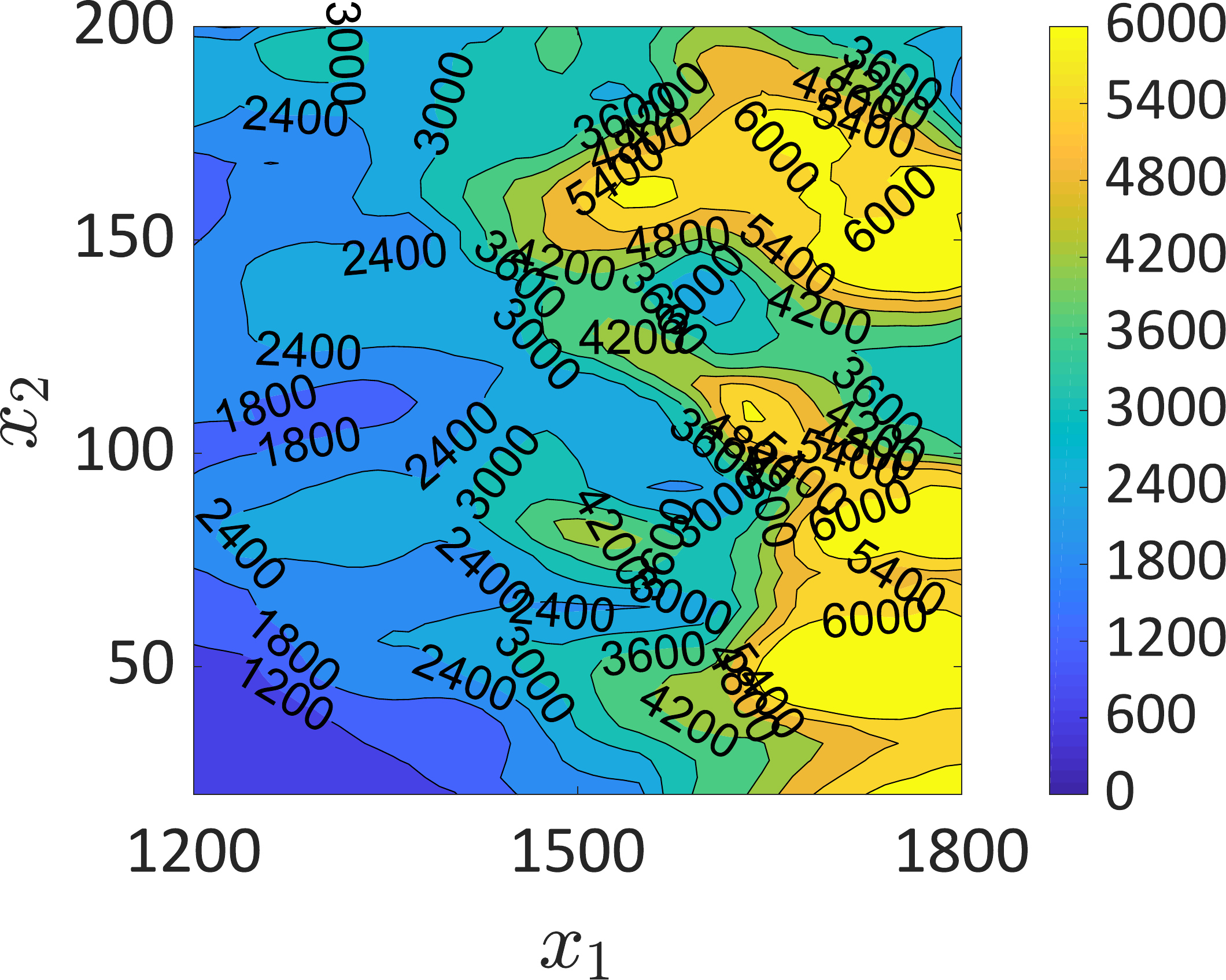}
		\caption{KCDE}
	\end{subfigure}
	\caption{SIR model --- Comparisons of the variance function estimation, $N=500$.}
	\label{fig:SIRv}
\end{figure}

\Cref{fig:SIRpdf} compares two response PDFs estimated by a GLaM and by a
KCDE for two sets of initial configurations, using an experimental design of 
size $N=500$. The reference histograms are 
obtained by $10^4$ repeated model runs for each $\ve{x}$. We observe that the 
PDF shape varies: it changes from symmetric to slightly right-skewed distributions 
depending on the input variables. The GLaM is able to accurately capture this shape variation, while KCDE exhibits relatively poor 
shape representations. 

\Cref{fig:SIRm,fig:SIRv} illustrate the mean and variance function. Because the analytical results are unknown for this simulator, we use $1{,}000$ replications to estimate these quantities for plotting. We observe that both functions vary nonlinearly in the input space. Compared with the KCDE, the GLaM is able to capture the trend of the two functions and provides more accurate estimates. More detailed comparisons of the surrogate models 
are shown in \Cref{fig:SIR_WS}. The error of the oracle Gaussian approximation is quite small. This implies that the response distribution for most of the input parameters in the input space is close to a Gaussian distribution. Nevertheless, GLaMs built on $N=4{,}000$ model runs still demonstrate better average behavior. For all sizes of experimental design, GLaMs clearly outperform KCDEs. For $N\geq 500$, the biggest error of GLaMs is smaller than the smallest error of KCDEs among the 50 repetitions. Finally, to achieve the same accuracy as GLaMs, KCDEs require around 7 times more model runs.
\par
\begin{figure}[!htbp]
	\centering
	\includegraphics[width=.65\linewidth, keepaspectratio]{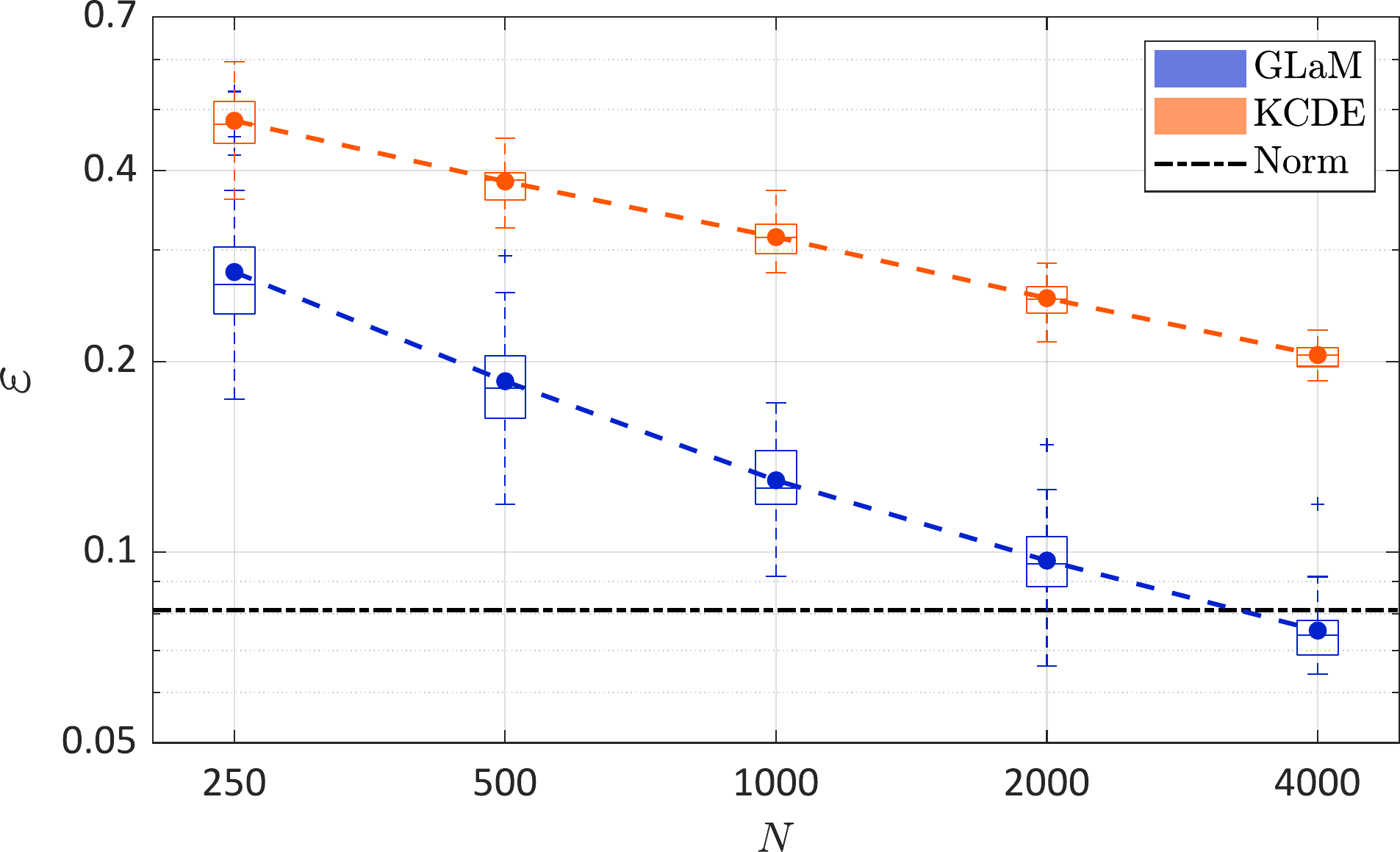}
	\caption{SIR model --- Comparison of the convergence 
		between GLaMs and KCDEs in terms of the normalized Wasserstein distance as 
		a function of the size of the experimental design. The dashed line denotes 
		the average value over $50$ repetitions of the full analysis. The 
		black dash-dotted line represents the error of the model assuming that the 
		response distribution is normal with the true mean and variance}
	\label{fig:SIR_WS}
\end{figure}
In epidemiological management, the expected value $\mu(\ve{x}) = 
\Esp{Y(\ve{x})}$ is crucial for decision making \cite{Merl2009}. Therefore, we 
investigate the accuracy of $\mu(\ve{x})$ estimations, and the results are in 
\Cref{fig:SIR_mean}. First, both GLaM and KCDE can explain more than 
90\% of the variance in $\mu(\ve{X})$ for $N=250$, which implies an overall 
accurate approximation to the mean function. With increasing $N$, GLaM shows a 
more rapid decay of the error. Furthermore, GLaMs built on $N=1{,}000$ have a 
similar (or even slightly better) performance to KCDEs with $N=4{,}000$.
\begin{figure}[!htbp]
	\centering
	\includegraphics[width=.65\linewidth, keepaspectratio]{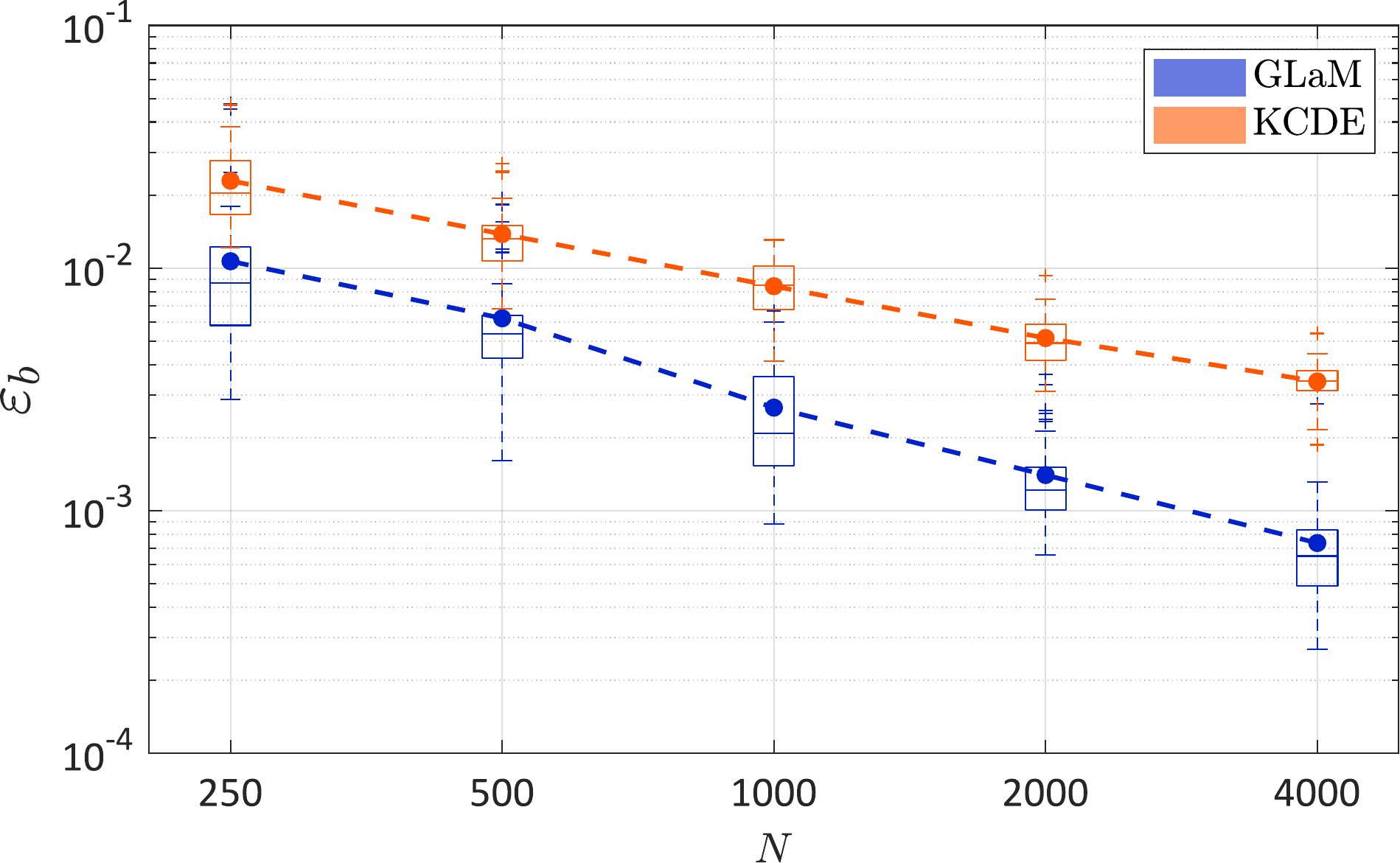}
	\caption{SIR model, mean value estimations --- Comparison of the convergence 
		between GLaMs and KCDEs in terms of the normalized mean-squared error 
		as a function of the size of the experimental design. The dashed line 
		denotes the average value over $50$ repetitions of the full analysis.}
	\label{fig:SIR_mean}
\end{figure}

\section{Conclusions}
\label{sec:conclusions}
This paper presents an efficient and accurate nonintrusive surrogate 
modeling method for stochastic simulators that does not require replicated runs 
of the latter. We follow the setting of Zhu and Sudret \cite{zhuIJUQ2020}, 
where the generalized lambda distribution is used to flexibly approximate 
the response probability density function. The distribution parameters, as 
functions of the input variables, are approximated by polynomial chaos 
expansions. In this paper, however, we do not require replicated runs of the 
stochastic simulator, which provides a more general and versatile approach. We 
propose the maximum conditional likelihood 
estimator to construct such a model for given basis functions. This estimation 
method is shown to be consistent and applicable to data with or without 
replications. In addition, we modify the feasible generalized least-squares 
algorithm to select suitable truncation schemes for the distribution 
parameters, which also provides a good starting point for the subsequent 
optimization of the likelihood function.
\par
The performance of the new method is illustrated on analytical examples 
and case studies in mathematical finance and epidemics. The results show that 
with a reasonable number of model runs, the developed algorithm can produce 
surrogate models that accurately approximate the response probability density 
function and capture the shape variations of the latter with $\ve{x}$. 
Considering the normalized Wasserstein distance as an error metric, generalized 
lambda models always show a better convergence rate than the 
nonparametric kernel conditional density estimator with adaptive bandwidth 
selections (from the package \texttt{np} in R). Furthermore, the proposed 
method generally yields more reliable estimates of certain important quantities.
\par
Quantifying the uncertainty of surrogate models that 
emulate the entire response distribution of a stochastic simulator remains to 
be developed in future work, especially when no or only a few replications are available. One possibility is to use cross-validation to calculate the expected loss. However, 
when the log-likelihood is used as the loss function such as \cref{eq:nloglh}, the 
resulting score is not intuitive and is difficult to interpret. Alternatively, 
with a given basis for $\ve{\lambda}(\ve{x})$ in GLaMs, one can use bootstrap 
\cite{Efron1982} to assess the uncertainty in the estimation of the 
coefficients. \Cref{fig:GBMboot} illustrates the PDF predictions of 100 bootstrapping GLaMs of a data set with $N=500$ of Example 1. Note that the associated theoretical aspects remain to be developed: it is necessary to prove the \emph{bootstrap consistency}, which is usually achieved by showing the asymptotic normality of 
the estimator. As a result, the asymptotic properties of the maximum likelihood 
estimator in \cref{eq:nloglh} need to be further investigated.
\begin{figure}[!htbp]
	\centering
	\begin{subfigure}{.48\linewidth}
		\centering
		\includegraphics[height=0.68\linewidth, keepaspectratio]{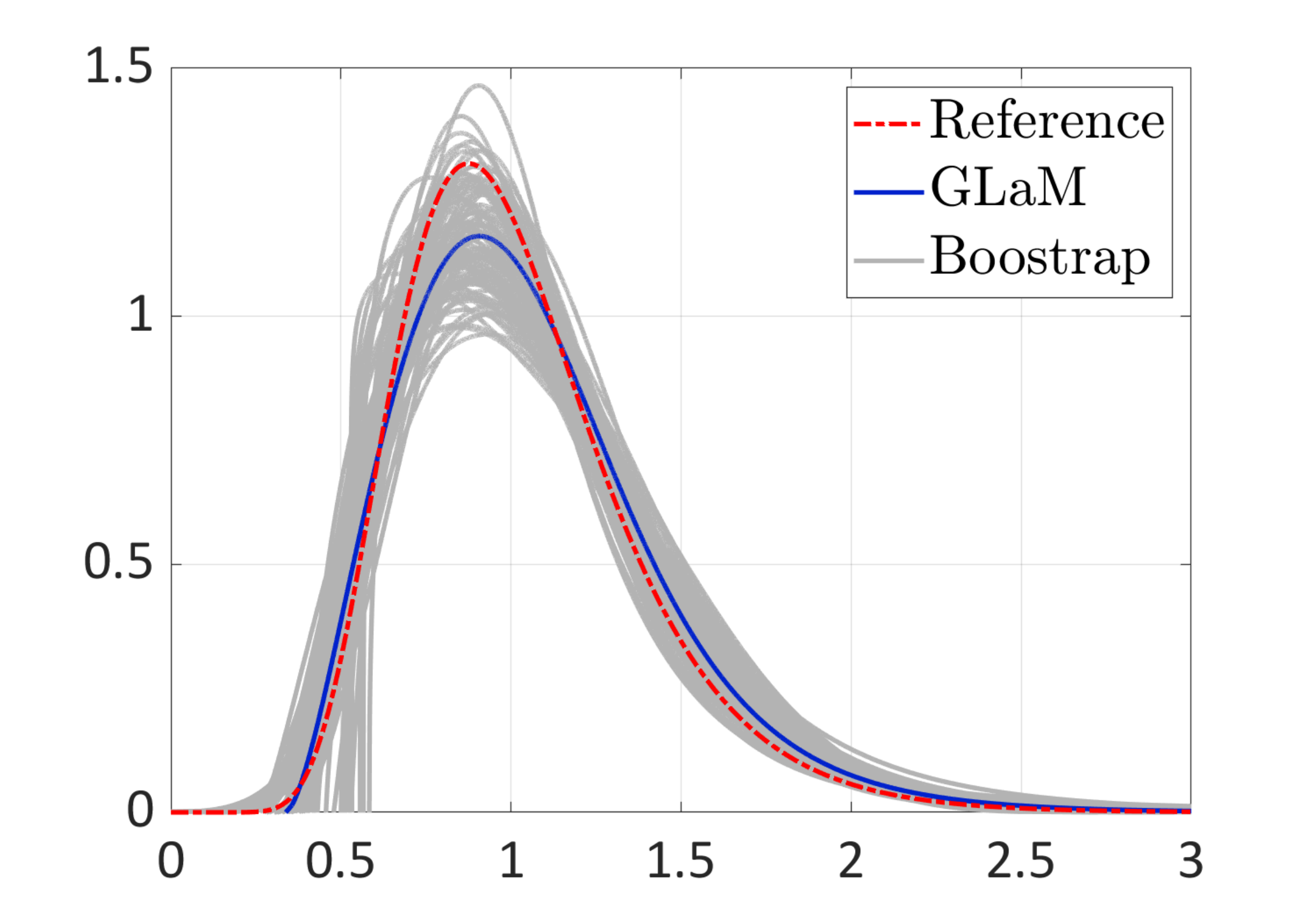}
		\caption{PDF for $\ve{x} = (0.03,0.33)^T$}
	\end{subfigure}
	\begin{subfigure}{.48\linewidth}
		\centering
		\includegraphics[height=0.68\linewidth, keepaspectratio]{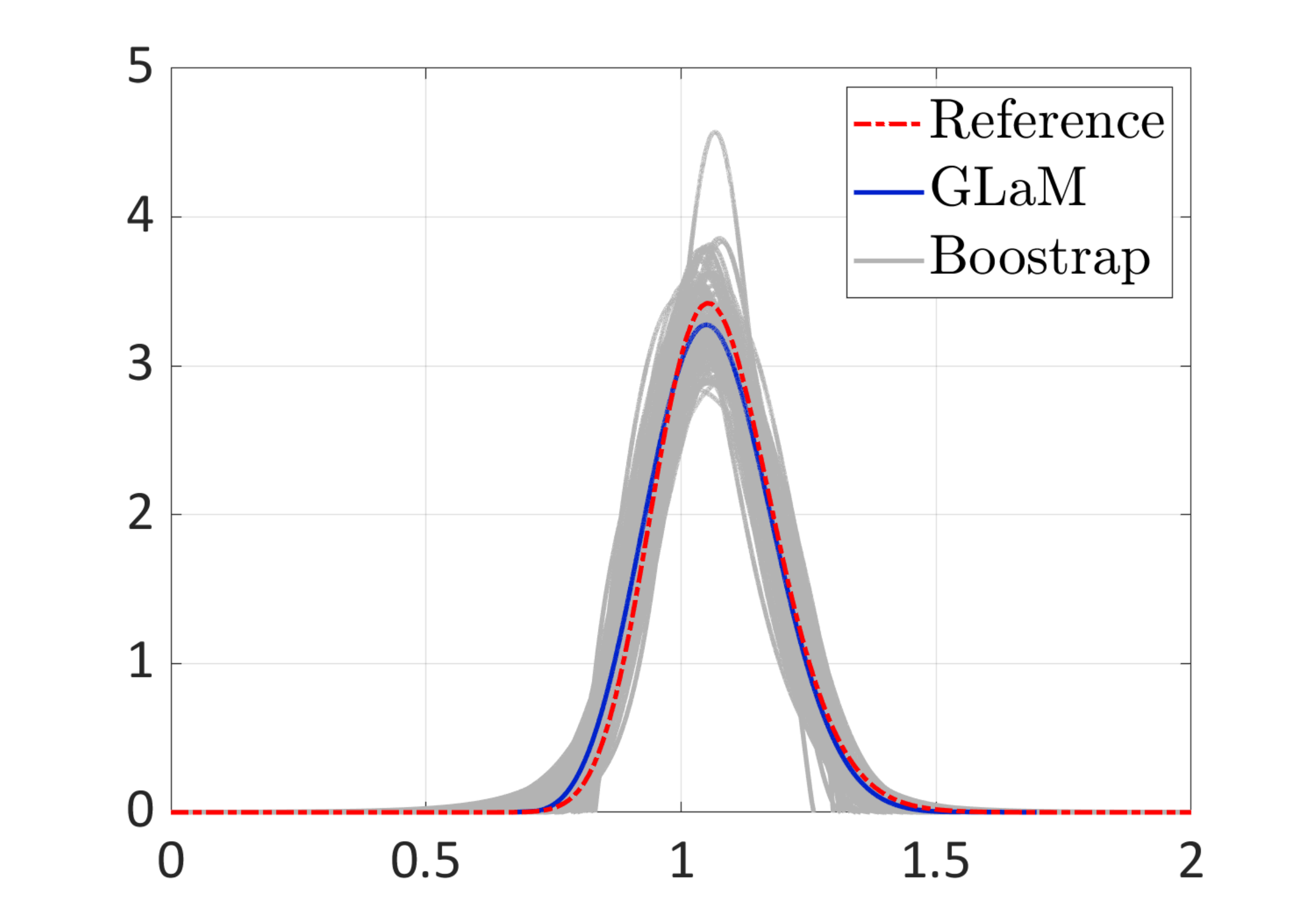}
		\caption{PDF for $\ve{x} = (0.07,0.11)^T$}
	\end{subfigure}
	\caption{Example 1 --- Uncertainty on the PDF predicted by GLaM for two values of the input parameters, using an experimental design of $N=500$. The blue line is the PDF predicted by GLaM from the 500~data points. The grey lines correspond to 100 PDFs generated by GLaM using bootstrapped experimental designs.}
	\label{fig:GBMboot}
\end{figure}

\par
Possible interesting applications of the proposed method to be investigated in 
future studies include reliability analysis and sensitivity analysis 
\cite{ZhuRESS2021}. To improve the performance of the generalized lambda surrogate model 
for small data sets, we plan to develop algorithms that select only important 
basis functions based on appropriate model selection criteria. 
Finally, since the generalized lambda distribution cannot represent 
multimodal distributions, potential extensions to mixtures of generalized 
lambda distributions may provide a more flexible surrogate for simulators with 
multimodal response distribution \cite{Fadikar2018}.

\section*{Acknowledgments}
This paper is a part of the project ``Surrogate Modeling for Stochastic 
Simulators (SAMOS)'' funded by the Swiss National Science Foundation (Grant  
\#200021\_ 175524), whose support is gratefully acknowledged.

\bibliographystyle{siamplain}
\bibliography{references}

\appendix
\section{Appendix}
\subsection{Consistency of the maximum likelihood estimator}
\label{sec:MLEConsistency}
In this section, we prove the consistency of the maximum likelihood estimator, 
as described in \Cref{thm:consistMLE}. For the ease of derivation, we introduce 
the following notation:
\begin{equation*}
	q_{\ve{c}}(\ve{x},y) = 
	f_{Y\mid\ve{X}}\left(y\bigr\rvert\ve{\lambda}^{\PC}(\ve{x};\ve{c})\right), \quad
	p_{\ve{c}}(\ve{x},y) = f_{\ve{X},Y}(\ve{x},y)=f_{X}(\ve{x})q_{\ve{c}}(\ve{x},y),
\end{equation*}
where $q_{\ve{c}}$ denotes the conditional PDF with model parameters $\ve{c}$, 
and $p_{\ve{c}}$ corresponds to the associated joint PDF. Under this setting, 
we assume that the true distribution $q_{0}$ belongs to the family for a 
particular set of coefficients $\ve{c}_0$, \ie $q_{0}=q_{\ve{c}_0}$ and 
$p_{0}=p_{\ve{c}_0}$. We denote the probability measure of the probability 
space of $(\ve{X},Y)$ by $P_0$ and the Lebesgue measure by $\mu$. 
\par
The maximum likelihood estimation defined in \cref{eq:joint} belongs to the 
generalized method of moments (GMM) \cite{Hansen1982} for which we define the 
\emph{loss function} by
\begin{equation}
	\ell_{\ve{c}}(\ve{x},y) =-\log\left(q_{\ve{c}}(\ve{x},y)\right) 
	\mathbbm{1}_{q_0(\ve{x},y)>0}(\ve{x},y).
\end{equation}
It holds that
\begin{equation*}
	\ve{c}_0 = \arg\min_{\ve{c}}l(\ve{c}), \text{ where } 
	l(\ve{c})=\Esp{\ell_{\ve{c}}(\ve{X},Y)}.
\end{equation*}
The maximum likelihood estimator is then defined by
\begin{equation*}
	\hat{\ve{c}} = \arg\min_{\ve{c}} l_n(\ve{c}), \text{ where } l_n(\ve{c}) = \frac{1}{n} \sum_{i=1}^{n}\ell_{\ve{c}}\left(\ve{X}^{(i)},Y^{(i)}\right),
\end{equation*}
where $l_n$ is the empirical version of $l$.
\par
To prove the consistency of a GMM estimator, the \emph{uniform law of large numbers} is usually used. In the case of a maximum likelihood estimator for the generalized lambda 
model, classical methods \cite{Newey1994} to prove the uniform law of large 
numbers cannot be applied directly, due to the fact that the support of 
$q_{\ve{c}}$ can depend on the model parameters $\ve{c}$, as shown in 
\cref{eq:Bounds}. To circumvent this problem, we use the techniques suggested 
by \cite{vandegeer2000} for the proof. 
\par
\begin{lemma}\label{lem:BoundCon}
	Under the conditions described in \Cref{thm:consistMLE}, we have the following:
	\begin{enumerate}[label=(\roman*)]
		\item Boundedness: $\sup_{\ve{c} \in \cc} q_{\ve{c}}(\ve{x},y) < +\infty$.
		\item Continuity: $\forall \, \tilde{\ve{c}} \in \cc$, the map $\ve{c} \mapsto q_{\ve{c}}$ is continuous at $\tilde{\ve{c}}$ for $\mu$-almost all $\left(\ve{x},y\right) \in \cd_{\ve{x}}\times\Rr$.
	\end{enumerate}
\end{lemma}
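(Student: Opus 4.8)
The plan is to reduce both claims to elementary properties of the FKML density $f^{\GLD}$ of \cref{eq:FKMLpdf}. Fix $\ve{x}$; the map $\ve{c}\mapsto\ve{\lambda}^{\PC}(\ve{x};\ve{c})$ is continuous (it is a polynomial in $\ve{c}$ for $l=1,3,4$, and the exponential of one for $l=2$, see \cref{eq:lamPCE,eq:lamPCE_lam2}) and takes values in $\Rr\times(0,\infty)\times\Rr\times\Rr$, so that $q_{\ve{c}}(\ve{x},y)=f^{\GLD}\bigl(y;\ve{\lambda}^{\PC}(\ve{x};\ve{c})\bigr)$ and it suffices to study $\ve{\lambda}\mapsto f^{\GLD}(y;\ve{\lambda})$. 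Moreover, since $\cc$ and $\cd_{\ve{X}}$ are compact and $(\ve{x},\ve{c})\mapsto\ve{\lambda}^{\PC}(\ve{x};\ve{c})$ is continuous, the image $K=\acc{\ve{\lambda}^{\PC}(\ve{x};\ve{c}):\ve{x}\in\cd_{\ve{X}},\,\ve{c}\in\cc}$ is a compact subset of $\Rr\times(0,\infty)\times\Rr\times\Rr$; in particular $\lambda_2$ is bounded above and away from $0$, and $\lambda_3,\lambda_4$ are bounded, over $K$.

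For the boundedness statement (i), I would write, using \cref{eq:FKMLpdf}, $f^{\GLD}(y;\ve{\lambda})=\lambda_2/g(u)$ with $g(u)=u^{\lambda_3-1}+(1-u)^{\lambda_4-1}$ and $u=Q^{-1}(y;\ve{\lambda})\in(0,1)$ (with $f^{\GLD}=0$ for $y$ outside the support, and the appropriate finite values at the endpoints), and show that $\inf_{u\in(0,1)}g(u)$ is bounded below by a strictly positive constant uniformly over $(\lambda_3,\lambda_4)$ in the compact projection of $K$. This is a short case split: if $\lambda_3\le 1$ then $u^{\lambda_3-1}\ge 1$ on $(0,1]$, and symmetrically for $\lambda_4$; if $\lambda_3,\lambda_4>1$ then $g(u)\ge(1-u)^{\lambda_4-1}\ge 2^{-(\lambda_4-1)}$ on $(0,1/2]$ and $g(u)\ge u^{\lambda_3-1}\ge 2^{-(\lambda_3-1)}$ on $[1/2,1)$. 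Together with the uniform upper bound on $\lambda_2$ this yields $\sup_{\ve{c}\in\cc}q_{\ve{c}}(\ve{x},y)\le\sup_{\ve{\lambda}\in K}f^{\GLD}(y;\ve{\lambda})<+\infty$.

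For the continuity statement (ii), by the reduction above it suffices to show that $\ve{\lambda}\mapsto f^{\GLD}(y;\ve{\lambda})$ is continuous at $\tilde{\ve{\lambda}}:=\ve{\lambda}^{\PC}(\ve{x};\tilde{\ve{c}})$ for $(\ve{x},y)$ outside a $\mu$-null set. First I would record that $Q(u;\ve{\lambda})$ of \cref{eq:FKML} extends to a function jointly continuous on $(0,1)\times\bigl(\Rr\times(0,\infty)\times\Rr\times\Rr\bigr)$ and strictly increasing in $u$ (reading $(u^{\lambda_3}-1)/\lambda_3$ as $\log u$ when $\lambda_3=0$, and likewise at $\lambda_4=0$), and that the support endpoints $B_l,B_u$ of \cref{eq:Bounds} are continuous as maps into $[-\infty,+\infty]$ (there is no jump at $\lambda_3=0$, since $\lambda_1-1/(\lambda_2\lambda_3)\to-\infty$ as $\lambda_3\downarrow 0$). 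Then I would split on the position of $y$: if $B_l(\tilde{\ve{\lambda}})<y<B_u(\tilde{\ve{\lambda}})$ then $\tilde u=Q^{-1}(y;\tilde{\ve{\lambda}})\in(0,1)$, and monotonicity together with the joint continuity of $Q$ forces $Q^{-1}(y;\ve{\lambda})\to\tilde u$ as $\ve{\lambda}\to\tilde{\ve{\lambda}}$, whence $f^{\GLD}(y;\ve{\lambda})=\lambda_2/\bigl(Q^{-1}(y;\ve{\lambda})^{\lambda_3-1}+(1-Q^{-1}(y;\ve{\lambda}))^{\lambda_4-1}\bigr)\to f^{\GLD}(y;\tilde{\ve{\lambda}})$; if $y<B_l(\tilde{\ve{\lambda}})$ or $y>B_u(\tilde{\ve{\lambda}})$ (which forces the relevant endpoint to be finite) then $f^{\GLD}(y;\tilde{\ve{\lambda}})=0$, and continuity of $B_l$ on $\acc{\lambda_3>0}$ (resp.\ $B_u$ on $\acc{\lambda_4>0}$) keeps $y$ outside the support of all $\ve{\lambda}$ near $\tilde{\ve{\lambda}}$, so $f^{\GLD}(y;\ve{\lambda})=0$ as well. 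The only remaining case is $y\in\acc{B_l(\tilde{\ve{\lambda}}),B_u(\tilde{\ve{\lambda}})}$, at most two values of $y$ for each $\ve{x}$; the set of such $(\ve{x},y)$ is contained in the graphs of the two measurable functions $\ve{x}\mapsto B_l(\ve{\lambda}^{\PC}(\ve{x};\tilde{\ve{c}}))$ and $\ve{x}\mapsto B_u(\ve{\lambda}^{\PC}(\ve{x};\tilde{\ve{c}}))$, hence is $\mu$-null by Fubini (the fiber over each $\ve{x}$ having at most two points).

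The part I expect to require the most care is the parameter-dependence of the support: one must verify that the case distinction in \cref{eq:Bounds} at $\lambda_3=0$ and $\lambda_4=0$ reflects the formula and not a true discontinuity of the density, and that the finitely many boundary values of $y$ one is forced to exclude sweep out only a Lebesgue-null subset of $\cd_{\ve{X}}\times\Rr$. By contrast, the uniform lower bound on $\inf_u g(u)$ in (i), and the joint continuity of $Q$ and of its inverse in (ii), are routine once the case splits are set up.
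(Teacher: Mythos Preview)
Your proposal is correct and follows essentially the same three-case decomposition as the paper (interior of the support, exterior, boundary endpoints, with the last swept into a $\mu$-null set). The only technical differences are that the paper invokes the implicit function theorem---computing all partial derivatives of $Q$ in \cref{eq:dqdu}--\cref{eq:dqdlam4}---to obtain continuity of $u=Q^{-1}(y;\ve{\lambda})$, whereas you use the more elementary monotonicity-plus-joint-continuity argument, and your case split for the lower bound on $g(u)$ in~(i) differs cosmetically from the paper's single-exponent reduction $u^{\lambda_3-1}+(1-u)^{\lambda_4-1}\ge u^{\overline{k}}+(1-u)^{\overline{k}}$ with $\overline{k}=\max\{\overline{C}_3-1,\overline{C}_4-1\}$.
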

\begin{proof}
	\emph{(i)} As the conditions of \Cref{thm:consistMLE} indicate that $\cd_{\ve{X}}$ and 	$\cc$ are compact, the two sets are bounded according to the 	\emph{Heine--Borel theorem}. Hence, the value of 	$\ve{\lambda}^{\PC}\left(\ve{x};\ve{c}\right)$ is also bounded. We denote 
	respectively $\acc{\overline{C}_i,i=1,\ldots,4}$ and 
	$\acc{\underline{C}_i,i=1,\ldots,4}$ as the upper and lower bounds for each 
	component of $\ve{\lambda}$:
	\begin{equation}\label{eq:lambound}
		\underline{C}_i\leq \lambda_i \leq \overline{C}_i \quad \forall i=1,\ldots,4.
	\end{equation}
	In addition, \cref{eq:lamPCE_lam2} guarantees that 
	$\lambda_2^{\PC}(\ve{x};\ve{c})$ is bounded away from 0, \ie 
	$\underline{C}_2 >0$. Consider now \cref{eq:FKMLpdf} to evaluate the PDF of 
	GLDs. If $u$ in \cref{eq:FKMLpdf} does not exist in $[0,1]$, $q_{\ve{c}} = 
	0$ and thus bounded. For $u \in [0,1]$, we have
	\begin{equation}\label{eq:ineq}
		\frac{\lambda_2}{u^{\lambda_3-1} + (1-u)^{\lambda_4-1}} \leq \frac{\overline{C}_2}{u^{\overline{k}}+(1-u)^{\overline{k}}} ,
	\end{equation}
	where
	\begin{equation*}\label{eq:min34}
		\overline{k}=\max\acc{\overline{C}_3-1,\overline{C}_4-1}.
	\end{equation*}
	\par
	Define the function $m(u) = u^{\overline{k}}+(1-u)^{\overline{k}}$, which corresponds to the denominator of \cref{eq:ineq}. For $\overline{k}=0$ and $1$, $m(u)$ is a constant function equal to $2$ and $1$, respectively. If $k\neq 0,1$, the derivative $m^\prime(u) = \overline{k}\left(u^{\overline{k}-1}-(1-u)^{\overline{k}-1}\right)$ is equal to $0$ only at $u=0.5$ in $[0,1]$. As a result, $\min m(u) = \min\acc{m(0),m(0.5),m(1)}$. For $\overline{k}<0$, $\min m(u) = m(0.5) = 2^{1-\overline{k}}$. While for $\overline{k}>0$, $\min m(u) = \min\acc{m(0),m(0.5),m(1)} = \min\acc{1,2^{1-\overline{k}}}$. 
	Hence, we have $\min m(u) \geq \min\acc{1,2^{1-\overline{k}}} = C_m$. 
	Taking this property into account, \cref{eq:ineq} becomes
	\begin{equation}\label{ineq:cq}
		\frac{\lambda_2}{u^{\lambda_3-1} + (1-u)^{\lambda_4-1}} \leq 
		\frac{\overline{C}_2}{C_m} = C_q.
	\end{equation}
	Therefore, $\sup_{\ve{c} \in \cc} q_{\ve{c}}(\ve{x},y) \leq C_q$.
	\par	
	\emph{(ii)} Next, we prove the continuity. For any $\tilde{\ve{c}} \in \cc$, we classify the points $(\ve{x},y) \in \cd_{\ve{x}}\times\Rr$ into three groups based on their corresponding latent variable $\tilde{u}$: (1) $\tilde{u} \in (0,1)$, (2) $\tilde{u}$ does not exist within $[0,1]$, and
	(3) $\tilde{u} = 0$ or $1$.
	\par
	For $(\ve{x},y)$ in the first class, $y$ is an interior point of the support 
	of the conditional distribution $q_{\tilde{\ve{c}}}(\ve{x},\cdot)$. 
	Thereby, the following equation holds:
	\begin{equation} \label{eq:nleq}
		y = Q(\tilde{u};\tilde{\ve{\lambda}}) = \tilde{\lambda}_1 + \frac{1}{\tilde{\lambda}_2}\left( \frac{\tilde{u}^{\tilde{\lambda}_3}-1}{\tilde{\lambda}_3} - \frac{(1-\tilde{u})^{\tilde{\lambda}_4}-1}{\tilde{\lambda}_4}\right),
	\end{equation}
	where the distribution parameters $\tilde{\ve{\lambda}}$ are obtained by evaluating $\ve{\lambda}^{\PC}\left(\ve{x};\tilde{\ve{c}}\right)$. The partial derivatives of $Q(u;\ve{\lambda})$ with respect to all the relevant parameters are
	\begin{align}
		\frac{\partial Q}{\partial u} &= \frac{1}{\lambda_2}\left(u^{\lambda_3-1} + (1-u)^{\lambda_4 - 1}\right), \label{eq:dqdu}\\
		\frac{\partial Q}{\partial \lambda_1} &= 1, \label{eq:dqdlam1}\\
		\frac{\partial Q}{\partial \lambda_2} &= - \frac{1}{\lambda^2_2}\left( \frac{u^{\lambda_3}-1}{\lambda_3} - \frac{(1-u)^{\lambda_4}-1}{\lambda_4}\right), \label{eq:dqdlam2}\\
		\frac{\partial Q}{\partial \lambda_3} & = \frac{1}{\lambda_2 \lambda^2_3}\left(u^{\lambda_3}\ln(u)\lambda_3 - \left(u^{\lambda_3}-1\right)\right), \label{eq:dqdlam3}\\
		\frac{\partial Q}{\partial \lambda_4} & = \frac{1}{\lambda_2 \lambda^2_4}\left(\left( (1-u)^{\lambda_4} - 1\right) - (1-u)^{\lambda_4}\ln(1-u)\lambda_4 \right).\label{eq:dqdlam4}
	\end{align}
	It can be easily observed that \cref{eq:dqdu} and \cref{eq:dqdlam1} are 
	continuous functions of $u \in (0,1)$ and $\ve{\lambda}$. Although 
	\cref{eq:dqdlam2} is undefined for $\lambda_3 = 0$ and $\lambda_4 = 0$, the limit exists 
	according to \emph{l'H\^opital's rule}. The same holds for \cref{eq:dqdlam3} and \cref{eq:dqdlam4}. As a result, we 
	can extend \crefrange{eq:dqdlam2}{eq:dqdlam4} by continuity, and thus they 
	become continuous functions of $u \in (0,1)$ and $\ve{\lambda}$. Therefore, 
	$Q(u,\ve{\lambda})$ is continuously differentiable. In addition, 
	\cref{eq:dqdu} is bounded away from 0. These two properties allow one to 
	apply the \emph{implicit function theorem}, and thus $u$ is a continuous 
	function of $\ve{\lambda}$ in a neighborhood of $\tilde{\ve{\lambda}}$, 
	which implies that $u$ is continuous at $\tilde{\ve{\lambda}}$. According 
	to \cref{eq:FKMLpdf}, the PDF is a continuous function of both $u$ and 
	$\ve{\lambda}$. Hence, using the continuity shown before, 
	$f_Y(y;\ve{\lambda})$ is continuous at $\tilde{\ve{\lambda}}$. Furthermore, 
	$\ve{\lambda}^{\PC}(\ve{x};\ve{c})$ are $C^{\infty}$ functions of $\ve{c}$, 
	and thus $\ve{\lambda}^{\PC}(\ve{x};\ve{c})$ is continuous at 
	$\tilde{\ve{c}}$. Combining both the continuity of $f_Y(y;\ve{\lambda})$ 
	and $\ve{\lambda}^{\PC}(\ve{x};\ve{c})$, we have that 
	$q_{\ve{c}}(\ve{x},y)$ is continuous at $\tilde{\ve{c}}$ for the point 
	$(\ve{x},y)$. 
	\par
	Now consider a point $(\ve{x},y)$ in the second class, which implies that $y$ is outside the support of $q_{\tilde{\ve{c}}}(\ve{x},\cdot)$, say, $y$ is smaller than the lower bound of the support of $q_{\tilde{\ve{c}}}(\ve{x},\cdot)$. In this case, $q_{\tilde{\ve{c}}}(\ve{x},y) = 0$. According to \cref{eq:Bounds}, if the lower bound is finite, it is a continuous function of $\ve{\lambda}$ and thus continuous at $\tilde{\ve{c}}$. As a result, for $\ve{c}$ within a certain neighborhood of $\tilde{\ve{c}}$, the lower bound is larger than $y$, which implies $q_{\ve{x}}(\ve{x},y)=0$ for $\ve{c}$ in this neighborhood. Thereby, $q_{\ve{c}}(\ve{x},y)$ is continuous at $\tilde{\ve{c}}$. Analogous reasoning holds for the case where $y$ is bigger than the upper bound of the support.
	\par
	The last class corresponds to the case where $y$ is located on the endpoint 
	of the support of $q_{\tilde{\ve{c}}}(\ve{x},\cdot)$. By taking 
	$\tilde{u}=0$ and $1$ in \cref{eq:nleq} or considering directly \cref{eq:Bounds}, we obtain two associated deterministic functions between $\ve{x}$ and $y$. As a result, points of the third class can be represented by two curves in $\cd_x \times \Rr$, whose Lebesgue measure is zero. This closes the proof of continuity.	
\end{proof}

\begin{lemma}\label{lem:ULLN}
	The class $\cg$ defined below satisfies the uniform strong law of large numbers:
	\begin{equation}
		\cg = \acc{g_{\ve{c}} = \log\left(\frac{q_{\ve{c}}+q_0}{2q_0}\right)\mathbbm{1}_{q_0>0}: \ve{c} \in \cc}.
	\end{equation}
\end{lemma}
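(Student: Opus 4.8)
The target is the assertion $\sup_{\ve{c}\in\cc}\bigl\lvert\tfrac1n\sum_{i=1}^{n}g_{\ve{c}}(\ve{X}^{(i)},Y^{(i)})-\Esp{g_{\ve{c}}(\ve{X},Y)}\bigr\rvert\xrightarrow{\text{a.s.}}0$. The plan is to bring $\cg$ into the scope of the classical uniform strong law of large numbers over a compact parameter set, whose hypotheses are: (a) for $\mu$-almost every $(\ve{x},y)$ the map $\ve{c}\mapsto g_{\ve{c}}(\ve{x},y)$ is continuous on $\cc$; (b) $\cc$ is compact; (c) $\cg$ admits an integrable envelope, $\Esp{\sup_{\ve{c}\in\cc}\lvert g_{\ve{c}}(\ve{X},Y)\rvert}<\infty$. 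Hypothesis (a) is essentially \Cref{lem:BoundCon}(ii): at a point with $q_0(\ve{x},y)=0$ we have $g_{\ve{c}}(\ve{x},y)=0$ for every $\ve{c}$, and at a point with $q_0(\ve{x},y)>0$ the continuity of $\ve{c}\mapsto q_{\ve{c}}(\ve{x},y)$ is transported by the map $(s,t)\mapsto\log\tfrac{s+t}{2t}$, which is continuous on $(0,\infty)^{2}$. Hypothesis (b) is assumption (iii) of \Cref{thm:consistMLE}. The whole difficulty is concentrated in (c): this is precisely the point where the $\ve{c}$-dependence of the support of $q_{\ve{c}}$ would wreck the classical argument of \cite{Newey1994}, and it is to repair it that the regularisation $\tfrac{q_{\ve{c}}+q_0}{2}$ was built into $\cg$.

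To produce the envelope I would treat the two sides separately. The lower side is immediate: $q_{\ve{c}}\ge0$ gives $\tfrac{q_{\ve{c}}+q_0}{2q_0}\ge\tfrac12$ wherever $q_0>0$, hence $g_{\ve{c}}\ge-\log2$ for every $\ve{c}\in\cc$. For the upper side I would invoke the uniform density bound $q_{\ve{c}}\le C_q$ from \Cref{lem:BoundCon}(i), which yields
\begin{equation*}
\sup_{\ve{c}\in\cc}g_{\ve{c}}(\ve{x},y)\;\le\;G(\ve{x},y)\;\eqdef\;\log\!\left(\frac{C_q+q_0(\ve{x},y)}{2\,q_0(\ve{x},y)}\right)\mathbbm{1}_{q_0(\ve{x},y)>0},
\end{equation*}
so that $\lvert g_{\ve{c}}\rvert\le\log2+G^{+}$ uniformly in $\ve{c}$, and it remains to check $\Esp{G(\ve{X},Y)}<\infty$, i.e.\ $\int_{\{q_0>0\}}\log\!\tfrac{C_q+q_0}{2q_0}\,f_X\,q_0\,\D\mu<\infty$ (the integrand is $\ge-\log2$, so only its positive part is at issue). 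Splitting the domain at a level $\delta\in(0,C_q]$: on $\{q_0\ge\delta\}$ one has $-\log2\le G\le\log(C_q/\delta)$ (using $q_0\le C_q$), so this part is finite because $f_X\,q_0\,\D\mu=\D P_0$ is a probability measure; on $\{0<q_0<\delta\}$ one has $G\le\log C_q+\log(1/q_0)$, so this part is bounded by $\log C_q$ plus $\int q_0\log(1/q_0)\,f_X\,\D\mu$. Everything therefore reduces to showing $\int q_0(\ve{x},y)\log\!\bigl(1/q_0(\ve{x},y)\bigr)\,f_X(\ve{x})\,\D\ve{x}\,\D y<\infty$.

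I would establish this last bound by Fubini: it equals $\int_{\cd_{\ve{X}}}f_X(\ve{x})\bigl(\int_{\Rr}q_0(\ve{x},y)\log(1/q_0(\ve{x},y))\,\D y\bigr)\,\D\ve{x}$, and since $\int_{\cd_{\ve{X}}}f_X=1$ it suffices to bound the inner integral uniformly over $\ve{x}\in\cd_{\ve{X}}$. For fixed $\ve{x}$, $q_0(\ve{x},\cdot)$ is the PDF of a GLD with parameters $\ve{\lambda}^{\PC}(\ve{x};\ve{c}_0)$; as $\cd_{\ve{X}}$ is compact (assumption (ii)) and $\ve{\lambda}^{\PC}(\cdot;\ve{c}_0)$ is continuous, these parameters stay in a fixed compact set, on which $C_q$ is a uniform density bound. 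On the bounded part of the support the integrand $t\mapsto t\log(1/t)$ is bounded (by $e^{-1}$), and on any set where $q_0\ge\eta$ that set has Lebesgue length at most $1/\eta$, so the ``bulk'' contributes a finite amount bounded uniformly in the parameters; on an unbounded part of the support the GLD density decays at least as fast as a power law $\lvert y\rvert^{-a}$ with $a>1$ (see \Cref{sec:GLDprop}), against which $q_0\log(1/q_0)\sim a\lvert y\rvert^{-a}\log\lvert y\rvert$ is integrable, and the rate $a$ together with the multiplicative constant are controlled uniformly over the compact parameter set. Hence the inner integral is bounded uniformly in $\ve{x}$, $\Esp{G}<\infty$, and $\log2+G^{+}$ is an integrable envelope for $\cg$.

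With (a), (b) and (c) in hand the uniform strong law follows by the standard compactness argument. Given $\varepsilon>0$, for each $\ve{c}\in\cc$ choose $\rho(\ve{c})>0$ so small that the expectations of $\sup_{\lvert\ve{c}'-\ve{c}\rvert\le\rho(\ve{c})}g_{\ve{c}'}$ and $\inf_{\lvert\ve{c}'-\ve{c}\rvert\le\rho(\ve{c})}g_{\ve{c}'}$ are within $\varepsilon$ of $\Esp{g_{\ve{c}}}$; these suprema and infima are measurable because by the a.e. continuity in $\ve{c}$ they may be taken along a countable dense subset of the ball, they are dominated by the envelope, and the convergence as $\rho\downarrow0$ holds by dominated convergence. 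Extract a finite subcover of $\cc$, apply the ordinary strong law to the finitely many dominating and dominated functions attached to it, note that on each ball $\lvert\Esp{g_{\ve{c}'}}-\Esp{g_{\ve{c}}}\rvert\le\varepsilon$, and let $\varepsilon\downarrow0$ along a sequence to obtain $\limsup_n\sup_{\ve{c}\in\cc}\bigl\lvert\tfrac1n\sum_{i=1}^n g_{\ve{c}}(\ve{X}^{(i)},Y^{(i)})-\Esp{g_{\ve{c}}}\bigr\rvert=0$ almost surely. The only non-routine step is the envelope integrability near $\{q_0=0^{+}\}$ in the third paragraph — this is exactly where the parameter-dependent support of the GLD has to be absorbed, using \Cref{lem:BoundCon}(i) together with the power-law tail estimate and compactness of $\cd_{\ve{X}}$ and $\cc$; the rest is bookkeeping.
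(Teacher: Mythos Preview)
Your proposal is correct in its architecture and matches the paper's: establish (a) almost-everywhere continuity of $\ve{c}\mapsto g_{\ve{c}}$ via \Cref{lem:BoundCon}(ii), (b) compactness of $\cc$, and (c) an integrable envelope, then conclude the uniform strong law. The two proofs diverge only in how (c) is obtained and in the final packaging.

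For the envelope, both reduce to showing $\Esp{\lvert\log q_0(\ve{X},Y)\rvert}<\infty$, i.e.\ that $B(\ve{x})=\int_{\Rr}\lvert\log q_0(\ve{x},y)\rvert\,q_0(\ve{x},y)\,\D y$ is bounded uniformly over $\ve{x}\in\cd_{\ve{X}}$. You argue by a bulk/tail split and invoke the power-law tail decay of the GLD; the paper instead exploits the quantile parametrisation: since $q_0(\ve{x},\cdot)=1/Q'(\cdot)$, the substitution $y=Q(u;\ve{\lambda})$ collapses $B(\ve{x})$ to
\[
B(\ve{x})=\int_0^1\left\lvert\log\!\frac{\lambda_2}{u^{\lambda_3-1}+(1-u)^{\lambda_4-1}}\right\rvert\D u,
\]
an integral on $[0,1]$ that is bounded by an elementary estimate on $\int_0^{1/2}\lvert\log(u^k+(1-u)^k)\rvert\,\D u$, uniformly over the compact $\ve{\lambda}$-range. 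This buys a completely explicit, parameter-uniform bound with no asymptotics. Your route is sound in principle, but the phrase ``the rate $a$ together with the multiplicative constant are controlled uniformly over the compact parameter set'' hides real work: as $\lambda_3\uparrow 0$ the tail morphs from power-law to exponential and the onset $y_0$ of the asymptotic regime drifts, so a rigorous version would still need a uniform non-asymptotic tail inequality---which the quantile change of variables gives for free.

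For the conclusion, you run the classical Wald compactness argument directly, whereas the paper cites \cite[Lemma~3.10]{vandegeer2000} for the uniform weak law and then upgrades to almost-sure convergence via \cite[Theorem~22]{Talagrand1987}. Both are valid; yours is more self-contained, theirs is shorter.
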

\begin{proof}
	According to the continuity property in \Cref{lem:BoundCon}, it is obvious 
	that for all $\tilde{\ve{c}} \in \cc$, the map $\ve{c} \mapsto g_{\ve{c}}$ 
	is continuous at $\tilde{\ve{c}}$ for $\mu$-almost all $(\ve{x},y) \in 
	\cd\times\Rr$. By assumption, the probability measure $P_0$ is absolutely continuous with respect to $\mu$, and thus $g_{\ve{c}}$ is continuous for $P_0$-almost all $(\ve{x},y) \in \cd\times\Rr$. 
	\par
	Define $G$ as the envelope function of the class $\cg$, \ie $G(\ve{x},y) = 
	\sup_{\ve{c} \in \cc}\abs{g_{\ve{c}}(\ve{x},y)}$. Let us prove that $G \in 
	L_1(P_0)$, where $L_1(P_0)$ denotes the set of absolutely integrable 
	functions with respect to $P_0$.
	\par	
	Taking the boundedness property in \Cref{lem:BoundCon} into account, we obtain
	\begin{equation}
		g_{\ve{c}}(\ve{x},y) \leq\log\left(\frac{2C_q}{q_0(\ve{x},y)}\right) = 
		\log(2C_q) - \log(q_0(\ve{x},y)).
	\end{equation}
	Obviously, $g_{\ve{c}}(\ve{x},y)\geq -\log(2)$. Therefore,
	\begin{equation}
		\begin{split}
			\abs{g_{\ve{c}}(\ve{x},y)} &\leq \max\acc{\log(2),\abs{\log(2C_q)} + 
				\abs{\log(q_0(\ve{x},y))}} \\ 
			&\leq \log(2)+\abs{\log(C_q)} + \abs{\log(q_0(\ve{x},y))}.
		\end{split}
	\end{equation}
	Because the inequality is independent of $\ve{c}$, we have 
	\begin{equation}\label{eq:ineqgc}
		\begin{split}
			G(\ve{x},y)&\leq \log(2)+\abs{\log(C_q)} + \abs{\log(q_0(\ve{x},y))}, \\
			\Esp{G(\ve{X},Y)} & \leq \log(2)+\abs{\log(C_q)} + 
			\Esp{\abs{\log(q_0(\ve{X},Y)}}.
		\end{split}
	\end{equation}
	Now consider the last term in \cref{eq:ineqgc}:
	\begin{equation}\label{eq:envelop}
		\begin{split}
			\Esp{\abs{\log(q_0(\ve{X},Y)}}&=\int_{\cd_{\ve{x}}\times\Rr} 
			\abs{\log\left(q_0(\ve{x},y)\right)}p_0(\ve{x},y) \D\ve{x}\D y\\
			& = \int_{\cd_{\ve{x}}}\left(\int_{\Rr}\abs{\log\left(q_0(\ve{x},y)\right)}q_0(\ve{x},y)\D y\right)f_{\ve{X}}(\ve{x})\D\ve{x}. 
		\end{split}
	\end{equation}
	Through a change of variables, the integral within the parenthesis of \cref{eq:envelop} can be calculated as
	\begin{equation}
		B(\ve{x}) = \int_{\Rr}\abs{\log\left(q_0(\ve{x},y)\right)}q_0(\ve{x},y)\D y 
		= \int_{0}^{1} \abs{\log\left(\frac{\lambda_2}{u^{\lambda_3-1} + 
				(1-u)^{\lambda_4-1}}\right)}\D u,
	\end{equation}
	where $\ve{\lambda} = \ve{\lambda}^{\PC}(\ve{x};\ve{c}_0)$. 
	According to \cref{eq:lambound}, we have
	\begin{equation}
		\begin{split}
			B(\ve{x}) &\leq \int_{0}^{1} \abs{\log(\lambda_2)} + \abs{\log\left(u^{\lambda_3-1} + (1-u)^{\lambda_4-1}\right)} \D u \\
			&\leq k_2 + \int_{0}^{1}\max\acc{ \abs{\log\left(u^{\underline{k}} + 
					(1-u)^{\underline{k}}\right)}, \abs{\log\left(u^{\overline{k}} + 
					(1-u)^{\overline{k}}\right)}} \D u ,
		\end{split}
	\end{equation}
	where 
	\begin{equation*}	
		k_2 = \max\acc{\abs{\log\left(\overline{C}_2\right)} , 
			\abs{\log\left(\underline{C}_2\right)}}, \;
		\underline{k} = \min\acc{\underline{C}_3-1,\underline{C}_4-1},\;
		\overline{k} = \max\acc{\overline{C}_3-1,\overline{C}_4-1} .
	\end{equation*}
	Using the symmetry of the integrand, we get 
	\begin{equation}\label{eq:B(x)}
		\begin{split}
			B(\ve{x}) &\leq k_2 + 2\cdot\max\acc{\int_{0}^{\frac{1}{2}} \abs{\log\left(u^{\underline{k}} + (1-u)^{\underline{k}}\right)}\D u, \int_{0}^{\frac{1}{2}} \abs{\log\left(u^{\overline{k}} + (1-u)^{\overline{k}}\right)}\D u} \\
			&\leq k_2 + 2\cdot\left(\int_{0}^{\frac{1}{2}} \abs{\log\left(u^{\underline{k}} + (1-u)^{\underline{k}}\right)}\D u + \int_{0}^{\frac{1}{2}} \abs{\log\left(u^{\overline{k}} + (1-u)^{\overline{k}}\right)}\D u\right).
		\end{split}
	\end{equation}		
	Without loss of generality, we now study the property of the integral
	\begin{equation}\label{eq:abslh}
		\int_{0}^{\frac{1}{2}}\abs{\log\left(u^{k} + (1-u)^{k}\right)}\D u .
	\end{equation}
	For $k=0$, \cref{eq:abslh} is equal to $\frac{1}{2}\log(2)$. For $k>0$, we have $u^k\leq (1-u)^k$, and thus
	\begin{equation}
		\begin{split}
			\int_{0}^{\frac{1}{2}}\abs{\log\left(u^{k} + (1-u)^{k}\right)}\D u \leq \int_{0}^{\frac{1}{2}}\abs{\log\left(2 (1-u)^{k}\right)}\D u 
			& \leq \frac{1}{2}\log(2) - \int_{0}^{\frac{1}{2}} k\log(1-u) \D u\\
			& = \frac{1}{2}\log(2) + \frac{k}{2}\left( 1- \log(2)\right).
		\end{split}
	\end{equation}
	Through similar calculation, for $k<0$, we have
	\begin{equation}
		\begin{split}
			\int_{0}^{\frac{1}{2}}\abs{\log\left(u^{k} + (1-u)^{k}\right)}\D u \leq \int_{0}^{\frac{1}{2}}\abs{\log\left(2 u^{k}\right)}\D u 
			& \leq \frac{1}{2}\log(2) + \int_{0}^{\frac{1}{2}} k\log(u) \D u\\
			& = \frac{1}{2}\log(2) + \frac{-k}{2}\left( \log(2) + 1\right).
		\end{split}
	\end{equation}
	As a result, \cref{eq:abslh} is finite. More precisely, 	\begin{equation}\label{eq:ineqabslh}
		\int_{0}^{\frac{1}{2}}\abs{\log\left(u^{k} + (1-u)^{k}\right)}\D u \leq \frac{1}{2}\log(2) + \frac{\abs{k}}{2}\left( \log(2) + 1\right).
	\end{equation}	
	\Cref{eq:ineqabslh} implies
	\begin{equation}\label{eq:ineqBx}
		B(\ve{x})\leq k_2 + \log(2) + 
		\left(\abs{\underline{k}}+\abs{\overline{k}}\right)\left( \log(2) + 
		1\right) = C_B.
	\end{equation}
	By inserting \cref{eq:ineqBx} into \cref{eq:envelop}, we obtain 
	\begin{equation}
		\Esp{\abs{\log(q_0(\ve{X},Y)}} \leq C_B.
	\end{equation}
	Then, according to \cref{eq:ineqgc}, the envelope function $G$ fulfills
	\begin{equation}\label{eq:ineqgcpf}
		\begin{split}
			\Esp{G(\ve{X},Y)} &\leq \log(2)+\abs{\log(C_q)} + 
			\Esp{\abs{\log\left(q_0(\ve{X},Y)\right)}} \\
			& = \log(2) + \abs{\log(C_q)} + C_B < +\infty.
		\end{split}
	\end{equation}
	Since $G$ is always positive according to its definition, \cref{eq:ineqgcpf} means $G \in L_1(P_0)$. The continuity and the property of the envelope function $G$ shown above allow applying \cite[Lemma 3.10]{vandegeer2000}, which guarantees that $\cg$ satisfies the uniform weak law of large numbers:
	\begin{equation}
		\sup_{\ve{c} \in \cc} \left(\frac{1}{n}\sum_{i=1}^{n} g_{\ve{c}}\left(\ve{X}^{(i)},Y^{(i)}\right) - \Esp{g_{\ve{c}}\left(\ve{X},Y\right)}\right)\xrightarrow[n\rightarrow +\infty]{P} 0.
	\end{equation}
	Finally, \cite[Theorem 22]{Talagrand1987} extends the convergence to \emph{almost surely}, which is the uniform strong law of large numbers. 
\end{proof} 

Now, we have all the ingredients to prove \Cref{thm:consistMLE}.
\begin{proof}
	Following \cite[Lemma 4.1, 4.2]{vandegeer2000}, it can be easily shown that 
	\begin{equation}
		0 \leq  \int_{\cd_{\ve{x}}} h^2\left(q_{\hat{\ve{c}}},q_{0} \mid \ve{x}\right)f_{\ve{X}}(\ve{x})\D \ve{x} \leq 8\left(\sum_{i=1}^{N}g_{\hat{\ve{c}}}\left(\ve{X}^{(i)},Y^{(i)}\right) - \Esp{g_{\hat{\ve{c}}}\left(\ve{X},Y\right)}\right),\label{ineq:hellinger}
	\end{equation}
	where the Hellinger distance is given by
	\begin{equation*}
		h^2\left(q_{\hat{\ve{c}}},q_{0} \mid \ve{x}\right) = \frac{1}{2}\int_{\Rr}\left(\sqrt{q_{\hat{\ve{c}}}(\ve{x},y)} - \sqrt{q_{0}(\ve{x},y)}\right)^2\D y. \nonumber
	\end{equation*}
	According to \Cref{lem:ULLN}, \cref{ineq:hellinger} implies
	\begin{equation}\label{ineq:Hellingerpf}
		\int_{\cd_{\ve{x}}} h^2\left(q_{\hat{\ve{c}}},q_{0} \mid \ve{x}\right)f_{\ve{X}}(\ve{x})\D \ve{x} \xrightarrow{\text{a.s.}} 0,
	\end{equation}
	which is called the \emph{Hellinger consistency}.
	\par
	We define the function \begin{equation}
		R(\ve{c}) = \int_{\cd_{\ve{x}}} h^2\left(q_{\ve{c}},q_{0} \mid 
		\ve{x}\right)f_{\ve{X}}(\ve{x})\D \ve{x}.
	\end{equation}
	According to \Cref{lem:BoundCon}, $\forall \tilde{\ve{c}} \in \cc$, the map 
	$\ve{c} \mapsto \left(\sqrt{q_{\ve{c}}} - \sqrt{q_0}\right)^2$ is 
	continuous at $\tilde{\ve{c}}$ for all $\ve{x} \in \cd_x$ and almost 
	all $y \in \Rr$. Since $\left(\sqrt{q_{\ve{c}}} - \sqrt{q_0}\right)^2 \leq 
	q_{\ve{c}}+q_{0}$, and $\int_{\Rr}\left(q_{\ve{c}}+q_{0} \right)\D y = 2 < 
	+\infty$, the map $\ve{c} \mapsto h^2\left(q_{\ve{c}},q_{0} \mid 
	\ve{x}\right)$ is continuous for all $\ve{x} \in \cd_{\ve{x}}$, which is 
	guaranteed by the \emph{generalized Lebesgue dominated convergence 
		theorem}. Similarly, the map $\ve{c} \mapsto R(\ve{c})$ is also continuous. 
	\par
	Without going into lengthy discussions, it can be shown that the 
	GLD is \emph{not identifiable} only for 
	$\lambda_3 = \lambda_4 = 1$ and $\lambda_3 = \lambda_4 = 2$. In other 
	words, by excluding two points in the $\lambda_3-\lambda_4$ plane, different 
	values of $\ve{\lambda}$ lead to different distributions. Note that the two 
	exceptions are the only two cases where the corresponding distributions are 
	uniform distributions. As a result, the last condition in 
	\Cref{thm:consistMLE} excludes the nonidentifiable cases. Furthermore, 
	$\ve{\lambda}^{\PC}(\ve{x};\ve{c})$ are polynomials in $\ve{x}$ and linear 
	in $\ve{c}$. Therefore, for $\ve{c} \neq \tilde{\ve{c}}$, 
	$\ve{\lambda}^{\PC}(\ve{x};\ve{c})$ and $ 
	\ve{\lambda}^{\PC}\left(\ve{x};\tilde{\ve{c}}\right)$ are not identical for 
	$\mu$-almost all $\ve{x} \in \Rr^M$, and thus for $P_{\ve{X}}$-almost all 
	$\ve{x} \in \cd_{\ve{X}}$. Hence, there exists a set $\Omega_{\ve{x}}$ with 
	$P_{\ve{X}}(\Omega_{\ve{x}}) >0 $ such that as long as $\ve{c} \neq 
	\ve{c}_0$, $h\left(q_{\ve{c}},q_{0}\mid \ve{x}\right)>0$ $\forall \ve{x} 
	\in \Omega_{\ve{x}}$, which implies the uniqueness. Finally, combining 
	\cref{ineq:Hellingerpf} with the continuity and uniqueness of $R(\ve{c})$, 
	we have $\hat{\ve{c}}\xrightarrow{\text{a.s.}}\ve{c}_0$.
\end{proof} 


%
\end{document}